\documentclass[
reprint,
pre,
amsmath,amssymb,
aps,
showpacs,
showkeys
]{revtex4-2}

\usepackage{graphicx}
\usepackage{dcolumn}
\usepackage{amsthm}
\usepackage{bm}
\usepackage{array}
\usepackage{booktabs}
\usepackage{multirow}

\newtheorem{theorem}{Theorem}
\newtheorem{remark}{Remark}

\begin{document}
	
	\preprint{APS/123-QED}
	
	\title{Exact solution and pair correlation functions for a generalized three-chain Ising tube with multispin interactions}
	
	\author{P.V. Khrapov}
	\email{pvkhrapov@gmail.com}
	\altaffiliation{ORCID: 0000-0002-6269-0727}
	\author{N.S. Volkov}
	\email{nikita.volkov92595@gmail.com}
	\altaffiliation{ORCID: 0009-0008-7095-4092}
	\affiliation{Independent Researchers, Moscow, Russia}
	
	\date{\today}
	
	\begin{abstract}
		We obtain an exact solution for a generalized three-chain Ising tube (TCGIT) of length $L$ with toroidal boundary conditions and the most general $C_3$-invariant Hamiltonian on an elementary prism, containing 20 independent coupling constants, including an external magnetic field. Using an $8\times 8$ transfer matrix, we derive the exact partition function of the finite system and obtain the free energy, internal energy, specific heat, magnetization, magnetic susceptibility, and entropy in the thermodynamic limit $L\to\infty$. In the general case, $\lambda_{\max}$ is determined by a quartic equation, whereas in the principal special case with even-spin interactions (PSC) the spectrum simplifies substantially: the characteristic polynomial factorizes, and $\lambda_{\max}$ is given by the root of a quadratic equation. For mirror-symmetric subfamilies, we derive explicit formulas for the pair correlation functions and express the magnetization in terms of the components of the eigenvector associated with $\lambda_{\max}$; in the even-spin case with $h=0$, the magnetization vanishes. Important special cases include the width-three planar model with nearest-neighbor, next-nearest-neighbor, and plaquette interactions, including the entropy limit $S(T\to0^+)=(\ln 2)/3$ for $k\ge 0$ and $S(T\to0^+)=0$ for $k<0$, as well as the width-three planar triangular model with distinct nearest-neighbor couplings, three-spin interactions involving neighboring triangles, and an external field.
	\end{abstract}
	
	\keywords{Ising model; transfer matrix; partition function; free energy; specific heat; internal energy; magnetization; susceptibility; entropy; thermodynamic limit; pair correlations; correlation length; ground states; triangular model; gonihedric model}
	\maketitle
	
	\section{\label{sec:Intro}Introduction}
	
	The Ising model remains a cornerstone of statistical physics and continues to be studied in various lattice geometries and interaction schemes. Chain-like spin systems provide a useful bridge between one-dimensional models and their higher-dimensional counterparts. Notably, quasi-one-dimensional geometries can already exhibit signatures that are characteristic of higher-dimensional lattice models. In particular, the Ising tube geometry provides a valuable testbed for investigating frustration and multispin interactions in a confined geometry \cite{Yurishchev1, Yurishchev2, Yurishchev3, Cirillo}.
	
	Exact solutions and correlation functions for two- and three-chain Ising tubes with pairwise interactions were obtained in \cite{Kalok, Yokota}, while in \cite{Khrapov} such results were derived for a two-chain tube with all possible multispin interactions. These studies revealed the rich and diverse behavior of the models and demonstrated the usefulness of analytical methods beyond the strictly one-dimensional case. Interest in multispin interactions has also been reinforced by their connections to lattice gauge theories and string models: the gonihedric Ising model, which incorporates additional next-nearest-neighbor and four-spin plaquette interactions, emerged from the random surface approach to string theory \cite{Cirillo, Savvidy1, Savvidy2, Savvidy3, Bathas}. Finally, the generalized three-chain Ising tube can be interpreted as a prismatic spin nanotube \cite{Farchakh1, Farchakh2, Sarli, Deviren1, Deviren2, Kaneyoshi}, with three spin chains forming the tube’s circumference. In this context, each “layer” of three spins corresponds to a ring around the nanotube axis, suggesting that the model may be relevant to the magnetic or electronic properties of nanotubes.
	
	Here we study a generalized three-chain Ising tube with toroidal boundary conditions, including all possible multispin interactions within each elementary prism of the three-chain lattice that are invariant under rotations by $2\pi/3$. Using the transfer-matrix formalism \cite{Yokota, Montroll, Kramers1, Kramers2}, we obtain the partition function for a finite tube of length $L$, along with analytical expressions for physical quantities in the thermodynamic limit ($L \to \infty$). In particular, explicit formulas are derived for the free energy, internal energy, specific heat, magnetization, magnetic susceptibility, and entropy. To determine the eigenvalues of the $8 \times 8$ transfer matrix, we exploit the symmetries of the elementary Hamiltonian. The resulting permutation matrices, which commute with the transfer matrix, allow us to determine the full spectrum and the corresponding eigenvectors.
	
	Knowledge of the complete transfer-matrix spectrum and the structure of its eigenvectors enables the study of the model’s correlation properties. Explicit expressions for the pair correlation functions are derived, and a formula for the magnetization in terms of the components of the normalized eigenvector corresponding to the largest eigenvalue is obtained. Notably, for a broad class of tubes with interactions involving only an even number of spins (two-spin, four-spin, and six-spin interactions) in each interaction cluster, the analysis simplifies considerably. In this case, the characteristic polynomial factorizes into polynomials of lower degree, and consequently four of the eight transfer-matrix eigenvalues (including the largest eigenvalue) are roots of two quadratic equations, while the remaining four are roots of linear equations. As a result, the thermodynamic-limit free energy admits a particularly simple closed form in terms of the root of a quadratic polynomial, and the magnetization in zero external field vanishes. Owing to the symmetry of the normalized eigenvector, the magnetization formula yields zero, in agreement with previous results \cite{Fedro}. The ground-state structure is described for tubes with even-spin interactions \cite{Kashapov}. Using a Hamiltonian with two-, four-, and six-spin interactions as an example, we illustrate possible spin configurations corresponding to each ground state. Furthermore, an exact expression for the inverse correlation length is provided, and its behavior near and at the boundaries between different ground-state regions in the low-temperature regime is analyzed.
	
	The results obtained in this work not only provide a fairly general solution for the three-chain tube but also encompass several important special cases of independent interest. For tubes involving only pairwise (two-spin) interactions, the formulas reduce to the well-known results for three-chain Ising tubes with nearest- and next-nearest-neighbor interactions \cite{Yokota}. By appropriately restricting the interaction parameters, we also obtain exact solutions for a planar Ising tube with nearest-neighbor, next-nearest-neighbor, and plaquette interactions. This reduction yields exact thermodynamic quantities in the parameter regime corresponding to the planar gonihedric model \cite{Savvidy2, Savvidy3, Bathas, Cirillo}. Furthermore, as a particular case of the general three-chain solution, we consider a planar triangular Ising model with all interactions within each elementary triangle and various three-spin interactions between neighboring triangles.
	
	The remainder of the paper is organized as follows. Section~\ref{sec:ModelDescription} introduces the model, defines the Hamiltonian, and presents the main results for the general TCGIT case, both in the thermodynamic limit and for a finite system of length $L$. Section~\ref{sec3:MainSpecCase} discusses the principal special case with only even-spin interactions (two-spin, four-spin, and six-spin interactions, with four-spin interactions not necessarily of the plaquette type). Section~\ref{sec4} derives pair correlation functions and analyzes the ground-state structure and inverse correlation length for a symmetric subset of the model. Section~\ref{sec:PlanarModel} treats the width-three planar model with nearest-neighbor, next-nearest-neighbor, and plaquette interactions, including the planar gonihedric case. Section~\ref{sec6} considers the width-three planar triangular model with different nearest-neighbor interactions, triple interactions between neighboring triangles, and coupling to an external field. Section~\ref{sec7} contains the proofs of Theorems~\ref{theorem1}--\ref{theorem4}.
		
	\section{\label{sec:ModelDescription}Model description and main results}
	
	\begin{figure}[h]
		\centering
		\includegraphics[width=1\linewidth]{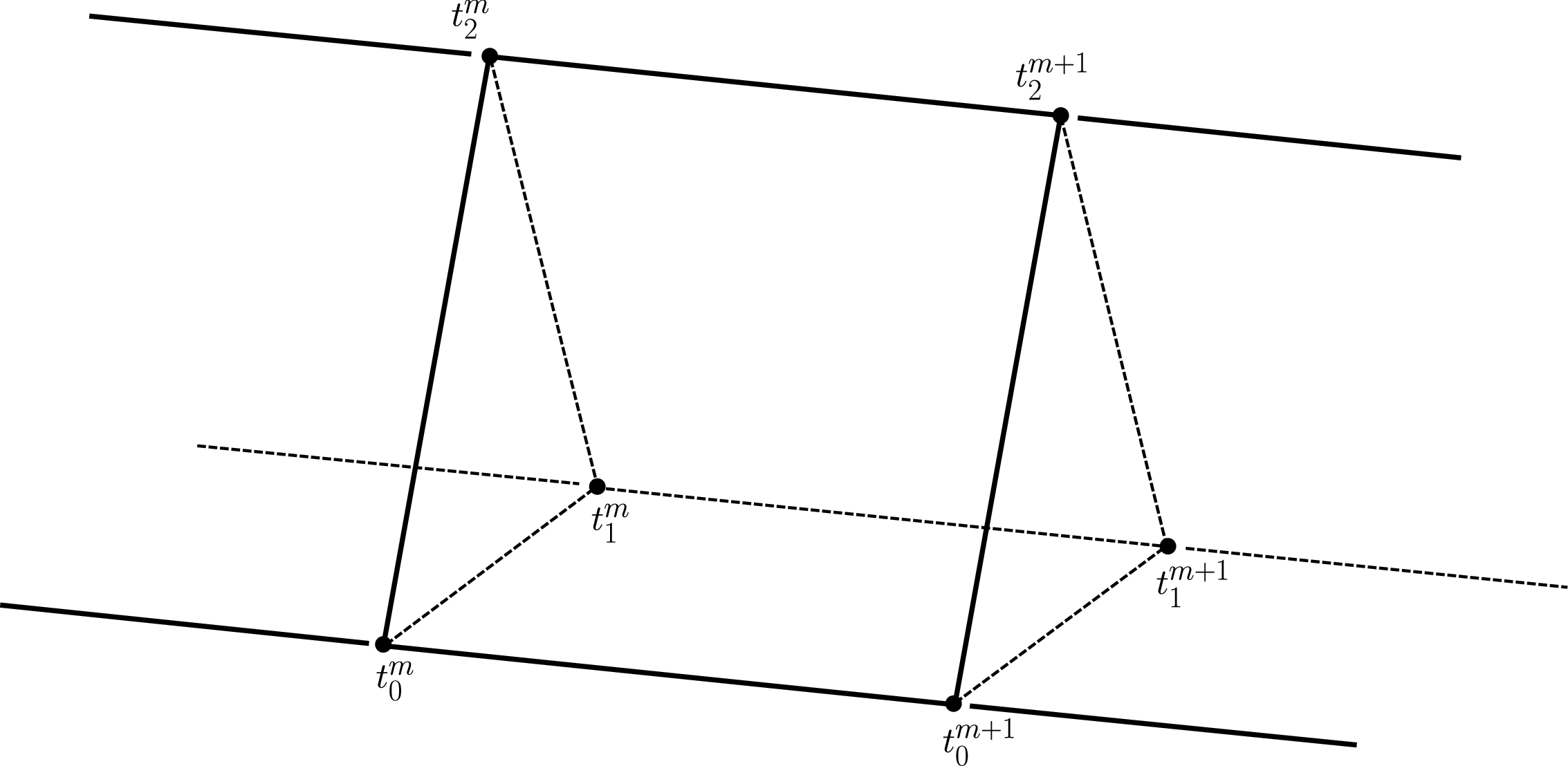}
		\caption{Three-chain lattice of the Ising tube.}
		\label{figure1} 
	\end{figure}
	
	We consider a quasi-one-dimensional three-chain Ising tube with $3 \times L$ sites (Fig.~\ref{figure1}), where $L$ denotes the number of layers in the model.
	\begin{eqnarray}
		\mathcal{L}_{L} = \bigl\{t_q^m, ~ m = 0, 1, \ldots, L-1, \nonumber \\
		t_q^L \equiv t_q^0,~ q = 0 , 1, 2\bigr\}.
		\label{1}
	\end{eqnarray}
	
	At each lattice site $t_q^m$ we place an Ising spin, with its state determined by the value of the spin $\sigma_q^m \equiv \sigma_{t_q^m}$, where $m = 0, 1, \ldots, L-1$ and $q = 0, 1, 2$.
	
	Let $\Omega^m_P = \{t_q^m, t_q^{m+1}, q = 0, 1, 2\}$ denote the elementary triangular prism (Fig.~\ref{figure1}).

	We introduce a generating Hamiltonian that includes one representative from each interaction cluster inequivalent under $C_3$ symmetry:
	\begin{equation}
		\hat{\mathcal{H}}^{(m)} = \hat{\mathcal{H}}^{(m)}_1 + \hat{\mathcal{H}}^{(m)}_2 + \hat{\mathcal{H}}^{(m)}_3 + \hat{\mathcal{H}}^{(m)}_4 + \hat{\mathcal{H}}^{(m)}_5 + \hat{\mathcal{H}}^{(m)}_6,
		\label{3}
	\end{equation}
	where
	\begin{equation}
		\label{H1}
		\hat{\mathcal{H}}^{(m)}_1 = -\frac{h}{2}\bigg(\sigma_0^m+\sigma_0^{m+1}\bigg)
	\end{equation}
	\begin{eqnarray}
		\label{H2}
		\hat{\mathcal{H}}^{(m)}_2 &=& -\bigg(\frac{1}{2} \bigg[J_{\{t_0^m, t_1^m\}}  \sigma_0^m \sigma_1^{m} \nonumber \\
		&+& J_{\{t_0^{m}, t_1^{m}\}} \sigma_0^{m+1} \sigma_1^{m+1}\bigg]
		 \nonumber \\
		&+&J_{\{t_0^m, t_0^{m+1}\}} \sigma_0^m \sigma_0^{m+1}  \nonumber \\
		&+&J_{\{t_0^m, t_1^{m+1}\}} \sigma_0^m \sigma_1^{m+1} \nonumber \\
		&+&J_{\{t_0^m, t_2^{m+1}\}} \sigma_0^m \sigma_2^{m+1}\bigg)
	\end{eqnarray}
	\begin{eqnarray}
		\label{H3}
		\hat{\mathcal{H}}^{(m)}_3 &=& \nonumber \\
		&-&\bigg(
		\frac{1}{6} \bigg[J_{\{t_0^m, t_1^m, t_2^m\}}  \sigma_0^m \sigma_1^{m} \sigma_2^{m} \nonumber \\
		&+&J_{\{t_0^m, t_1^m, t_2^m\}}  \sigma_0^{m+1} \sigma_1^{m+1} \sigma_2^{m+1}\bigg] \nonumber \\
		&+&J_{\{t_0^m, t_1^m, t_0^{m+1}\}}\sigma_0^m \sigma_1^m \sigma_0^{m+1}  \nonumber \\
		&+&J_{\{t_0^m, t_1^m, t_1^{m+1}\}}\sigma_0^m \sigma_1^m \sigma_1^{m+1}  \nonumber \\
		&+&J_{\{t_0^m, t_1^m, t_2^{m+1}\}}\sigma_0^m \sigma_1^m \sigma_2^{m+1}  \nonumber \\
		&+&J_{\{t_0^m, t_0^{m+1}, t_1^{m+1}\}}\sigma_0^m \sigma_0^{m+1} \sigma_1^{m+1}  \nonumber \\
		&+&J_{\{t_0^m, t_0^{m+1}, t_2^{m+1}\}}\sigma_0^m \sigma_0^{m+1} \sigma_2^{m+1}  \nonumber \\
		&+&J_{\{t_0^m, t_1^{m+1}, t_2^{m+1}\}}\sigma_0^m \sigma_1^{m+1} \sigma_2^{m+1}
		\bigg)
	\end{eqnarray}	
	\begin{eqnarray}
		\label{H4}
		\hat{\mathcal{H}}^{(m)}_4 &=& \nonumber \\
		 &-&\bigg( J_{\{t_0^m, t_1^m, t_2^m, t_0^{m+1}\}}  \sigma_0^m \sigma_1^{m} \sigma_2^{m}\sigma_0^{m+1} \nonumber \\
		 &+&J_{\{t_0^m, t_1^m, t_0^{m+1}, t_1^{m+1}\}}\sigma_0^m \sigma_1^m \sigma_0^{m+1} \sigma_1^{m+1} \nonumber \\
		 &+&J_{\{t_0^m, t_1^m, t_0^{m+1}, t_2^{m+1}\}}\sigma_0^m \sigma_1^m \sigma_0^{m+1} \sigma_2^{m+1} \nonumber \\
		 &+&J_{\{t_0^m, t_1^m, t_1^{m+1}, t_2^{m+1}\}}\sigma_0^m \sigma_1^m \sigma_1^{m+1} \sigma_2^{m+1} \nonumber \\
		 &+&J_{\{t_0^m, t_0^{m+1}, t_1^{m+1}, t_2^{m+1}\}}\sigma_0^m \sigma_0^{m+1} \sigma_1^{m+1} \sigma_2^{m+1}
		 \bigg)
	\end{eqnarray}	
	\begin{eqnarray}
		\label{H5}
		\hat{\mathcal{H}}^{(m)}_5 &=& \nonumber \\
		&-&\bigg( 
		J_{\{t_0^m, t_1^m, t_2^m, t_1^{m+1}, t_2^{m+1}\}}\sigma_0^m \sigma_1^m \sigma_2^m \sigma_1^{m+1} \sigma_2^{m+1} \nonumber \\
		&+&J_{\{t_1^m, t_2^m, t_0^{m+1}, t_1^{m+1}, t_2^{m+1}\}}\sigma_1^m \sigma_2^m \sigma_0^{m+1} \sigma_1^{m+1} \sigma_2^{m+1}
		\bigg) \nonumber \\
	\end{eqnarray}	
	\begin{eqnarray}
		\label{H6}
		\hat{\mathcal{H}}^{(m)}_6 &=& \nonumber \\
		&-&\frac{1}{3}J_{\{t_0^m, t_1^m, t_2^m, t_0^{m+1}, t_1^{m+1},t_2^{m+1}\}}  \sigma_0^m \sigma_1^{m} \sigma_2^{m}\sigma_0^{m+1} \sigma_1^{m+1} \sigma_2^{m+1} \nonumber \\
	\end{eqnarray}	
	
	The coefficient $J_{\{t_0^m\}} = J_{\{t_0^{m+1}\}}$ represents the coupling to an external magnetic field; hence, in what follows, we denote it by $h$.
	
	For a function $\varphi(\sigma_0^m, \sigma_1^m, \sigma_2^m, \sigma_0^{m+1}, \sigma_1^{m+1}, \sigma_2^{m+1})$ depending on $\sigma_t$ with $t \in \Omega^m_P$, we introduce the $2\pi/3$ rotation operator $R^m_{2\pi/3}$:
	\begin{eqnarray}
		&&R^m_{\frac{2\pi}{3}}\big(\varphi(\sigma_0^m, \sigma_1^m, \sigma_2^m, \sigma_0^{m+1}, \sigma_1^{m+1}, \sigma_2^{m+1})\big) \nonumber \\
		&&=\varphi(\sigma_1^m, \sigma_2^m, \sigma_0^m, \sigma_1^{m+1}, \sigma_2^{m+1}, \sigma_0^{m+1}), 
		\label{4}
	\end{eqnarray}
	then the Hamiltonian of the model has the form
	\begin{equation}
		\mathcal{H}_L(\sigma, h) = \sum_{m = 0}^{L-1}\mathcal{H}^{(m)},
		\label{5}
	\end{equation}
	where
	\begin{equation}
		\mathcal{H}^{(m)} = \hat{\mathcal{H}}^{(m)} + R^m_{\frac{2\pi}{3}}\hat{\mathcal{H}}^{(m)} + \big(R^m_{\frac{2\pi}{3}}\big)^2\hat{\mathcal{H}}^{(m)}.
		\label{6}
	\end{equation}
	\begin{remark}
		In this formulation (\ref{5}) of the Hamiltonian, the factor of $\frac{1}{3}$ must be included for the $J_{\{t_0^m,t_1^m,t_2^m\}}$ interaction and for $\hat{\mathcal{H}}^{(m)}_6$ to account for rotational symmetry.
		
		Certain interaction clusters appear twice in this representation: $\{t_0^m\}$ and $\{t_0^{m+1}\}$, $\{t_0^m, t_1^m\}$ and $\{t_0^{m+1}, t_1^{m+1}\}$, $\{t_0^m, t_1^m, t_2^m\}$ and $\{t_0^{m+1}, t_1^{m+1}, t_2^{m+1}\}$. The factor of $\frac{1}{2}$ is introduced before these interactions to avoid double counting contributions from adjacent layers.
		\label{remark0}
	\end{remark}
	\begin{remark}
		In this parametrization, there are 20 independent couplings (including the external field), after accounting for $C_3$ symmetry and the shared intra-layer terms.
		\label{param_count_remark}
	\end{remark}
		
	The partition function is given by
	\begin{equation}
		Z_{L} (h, T) = \sum_{\sigma} \exp\bigg\{-\frac{\mathcal{H}_L(\sigma, h)}{k_BT}\bigg\},
		\label{7}
	\end{equation}
	where $k_B$ denotes the Boltzmann constant (for simplicity, $k_B$ is set to 1), and the summation runs over all spin configurations.
	
	To compute the partition function, we introduce an $8 \times 8$ transfer matrix, with elements defined as follows:
	\begin{equation}
		\theta_{k,l} = \theta_{\big(\sigma_{0}^m, \sigma_{1}^m, \sigma_{2}^m), (\sigma_0^{m+1}, \sigma_1^{m+1}, \sigma_2^{m+1}\big)} = \exp \bigg\{- \frac{\mathcal{H}^{(m)}}{T}\bigg\},
		\label{8}
	\end{equation}
	\[
		k = 1 + \sum_{q=0}^{2}2^{q-1}\big(1-\sigma_q^m\big),
	\]
	\[
		l = 1 + \sum_{q=0}^{2}2^{q-1}\big(1-\sigma^{m+1}_q\big).
	\]
	The structure of the transfer matrix is summarized in Table~\ref{table:transfer-matrix}.
	
	\begin{table}[h]
		\centering
		\caption{Structure of the transfer matrix $\theta$.}
		\label{table:transfer-matrix}
		\begin{tabular}{|c|c|c|l|l|l|l|l|l|l|l|l|}
			\hline
			\multicolumn{3}{|l|}{\multirow{3}{*}{}}&\multicolumn{4}{c|}{$+$}&\multicolumn{4}{c|}{$-$}&\multicolumn{1}{c|}{$\sigma_{2}^{m+1}$}\\
			\cline{4-12}
			\multicolumn{3}{|l|}{}&\multicolumn{2}{c|}{$+$}&\multicolumn{2}{c|}{$-$}&\multicolumn{2}{c|}{$+$}&\multicolumn{2}{c|}{$-$}&\multicolumn{1}{c|}{$\sigma_{1}^{m+1}$}\\
			\cline{4-12}
			\multicolumn{3}{|l|}{}&\multicolumn{1}{c|}{$+$}&\multicolumn{1}{c|}{$-$}&\multicolumn{1}{c|}{$+$}&\multicolumn{1}{c|}{$-$}&\multicolumn{1}{c|}{$+$}&\multicolumn{1}{c|}{$-$}&\multicolumn{1}{c|}{$+$}&\multicolumn{1}{c|}{$-$}&\multicolumn{1}{c|}{$\sigma_{0}^{m+1}$}\\
			\hline
			\multirow{4}{*}{$+$}&\multirow{2}{*}{$+$}&$+$&&&&&&&&&\multirow{8}{*}{}\\
			\cline{3-11}
			&&$-$&&&&&&&&&\\
			\cline{2-11}
			&\multirow{2}{*}{$-$}&$+$&&&&&&&&&\\
			\cline{3-11}
			&&$-$&&&&&&&&&\\
			\cline{1-11}
			\multirow{4}{*}{$-$}&\multirow{2}{*}{$+$}&$+$&&&&&&&&&\\
			\cline{3-11}
			&&$-$&&&&&&&&&\\
			\cline{2-11}
			&\multirow{2}{*}{$-$}&$+$&&&&&&&&&\\
			\cline{3-11}
			&&$-$&&&&&&&&&\\
			\hline
			$\sigma_{2}^{m}$&$\sigma_{1}^{m}$&$\sigma_{0}^{m}$&\multicolumn{9}{l|}{}\\
			\hline
		\end{tabular}
	\end{table}
	
	The partition function can therefore be written as
	\begin{equation}
		Z_{L} (h, T) = \operatorname{Tr}\theta^L.
		\label{9}
	\end{equation}
	
	The free energy per spin is defined in the standard way \cite{Baxter}:
	\begin{equation}
		f(h, T) = - \frac{T}{N} \ln Z_{L}(h, T),
		\label{10}
	\end{equation}
	where $N = 3L$ denotes the total number of lattice sites in the system.
	
	The internal energy and specific heat per lattice site, as well as the magnetization, susceptibility, and entropy, are given by \cite{Baxter}, respectively, as follows:
	\begin{equation}
		u(h, T) = -T^2 \frac{\partial}{ \partial T}\biggl(\frac{f(h, T)}{T}\biggr),
		\label{11}
	\end{equation}
	\begin{eqnarray}
		&&c(h, T) = \frac{\partial}{\partial T}u(h,T) \nonumber \\
		&&= -2T\frac{\partial}{\partial T}\biggl(\frac{f(h, T)}{T}\biggr) 
		- T^2 \frac{\partial^2}{\partial T^2}\biggl(\frac{f(h, T)}{T}\biggr),
		\label{12}
	\end{eqnarray}
	\begin{equation}
		M(h,T)=-\frac{\partial}{\partial h} f(h,T),
		\label{13}
	\end{equation}
	\begin{equation}
		\chi(h,T) = \frac{\partial}{\partial h} M(h,T),
		\label{14}
	\end{equation}
	\begin{equation}
		S(h,T)=-\frac{\partial}{\partial T} f(h,T).
		\label{15}
	\end{equation}
	
	For later use, we introduce the reduced matrices:
	\begin{equation}
		\tau^1 = 
		\begin{pmatrix}
			\theta_{11}& \theta_{12} + \theta_{13} + \theta_{15}& \theta_{14} + \theta_{16} + \theta_{17}& \theta_{18}\\
			\theta_{21}& \theta_{22} + \theta_{23} + \theta_{25}& \theta_{24} + \theta_{26} + \theta_{27}& \theta_{28}\\
			\theta_{41}& \theta_{42} + \theta_{43} + \theta_{45}& \theta_{44} + \theta_{46} + \theta_{47}& \theta_{48}\\
			\theta_{81}& \theta_{82} + \theta_{83} + \theta_{85}& \theta_{84} + \theta_{86} + \theta_{87}& \theta_{88}
		\end{pmatrix},
		\label{16}
	\end{equation}
	\begin{eqnarray}
		&\tau^2& \nonumber \\
		&=& 
		\begin{pmatrix}
			e^{\frac{2 \pi}{3} i }  \theta_{52} + e^{-\frac{2 \pi}{3} i }  \theta_{53} + \theta_{55}& e^{-\frac{2 \pi}{3} i }  \theta_{54} + e^{\frac{2 \pi}{3} i }  \theta_{56} + \theta_{57} \\
			e^{\frac{2 \pi}{3} i }  \theta_{72} + e^{-\frac{2 \pi}{3} i }  \theta_{73} + \theta_{75}& e^{-\frac{2 \pi}{3} i }  \theta_{74} + e^{\frac{2 \pi}{3} i }  \theta_{76} + \theta_{77}
		\end{pmatrix}, \nonumber \\
		\label{17}
	\end{eqnarray}
	\begin{eqnarray}
		&\tau^3& \nonumber \\
		&=& 
		\begin{pmatrix}
			e^{-\frac{2 \pi}{3} i } \theta_{52} + e^{\frac{2 \pi}{3} i }  \theta_{53} + \theta_{55}& e^{\frac{2 \pi}{3} i }  \theta_{54} + e^{-\frac{2 \pi}{3} i }  \theta_{56} + \theta_{57} \\
			e^{-\frac{2 \pi}{3} i }  \theta_{72} + e^{\frac{2 \pi}{3} i } \theta_{73} + \theta_{75}& e^{\frac{2 \pi}{3} i }  \theta_{74} + e^{-\frac{2 \pi}{3} i }  \theta_{76} + \theta_{77}
		\end{pmatrix}. \nonumber \\
		\label{18}
	\end{eqnarray}
	
	We factorize the characteristic polynomial of the matrix $\tau^1$ in Eq.~(\ref{16}) according to Ferrari’s factorization of the quartic into two quadratic polynomials:
	\begin{equation}
		\lambda^4+a_3\lambda^3+a_2\lambda^2+a_1\lambda+a_0 =(\lambda^2+b_0\lambda+c_0)(\lambda^2+b_1\lambda+c_1).
		\label{20}
	\end{equation}
	
	Details of this factorization and the selection of the maximal root $\lambda_{\max}$ of Eq.~(\ref{20}) are given in Appendix A.
	
	\begin{theorem}
		\textbf{General case.} For the model with Hamiltonian (\ref{3})--(\ref{6}), the free energy, internal energy, specific heat, magnetization, magnetic susceptibility, and entropy per lattice site in the thermodynamic limit are given by
		\begin{equation}
			f(h, T) = -T\frac{\ln\lambda_{\max}}{3},
			\label{21}
		\end{equation}
		\begin{equation}
			u(h, T) =  T^2 \frac{\partial}{\partial T} \frac{\ln\lambda_{\max}}{3},
			\label{22}
		\end{equation}
		\begin{equation}
			c(h, T) = 2T \frac{\partial}{\partial T} \frac{\ln \lambda_{\max}}{3}
			+T^2 \frac{\partial^2}{\partial T^2}\frac{\ln\lambda_{\max}}{3},
			\label{23}
		\end{equation}
		\begin{equation}
			M(h,T) = \frac{T}{3} \frac{1}{\lambda_{\max}} \frac{\partial \lambda_{\max}}{\partial h},
			\label{24}
		\end{equation}
		\begin{equation}
			\chi(h,T) = \frac{T}{3}\biggl[\frac{1}{\lambda_{\max}}\frac{\partial^2 \lambda_{\max}}{\partial h^2}  -\biggl(\frac{1}{\lambda_{\max}} \frac{\partial \lambda_{\max}}{\partial h}\biggr)^2 \biggr],
			\label{25}
		\end{equation}
		\begin{equation} 
			S(h,T) = \frac{1}{3} \biggl(\ln\lambda_{\max} + \frac{T}{\lambda_{\max}} \frac{\partial \lambda_{\max}}{\partial T}\biggr),
			\label{26}
		\end{equation}
		where the largest eigenvalue $\lambda_{\max} = \lambda_{\max}(h, T)$ of the transfer matrix $\theta$ is the unique positive Perron root (spectral radius) of the matrix $\tau^1$. Depending on the parameter region, $\lambda_{\max}$ is determined by the following expression (selection rule for the Perron eigenvalue):
		\begin{equation}
			\lambda_{\max} = \max \biggl\{  \bigg|\frac{-b_0 + \sqrt{b_0^2 - 4c_0}}{2}\bigg|, ~ \bigg|\frac{-b_1 + \sqrt{b_1^2 - 4c_1}}{2}\bigg|  \biggr\},
			\label{27}
		\end{equation}
		where one of the expressions inside the absolute value may become complex.
		\label{theorem1}
	\end{theorem}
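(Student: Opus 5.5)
The argument has three parts: reduce $\theta$ using its $C_3$ symmetry, identify $\lambda_{\max}$ as the Perron root of $\tau^1$, and pass to the thermodynamic limit.

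\textbf{Symmetry reduction.} Let $P$ be the $8\times 8$ permutation matrix of the cyclic shift $(\sigma_0,\sigma_1,\sigma_2)\mapsto(\sigma_1,\sigma_2,\sigma_0)$ on a layer. By construction (\ref{6}), $\mathcal{H}^{(m)}$ is invariant under $R^m_{2\pi/3}$, which acts on both layers simultaneously. Hence $\theta_{Pk,Pl}=\theta_{k,l}$, i.e.\ $P\theta P^{-1}=\theta$, and $\theta$ commutes with $P$.

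The orbits of the shift are:
\begin{itemize}
\item the fixed states $1$ (all $+$) and $8$ (all $-$);
\item the one-minus orbit $\{2,3,5\}$;
\item the two-minus orbit $\{4,6,7\}$.
\end{itemize}
Diagonalizing $P$ splits $\mathbb{C}^8$ into three eigenspaces, with eigenvalues $1$, $\omega=e^{2\pi i/3}$ and $\bar\omega$, of dimensions $4$, $2$ and $2$. $\theta$ preserves each of them.

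In the invariant subspace, spanned by $e_1$, $e_8$, $e_2+e_3+e_5$ and $e_4+e_6+e_7$, the matrix of $\theta$ is exactly $\tau^1$ of (\ref{16}). Its entries are row sums over orbits taken with a representative row, which is legitimate by the invariance. In the $\omega$- and $\bar\omega$-subspaces the same computation gives $\tau^2$ and $\tau^3$. Therefore
\[
\det(\lambda-\theta)=\det(\lambda-\tau^1)\det(\lambda-\tau^2)\det(\lambda-\tau^3),
\qquad
Z_L=\operatorname{Tr}(\tau^1)^L+\operatorname{Tr}(\tau^2)^L+\operatorname{Tr}(\tau^3)^L .
\]

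\textbf{Perron root lies in $\tau^1$.} All entries of $\theta$ are strictly positive, so by Perron--Frobenius the spectral radius $\lambda_{\max}$ is a simple positive eigenvalue. Its eigenvector is strictly positive, and every other eigenvalue is strictly smaller in modulus.

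By simplicity and the commutation $P\theta=\theta P$, this eigenvector is also an eigenvector of $P$. Since $P$ is a permutation matrix and the vector is positive, the $P$-eigenvalue must be $1$. So the eigenvector lies in the invariant subspace, and $\lambda_{\max}$ is an eigenvalue of $\tau^1$. Because $\tau^1$ has positive entries, $\lambda_{\max}$ is its Perron root.

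The free energy follows directly:
\[
\frac{1}{L}\ln Z_L=\ln\lambda_{\max}+\frac{1}{L}\ln\Bigl(1+\sum_{j\ge2}(\lambda_j/\lambda_{\max})^L\Bigr)\longrightarrow\ln\lambda_{\max}.
\]
Dividing by $N=3L$ gives (\ref{21}).

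Simplicity of $\lambda_{\max}$ together with analyticity of $\theta$ in $(h,T)$ makes $\lambda_{\max}$ real-analytic, by the implicit function theorem applied to the characteristic polynomial. Then (\ref{22})--(\ref{26}) follow by differentiating (\ref{21}) according to (\ref{11})--(\ref{15}). These are routine chain-rule computations. For example, $-\partial_h f=\frac{T}{3}\lambda_{\max}^{-1}\partial_h\lambda_{\max}$, and $S=\frac13(\ln\lambda_{\max}+T\lambda_{\max}'/\lambda_{\max})$.

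\textbf{Selection rule and main obstacle.} The hard step is justifying (\ref{27}) from the Ferrari factorization (\ref{20}). The plan is as follows.
\begin{itemize}
\item Take the quadratic factors as chosen in Appendix A, with the resolvent root real so that $b_0,b_1,c_0,c_1$ are real.
\item The four roots of $\tau^1$ are then $\frac{-b_j\pm\sqrt{b_j^2-4c_j}}{2}$, $j=0,1$.
\item The Perron root is real and positive, and it strictly dominates all other roots in modulus.
\item Hence it equals the maximum modulus over the four roots.
\item It remains to show that the \texttt{$-$} branches never exceed the corresponding \texttt{$+$} branch in modulus.
\end{itemize}

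For the last point, consider the quadratic carrying the Perron root $\rho>0$. If its discriminant were negative, its roots would be complex conjugates of equal modulus, and a dominant simple real root could not be among them. So its roots are real, and the Perron root is the larger one. If $b_j<0$, the larger real root is the \texttt{$+$} branch.

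The case $b_j>0$, where the \texttt{$-$} branch has larger modulus but is negative, cannot produce the maximum. Indeed, a negative root of modulus at least $\rho$ contradicts Perron dominance. For the other quadratic, any root has modulus strictly below $\rho$. Its \texttt{$+$} branch's modulus is therefore also below $\rho$, possibly after taking $|\cdot|$ of a complex value, which is why the statement allows complex entries.

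Consequently the maximum in (\ref{27}) equals $\rho$. The delicate part is making sure the factorization branch in Appendix A keeps the coefficients real; I would check this via the sign of the resolvent cubic's discriminant.
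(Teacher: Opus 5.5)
Your proposal is correct and follows essentially the same route as the paper: the cyclic permutation commuting with $\theta$ splits the spectrum into those of $\tau^1,\tau^2,\tau^3$, Perron--Frobenius places the positive eigenvector in the $C_3$-invariant sector so that $\lambda_{\max}=\rho(\tau^1)$, and (\ref{21})--(\ref{26}) follow by differentiating $f=-\frac{T}{3}\ln\lambda_{\max}$. Your invariant-subspace and simplicity arguments are, if anything, tighter than the paper's ``eigenvectors fall into three classes'' phrasing, and your justification of (\ref{27}) works for any real Ferrari factorization; the case split on the sign of $b_j$ is unnecessary, since the $+$ branch is always the larger real root and $b_j<0$ automatically for the quadratic carrying $\rho$. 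The realness you flag also holds: the real Perron root forces the quartic to have at least two real roots, so the real resolvent root $y$ used in Appendix~A pairs the roots into conjugation-closed pairs, and the radicands $\frac{a_3^2}{4}-a_2+y=\frac{(b_0-b_1)^2}{4}$ and $\frac{y^2}{4}-a_0=\frac{(c_0-c_1)^2}{4}$ are therefore nonnegative.
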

	
	\begin{theorem}
		For a finite system of length $L$ with periodic boundary conditions along the tube axis, the partition function can be written as
		\begin{equation}
			Z_{L} = \sum_{k = 1}^{8}\lambda_k^L,
			\label{31}
		\end{equation}
		where $\lambda_k$,~$k=1,\ldots,8$ are the roots of the characteristic polynomial of the transfer matrix $\theta$. Specifically:
		\begin{equation}
			\lambda_{1,2} = \frac{-b_0 \pm \sqrt{b_0^2 - 4c_0}}{2},
			\label{32}
		\end{equation}
		\begin{equation}
			\lambda_{3,4} = \frac{-b_1 \pm \sqrt{b_1^2 - 4c_1}}{2}
			\label{33}
		\end{equation}
		are the roots of the quartic characteristic polynomial (\ref{20}) of the matrix in Eq.~(\ref{16}), obtained using Ferrari’s method (Appendix~\ref{appendixA}), and the eigenvalues:
		\begin{equation}
			\lambda_{5,6} = \frac{-b_2 \pm \sqrt{b_2^2 - 4c_2}}{2},
			\label{34}
		\end{equation}
		\begin{equation}
			\lambda_{7,8} = \frac{-b_3 \pm \sqrt{b_3^2 - 4c_3}}{2}
			\label{35}
		\end{equation}
		are the roots of the characteristic quadratics of the matrices $\tau^2$~(\ref{17}) and $\tau^3$~(\ref{18}), respectively, where
		\begin{equation}
			\begin{gathered}
				b_r = -(\tau_{11}^r + \tau_{22}^r),\\
				c_r = \tau_{11}^r \tau_{22}^r - \tau_{12}^r \tau_{21}^r, \\
				r = 2,~3.
			\end{gathered}
			\label{36}
		\end{equation}
		\label{theorem2}
	\end{theorem}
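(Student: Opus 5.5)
Since $Z_L=\operatorname{Tr}\theta^L$ by Eq.~(\ref{9}), and the trace of a power equals the sum of the $L$-th powers of the eigenvalues (with algebraic multiplicity, e.g.\ via the Jordan or Schur form), Eq.~(\ref{31}) holds with $\lambda_1,\dots,\lambda_8$ the eigenvalues of $\theta$. The remaining task is to show that the spectrum of $\theta$ is the union of the spectra of $\tau^1$, $\tau^2$ and $\tau^3$. Eqs.~(\ref{32})--(\ref{36}) then follow: the Ferrari factorization (\ref{20}) (Appendix~\ref{appendixA}) gives the four roots of $\det(\lambda-\tau^1)$, and for $r=2,3$ the roots of $\lambda^2-\operatorname{tr}\tau^r\,\lambda+\det\tau^r$ are given by the quadratic formula with $b_r,c_r$ as in (\ref{36}).

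\textbf{Step 1: the symmetry operator.} Let $P$ be the $8\times8$ permutation matrix of the cyclic shift $(\sigma_0,\sigma_1,\sigma_2)\mapsto(\sigma_1,\sigma_2,\sigma_0)$ on layer states. By construction (\ref{6}), $\mathcal H^{(m)}$ is invariant under $R^m_{2\pi/3}$, which rotates both layers simultaneously. Hence $\theta_{Pk,Pl}=\theta_{k,l}$, i.e.\ $P\theta P^{-1}=\theta$, and $\theta$ commutes with $P$. Because $P^3=I$, $\mathbb C^8$ splits into the eigenspaces of $P$ for the eigenvalues $1,\omega,\omega^2$, where $\omega=e^{2\pi i/3}$, and $\theta$ preserves each of them.

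\textbf{Step 2: the eigenspaces of $P$.} The orbits of the shift on the index set are $\{1\}$, $\{8\}$ (all spins up and all spins down), $\{2,3,5\}$ (one spin down) and $\{4,6,7\}$ (two spins down).
\begin{itemize}
\item The $1$-eigenspace is 4-dimensional, spanned by the orbit indicators $e_1$, $e_2+e_3+e_5$, $e_4+e_6+e_7$, $e_8$.
\item Each of the $\omega$- and $\omega^2$-eigenspaces is 2-dimensional, spanned by the Fourier combinations $e_a+\omega^{\pm1}e_b+\omega^{\pm2}e_c$ over the two 3-orbits.
\end{itemize}

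\textbf{Step 3: the restricted matrices.} Compute the matrix of $\theta$ restricted to each eigenspace in these bases. On the invariant sector, acting with $\theta$ on the orbit-sum vectors gives entries that are row sums over a column orbit, taken from one representative row per row orbit (rows $1,2,4,8$). This is exactly $\tau^1$ in (\ref{16}). Invariance guarantees that any representative gives the same sums, which is why rows $2$ and $4$ suffice.

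On the $\omega$-sectors the same computation with phases gives the $2\times2$ matrices $\tau^2$ and $\tau^3$ in (\ref{17})--(\ref{18}), using rows $5$ and $7$ as representatives. The main obstacle is bookkeeping: the ordering of $2,3,5$ and of $4,6,7$ along each orbit must match how the shift acts, so that the phases $\omega^{\pm1}$ land on the correct entries. I would check this directly against the orbit structure induced by the index map $k=1+\sum_q 2^{q-1}(1-\sigma_q)$. In any case, the two $\omega$-sectors are complex conjugates of each other, which fixes $\tau^3=\overline{\tau^2}$ entrywise.

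\textbf{Step 4: conclusion.} Since $\mathbb C^8=V_1\oplus V_\omega\oplus V_{\omega^2}$ is a direct sum of $\theta$-invariant subspaces, $\theta$ is similar to $\operatorname{diag}(\tau^1,\tau^2,\tau^3)$. Therefore $\det(\lambda-\theta)=\det(\lambda-\tau^1)\det(\lambda-\tau^2)\det(\lambda-\tau^3)$, which completes the proof.
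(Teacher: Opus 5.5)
Your proposal is correct and takes the same route as the paper: the $C_3$ permutation matrix (the paper's $D_1$) commutes with $\theta$, its eigenspaces are $\theta$-invariant, and the restrictions of $\theta$ to them are $\tau^1$, $\tau^2$, $\tau^3$. Your invariant-subspace and block-diagonal formulation is slightly cleaner than the paper's ``eigenvectors fall into three classes'' wording, since it does not assume $\theta$ is diagonalizable. The orbit bookkeeping you flag also checks out: in the basis $\omega e_2+\overline{\omega}e_3+e_5$, $\overline{\omega}e_4+\omega e_6+e_7$ (with $\omega=e^{2\pi i/3}$), reading off components $5$ and $7$ gives exactly $\tau^2$.
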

	
	\begin{remark}
		If, in addition, the model satisfies the mirror-symmetry condition
		\begin{eqnarray}
			&&\mathcal{H}^{(m)} (\sigma_0^m, \sigma_1^m, \sigma_2^m, \sigma_0^{m+1}, \sigma_1^{m+1}, \sigma_2^{m+1}) \nonumber \\
			&&= \mathcal{H}^{(m)} (\sigma_0^{m+1}, \sigma_1^{m+1}, \sigma_2^{m+1}, \sigma_0^m, \sigma_1^m, \sigma_2^m),
			\label{mirror_hamiltonian}
		\end{eqnarray}
		then the eigenvalues in Eqs.~(\ref{34}) and~(\ref{35}) each have algebraic multiplicity two, because the characteristic polynomials of the matrices~(\ref{17}) and~(\ref{18}) coincide.
		\label{remark2}
	\end{remark}
	
	Eq.~(\ref{mirror_hamiltonian}) implies that the Hamiltonian is symmetric with respect to a plane intersecting the torus at the midpoints of the edges perpendicular to its plane.

	\section{\label{sec3:MainSpecCase}The principal special case (PSC): a tube with multispin interactions involving an even number of spins}
	
	\subsection{Model description and results}
	
	In this case, the components $\hat{\mathcal{H}}^{(m)}_1, \hat{\mathcal{H}}^{(m)}_3, \hat{\mathcal{H}}^{(m)}_5$ of the elementary generating Hamiltonian (\ref{3}) are zero, and the transfer matrix $\theta$ is centrosymmetric.
	
	For Theorems \ref{theorem3} and \ref{theorem4}, we introduce the matrices
	\begin{equation}
		\tau^4 = 
		\begin{pmatrix}
			\theta_{11} + \theta_{18}& \theta_{12} + \theta_{13}+\theta_{14} + \theta_{15}+\theta_{16} + \theta_{17}\\
			\theta_{21} + \theta_{28}& \theta_{22} + \theta_{23}+\theta_{24} + \theta_{25}+\theta_{26} + \theta_{27}
		\end{pmatrix},
		\label{37}
	\end{equation}
	\begin{equation}
		\tau^5 = 
		\begin{pmatrix}
			\theta_{11} - \theta_{18}& \theta_{12} + \theta_{13} - \theta_{14} + \theta_{15} - \theta_{16} - \theta_{17}\\
			\theta_{21} - \theta_{28}& \theta_{22} + \theta_{23} - \theta_{24} + \theta_{25} - \theta_{26} - \theta_{27}
		\end{pmatrix}.
		\label{38}
	\end{equation}
	We also define
	\[
		\begin{gathered}
			b_r = -(\tau_{11}^r + \tau_{22}^r),\\
			c_r = \tau_{11}^r\tau_{22}^r - \tau_{12}^r \tau_{21}^r, \\
			r = 4,~5.
		\end{gathered}
	\]
	
	\begin{theorem}
		In the thermodynamic limit, the free energy, internal energy, specific heat, and entropy per lattice site are given by Eqs.~(\ref{21})--(\ref{26}), where $\lambda_{\max}$ denotes the largest eigenvalue of the transfer matrix $\theta$, which has the form
		\begin{eqnarray}
			\lambda_{\max} = \frac{-b_4 + \sqrt{b_4^2-4c_4}}{2}.
			\label{39}
		\end{eqnarray}
		At zero external field, the magnetization vanishes by symmetry; the magnetic susceptibility is then obtained in the standard way by introducing a field term into the Hamiltonian.
		\label{theorem3}
	\end{theorem}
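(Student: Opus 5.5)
We reduce $\theta$ by its symmetries and then apply Perron--Frobenius to the resulting $2\times 2$ block. Because $\hat{\mathcal{H}}^{(m)}_1,\hat{\mathcal{H}}^{(m)}_3,\hat{\mathcal{H}}^{(m)}_5$ vanish, every remaining term of $\mathcal{H}^{(m)}$ contains an even number of spins. Hence $\mathcal{H}^{(m)}$ is invariant under the global flip $\sigma\to-\sigma$ applied simultaneously in layers $m$ and $m+1$. With the indexing $k=1+\sum 2^{q-1}(1-\sigma_q^m)$, the flip sends $k\mapsto 9-k$. Therefore $\theta_{9-k,9-l}=\theta_{k,l}$, i.e. $J\theta J=\theta$ with $J$ the exchange matrix, so $\theta$ is centrosymmetric. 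In addition, $\theta$ commutes with the cyclic permutation $P$ induced by $R_{2\pi/3}$, which is already used in Theorem~\ref{theorem1}, and $P$ commutes with $J$.

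\textbf{Step 1 (invariant subspace).} Consider the subspace $V_+$ of vectors invariant under both $P$ and $J$. The $P$-orbits of configurations are $\{1\}$, $\{2,3,5\}$, $\{4,6,7\}$ and $\{8\}$, and $J$ exchanges $\{1\}\leftrightarrow\{8\}$ and $\{2,3,5\}\leftrightarrow\{4,6,7\}$. So $V_+$ is two-dimensional, spanned by $e_1+e_8$ and $e_2+e_3+e_5+e_4+e_6+e_7$. Computing the action of $\theta$ on these basis vectors, using $P$-invariance to reduce the orbit sums to representative rows $1$ and $2$, gives exactly the matrix $\tau^4$ of (\ref{37}). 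The $J$-odd, $P$-invariant subspace, spanned by $e_1-e_8$ and $e_2+e_3+e_5-e_4-e_6-e_7$, gives $\tau^5$ in the same way. Thus $\tau^1$ from Theorem~\ref{theorem1} is similar to $\tau^4\oplus\tau^5$, and its quartic characteristic polynomial factors as $(\lambda^2+b_4\lambda+c_4)(\lambda^2+b_5\lambda+c_5)$. The remaining four eigenvalues, coming from $\tau^2$ and $\tau^3$, are also treated by the $J$-splitting, but they are not needed here.

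\textbf{Step 2 (Perron root).} By Theorem~\ref{theorem1}, $\lambda_{\max}$ is the Perron root of the positive matrix $\theta$, with a strictly positive eigenvector $v$. The eigenspace of $\lambda_{\max}$ is one-dimensional and invariant under $P$ and $J$, since both commute with $\theta$. Their eigenvalues on it are roots of unity, and they preserve positivity of $v$, so those eigenvalues must equal $1$. Hence $v\in V_+$, and $\lambda_{\max}$ is an eigenvalue of $\tau^4$. The matrix $\tau^4$ has strictly positive entries, so its eigenvalues are real and distinct, and its Perron root is the larger one, $(-b_4+\sqrt{b_4^2-4c_4})/2$. This holds because $b_4^2-4c_4=(\tau^4_{11}-\tau^4_{22})^2+4\tau^4_{12}\tau^4_{21}>0$. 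This establishes (\ref{39}). The thermodynamic formulas (\ref{21})--(\ref{26}) then follow verbatim from Theorem~\ref{theorem1}, since $Z_L=\operatorname{Tr}\theta^L$ is dominated by $\lambda_{\max}^L$ and the Perron root is simple.

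\textbf{Step 3 (magnetization).} At $h=0$ the flip symmetry gives $Z_L(\sigma)$ invariant under $\sigma\to-\sigma$, so $\langle\sigma_q^m\rangle=0$ for every finite $L$. Equivalently, $f(h,T)$ is even in $h$ once a field term is added, so $\partial_h f|_{h=0}=0$. The susceptibility is obtained from (\ref{25}) after adding $\hat{\mathcal{H}}_1$. The main obstacle is the identification in Step 1: one must verify that the entries produced by the orbit sums match (\ref{37})--(\ref{38}) exactly. That verification is a bookkeeping check of index pairs $\theta_{1l}$ and $\theta_{2l}$ using $\theta_{k,l}=\theta_{9-k,9-l}$ and $P$-invariance.
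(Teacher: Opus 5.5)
Your proposal is correct and follows essentially the paper's route: the paper also uses the commuting rotation and central-symmetry permutations $D_1$, $D_2$ to split the spectrum into the $\tau^4$ and $\tau^5$ sectors plus four complex-sector eigenvalues, and it places the Perron vector in the fully symmetric sector $\vec{y}_4$ via the Perron--Frobenius positivity argument from the proof of Theorem~\ref{theorem1}. Your explicit checks that $P$ and $J$ must act by $+1$ on the positive Perron vector, that $b_4^2-4c_4>0$, and that the flip symmetry kills $\langle\sigma\rangle$ at $h=0$ simply spell out what the paper compresses into ``similar to the previous proof'' and ``by symmetry.''
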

	
	\begin{theorem}
			The partition function for a finite tube of length $L$ can be written as in Eq.~(\ref{31}), where $\lambda_k,~k = 1,\ldots,4$ are the roots of the characteristic polynomials of the matrices $\tau^4$~(\ref{37}) and $\tau^5$~(\ref{38}):
		\begin{eqnarray} 
			\lambda_{1,2} = \frac{-b_4 \pm \sqrt{b_4^2 - 4c_4}}{2},
			\label{40}
		\end{eqnarray}
		\begin{eqnarray}
			\lambda_{3,4} = \frac{-b_5 \pm \sqrt{b_5^2 - 4c_5}}{2},
			\label{41}
		\end{eqnarray}
		and the other eigenvalues have the form:
		\begin{equation}
			\lambda_5 = \big(\theta_{22} + \theta_{27}\big) + e^{\frac{2\pi}{3}i}\big(\theta_{23} + \theta_{26}\big) + e^{-\frac{2\pi}{3}i}\big(\theta_{24} + \theta_{25}\big),
			\label{42}
		\end{equation}
		\begin{equation}
			\lambda_6 = \big(\theta_{22} - \theta_{27}\big) + e^{\frac{2\pi}{3}i}\big(\theta_{23} - \theta_{26}\big) - e^{-\frac{2\pi}{3}i}\big(\theta_{24} - \theta_{25}\big),
			\label{43}
		\end{equation}
		\begin{equation}
			\lambda_7 = \big(\theta_{22} + \theta_{27}\big) + e^{-\frac{2\pi}{3}i}\big(\theta_{23} + \theta_{26}\big) + e^{\frac{2\pi}{3}i}\big(\theta_{24} + \theta_{25}\big),
			\label{44}
		\end{equation}
		\begin{equation}
			\lambda_8 = \big(\theta_{22} - \theta_{27}\big) + e^{-\frac{2\pi}{3}i}\big(\theta_{23} - \theta_{26}\big) - 	e^{\frac{2\pi}{3}i}\big(\theta_{24} - \theta_{25}\big),
			\label{45}
		\end{equation}
		where $\theta_{2l}$ are the elements of the transfer matrix $\theta$~(\ref{8}), $l = 2,\ldots,7$.
		\label{theorem4}
	\end{theorem}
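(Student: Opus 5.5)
Since $Z_L=\operatorname{Tr}\theta^L$ by Eq.~(\ref{9}), it suffices to compute the full spectrum of the $8\times 8$ matrix $\theta$ in the PSC. We will block-diagonalize $\theta$ using two commuting symmetries. The first is the cyclic permutation $P$ of the basis induced by $R_{2\pi/3}$, acting simultaneously on both layers. The second is the global spin flip $F$, which sends $k\mapsto 9-k$. By construction of $\mathcal{H}^{(m)}$ in Eq.~(\ref{6}), $P\theta P^{-1}=\theta$. In the PSC every term contains an even number of spins, so $F\theta F=\theta$; this is the centrosymmetry $\theta_{9-k,9-l}=\theta_{kl}$.

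The group generated by $P$ and $F$ is $C_3\times\mathbb{Z}_2\cong C_6$, and $\theta$ preserves each of its isotypic subspaces. We write down the orbits of $P$ on the states: $\{1\}$, $\{8\}$, $\{2,3,5\}$ (one down spin), and $\{4,6,7\}$ (two down spins). The flip $F$ exchanges $\{1\}\leftrightarrow\{8\}$ and $\{2,3,5\}\leftrightarrow\{7,6,4\}$.

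\textbf{Trivial $C_3$ character.} The $P$-invariant subspace is spanned by $e_1$, $e_8$, $u=e_2+e_3+e_5$ and $w=e_4+e_6+e_7$; this is the space underlying $\tau^1$ in Eq.~(\ref{16}). Splitting it further by $F=\pm1$ gives the bases $\{e_1\pm e_8,\ u\pm w\}$. We compute the action of $\theta$ on these vectors. Rotation invariance makes the row sums over the orbit $\{2,3,5\}$ independent of the representative, so we may use row $2$; centrosymmetry lets us rewrite entries in rows $7,8$ as entries in rows $1,2$. We expect the matrix of $\theta$ on the $F=+1$ block to be $\tau^4$ and on the $F=-1$ block to be $\tau^5$, possibly up to transposition or similarity, which do not affect the eigenvalues. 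This yields $\lambda_{1,2}$ and $\lambda_{3,4}$ in Eqs.~(\ref{40})--(\ref{41}).

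\textbf{Nontrivial $C_3$ characters.} The remaining four eigenvalues come from the characters $\omega^{\pm1}$, where $\omega=e^{2\pi i/3}$. Each such isotypic subspace is spanned by one Fourier vector on $\{2,3,5\}$ and one on $\{4,6,7\}$, so it is two-dimensional; these are the spaces underlying $\tau^2$ and $\tau^3$. Combining the two Fourier vectors with $F=\pm1$ splits each two-dimensional space into two one-dimensional eigenspaces. The eigenvalue on each line is the corresponding character sum over row $2$. Centrosymmetry converts the entries $\theta_{2,9-l}$ into row-$2$ entries of the type $\theta_{27},\theta_{26},\theta_{25},\theta_{24}$. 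This produces the combinations $\theta_{22}\pm\theta_{27}$, $\theta_{23}\pm\theta_{26}$ and $\theta_{24}\pm\theta_{25}$ weighted by powers of $\omega$, which are Eqs.~(\ref{42})--(\ref{45}).

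The main obstacle is bookkeeping rather than concept. The first difficulty is matching the phases correctly: the action of $P$ on the $\{4,6,7\}$ orbit is not given by the naive index order, which is why $\theta_{24}$ and $\theta_{25}$ are paired with the opposite phase from $\theta_{23}$ and $\theta_{26}$, and why a minus sign appears in $\lambda_{6}$ and $\lambda_{8}$. The second difficulty is fixing the relative phase between the Fourier vectors on the two orbits so that $F$ maps one to the other. To handle both, I would first tabulate the images of states $2,\dots,7$ under $P$ and under $F$ explicitly, then verify each eigenvector relation directly. Finally, since the eight eigenvectors are linearly independent, being the isotypic pieces of a $C_6$ decomposition, $\operatorname{Tr}\theta^L=\sum_{k=1}^8\lambda_k^L$, which is Eq.~(\ref{31}).
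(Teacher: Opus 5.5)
Your proposal is correct and takes essentially the same route as the paper. The paper also commutes $\theta$ with the rotation permutation $D_1$ and the central-symmetry (spin-flip) permutation $D_2$. The trivial-character blocks in the bases $\{e_1\pm e_8,\,u\pm w\}$ are exactly $\tau^4$ and $\tau^5$, and $\lambda_5,\dots,\lambda_8$ come from the one-dimensional joint eigenspaces spanned by $\vec y_6,\dots,\vec y_9$ of Eq.~(\ref{84}). Your phase bookkeeping, done explicitly, reproduces Eqs.~(\ref{42})--(\ref{45}).

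One small refinement: $Z_L=\operatorname{Tr}\theta^L=\sum_k\lambda_k^L$ needs only the block-diagonal form of $\theta$ in the symmetry-adapted basis, not eight linearly independent eigenvectors. This matters because $\tau^5$ can have nonpositive entries and may be a defective $2\times 2$ block.
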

	
	\begin{remark}
		Under mirror-symmetry~(\ref{mirror_hamiltonian}), the transfer matrix $\theta$ is symmetric, and its spectrum~(\ref{40})--(\ref{45}) is real. In this case, the equalities $\lambda_5 = \lambda_7$ and $\lambda_6 = \lambda_8$ hold.
		\label{remark3}
	\end{remark}
	
	\section{\label{sec4} Pair correlations}
	In this section, we consider subclasses of the TCGIT and PSC models whose Hamiltonians satisfy the symmetry condition (\ref{mirror_hamiltonian}), which ensures that the transfer matrix is symmetric. We denote these models by TCGITSH and PSCSH, respectively. We focus on these symmetric subfamilies, where the transfer matrix is real symmetric and hence admits an orthonormal eigenbasis.
	
	\subsection{Pair correlations for the TCGITSH model}
	
	The transfer matrix in this case has the following form:
	\begin{equation} \label{cor.1}
		\theta = 
		\begin{pmatrix}
			a&b&b&c&b&c&c&d\\
			b&e&f&g&f&w&o&j\\
			b&f&e&w&f&o&g&j\\
			c&g&w&k&o&l&l&m\\
			b&f&f&o&e&g&w&j\\
			c&w&o&l&g&k&l&m\\
			c&o&g&l&w&l&k&m\\
			d&j&j&m&j&m&m&n
		\end{pmatrix},
	\end{equation}
	the unitary matrix $P$ that diagonalizes the transfer matrix is given by:
	\begin{equation} \label{cor.2}
		P = 
		\begin{pmatrix}
			\vec{v}_1^1&\vec{v}_1^2&\vec{v}_1^3&\vec{v}_1^4&\vec{v}_2^1&\vec{v}_2^2&\vec{v}_3^1&\vec{v}_3^2
		\end{pmatrix},
	\end{equation}
	where $\vec{v}_1^s$, $\vec{v}_2^t$, and $\vec{v}_3^u$ denote the orthonormal eigenvectors corresponding to the eigenvalues $\lambda_1, \ldots, \lambda_8$ of the transfer matrix (\ref{cor.1}), which are obtained from the characteristic equations of matrices (\ref{16}), (\ref{17}), and (\ref{18}), respectively, and can be expressed as follows:
	\begin{equation} \label{cor.3}
		\vec{v}_1^s = 
		\begin{pmatrix}
			a_1^s & a_2^s & a_2^s & a_3^s & a_2^s & a_3^s & a_3^s & a_4^s
		\end{pmatrix}^{\mathrm{T}},
	\end{equation}
	\begin{equation} \label{cor.4}
		\vec{v}_2^t = n_2^t
		\begin{pmatrix}
			0 & \omega y_2^t & \overline{\omega} y_2^t & \overline{\omega} & y_2^t & \omega & 1 & 0
		\end{pmatrix}^{\mathrm{T}},
	\end{equation}
	\begin{equation} \label{cor.5}
		\vec{v}_3^u = n_3^u
		\begin{pmatrix}
			0 & \overline{\omega} y_3^u & \omega y_3^u & \omega & y_3^u & \overline{\omega} & 1 & 0
		\end{pmatrix}^{\mathrm{T}},
	\end{equation}
	where
	$s = 1, \ldots, 4,$ \\
	$t, u= 1,~2$
	\[
		y_2^t = \frac{w + g\omega + o\overline{\omega}}{f - e + \lambda_{t + 4}},
	\] 
	\[
		y_3^u = \frac{w + g\overline{\omega} + o\omega}{f - e + \lambda_{u + 6}},
	\]
	\[
		\omega = e^{\frac{2\pi}{3}i},
	\]
	and $n_2^t, ~n_3^u$ are the normalizing coefficients, respectively.
	
	The eigenvalues $\lambda_1,\ldots,\lambda_4$ of the transfer matrix~(\ref{cor.1}) coincide with the eigenvalues of the matrix $\tau^1$ owing to the symmetry of $\theta$ and the structure of its eigenvectors corresponding to $\lambda_1,\ldots,\lambda_4$ (see the proof of Theorem~\ref{theorem1}):
	\[
		\begin{pmatrix}
			x_1 & x_2 & x_2 & x_3 & x_2 & x_3 & x_3 & x_4
		\end{pmatrix}^{\mathrm{T}}.
	\]
	
	In this case, the matrix $\tau^1$ in Eq.~(\ref{16}) takes the form:
	\[
		\tau^1 = 
		\begin{pmatrix}
			a&3b&3c&d\\
			b&e+2f&g+w+o&j\\
			c&g+w+o&k+2l&m\\
			d&3j&3m&n
		\end{pmatrix}.
	\]
	
	Let $\lambda$ denote an eigenvalue of the matrix. To determine the corresponding eigenvector, we choose one component as a reference value
	\[
		\vec{x} = 
		\begin{pmatrix}
			x_1 & x_2 & x_3 & x_4
		\end{pmatrix}^{\mathrm{T}},
	\]
	for example $x_4 = 1$ (or choose another component if $x_4 = 0$). This reduces the problem to solving an inhomogeneous system of linear equations with three unknowns, which is solved using Cramer’s rule \cite{Howard}. The resulting solution can then be expressed as:
	\[
		x_s = \frac{\Delta_s}{\Delta}, \quad s = 1,~2,~3,
	\]
	and then the normalized eigenvector of the transfer matrix:
	\[
		\vec{v}_1 = n_1
		\begin{pmatrix}
			\frac{\Delta_1}{\Delta}&  \frac{\Delta_2}{\Delta}& \frac{\Delta_2}{\Delta}&  \frac{\Delta_3}{\Delta}&  \frac{\Delta_2}{\Delta}&  \frac{\Delta_3}{\Delta}&  \frac{\Delta_3}{\Delta}& 1
		\end{pmatrix}^{\mathrm{T}},
	\]
	\[
		n_1 = \frac{1}{\sqrt{x_1^2+3x_2^2+3x_3^2+1}},
	\]
	which can be rewritten in the form (\ref{cor.3}).
	
	Similarly, given that the eigenvectors of the transfer matrix corresponding to the eigenvalues $\lambda_5, \lambda_6$ and $\lambda_7, \lambda_8$ can be written, respectively, as:
	\[
		\begin{gathered}
			\vec{v}_2 = 
			\begin{pmatrix}
				0 & \omega x_1' & \overline{\omega} x_1' & \overline{\omega} x_2' & x_1' & \omega x_2' & x_2' & 0
			\end{pmatrix}^{\mathrm{T}}, \\
			\vec{v}_3 = 
			\begin{pmatrix}
				0 & \overline{\omega} x_1'' & \omega  x_1'' & \omega x_2'' & x_1'' & \overline{\omega} x_2'' & x_2'' & 0
			\end{pmatrix}^{\mathrm{T}}.
		\end{gathered}
	\]
	This leads to Eqs.~(\ref{cor.4}) and (\ref{cor.5}).
	
	We must also consider the possibility that the eigenvalues $\lambda_1,\ldots,\lambda_4$ associated with the vectors in Eq.~(\ref{cor.3}) are degenerate. By the Perron--Frobenius theorem \cite{Horn}, $\lambda_1 = \lambda_{\max}$ is always a simple root, since all entries of the matrix $\tau^1$ are positive.
	
	In the case where the eigenvalue has multiplicity 3, i.e., $\lambda_2 = \lambda_3 = \lambda_4 = \lambda$, the corresponding eigenspace admits the following orthonormal basis, with $n_1^2$, $n_1^3$, and $n_1^4$ denoting normalization coefficients:
	\begin{equation} \label{cor.6}
		\vec{v}_1^2 = n_1^2
		\begin{pmatrix}
			0 & d & d & 0 & d & 0 & 0 & -3b
		\end{pmatrix}^{\mathrm{T}},
	\end{equation}
	\begin{equation} \label{cor.7}
		\vec{v}_1^3 = n_1^3
		\begin{pmatrix}
			0 & 3bc & 3bc & x_3 & 3bc & x_3 & x_3 & 3cd
		\end{pmatrix}^{\mathrm{T}},
	\end{equation}
	where $x_3 = -(3b^2 + d^2)$,
	\begin{equation} \label{cor.8}
		\vec{v}_1^4 = n_1^4
		\begin{pmatrix}
			x_1 & x_2 & x_2 & x_3 & x_2 & x_3 &	x_3 & x_4
		\end{pmatrix}^{\mathrm{T}},
	\end{equation}
	where $x_1 = d_2 + 3b^2 + 3c^2$, $x_2 = b(\lambda - a)$, $x_3 = c(\lambda - a)$, $x_4 = d(\lambda - a)$.
	
	In the case where an eigenvalue has multiplicity 2, i.e., $\lambda_2 = \lambda_3 = \lambda$ (the cases $\lambda_2 = \lambda_4 = \lambda$ and $\lambda_3 = \lambda_4 = \lambda$ are treated similarly and are not considered here), a convenient spanning set is:
	\[
		\vec{f}_2 = 
		\begin{pmatrix}
			x_1 & x_2 & x_2 & 0 & x_2 & 0 & 0 & x_4
		\end{pmatrix}^{\mathrm{T}},
	\]
	where $x_1 = d(e + 2f - \lambda) - 3bj$, $x_2 = (a - \lambda)j - db$, $x_4 = 3b_2 - (a - \lambda)(e + 2f - \lambda)$,
	
	\[
		\vec{f}_3 = 
		\begin{pmatrix}
			x_1 & x_2 & x_2 & x_3 & x_2 & x_3 & x_3 & 0
		\end{pmatrix}^{\mathrm{T}},
	\]
	and $x_1 = 3c(e + 2f - \lambda) - 3b(g + w + o)$, $x_2 = (a - \lambda)(g + w + o) - 3bc$, $x_3 = 3b^2 - (a - \lambda)(e + 2f - \lambda)$, and the corresponding orthonormal eigenvectors are derived using the Gram–Schmidt orthogonalization procedure, with $n_1^2$ and $n_1^3$ denoting the normalization coefficients:
	\begin{equation} \label{cor.9}
		\vec{v}_1^2 = n_1^2\vec{f}_2,
	\end{equation}
	\begin{equation} \label{cor.10}
		\vec{v}_1^3 = n_1^3 \bigg( \vec{f}_3 - \big( \vec{f}_3, \vec{v}_1^2 \big)\vec{v}_1^2 \bigg).
	\end{equation}
	
	In the case of a double eigenvalue, alternative expressions for the corresponding eigenvectors may arise depending on which rows of the matrix
	\[
		\tau^1 - \lambda I
	\]
	are linearly independent. Expressions (\ref{cor.9}) and (\ref{cor.10}) are derived for the case where the first and second rows of the matrix are linearly independent. Expressions for the other cases are derived analogously by solving the corresponding system of linear equations.
	\[
		(\tau^1 - \lambda I)\vec{x} = \vec{0}.
	\]
	
	\begin{theorem}
		In the thermodynamic limit ($L \to \infty$), the two-spin correlation functions for TCGITSH parameter values satisfying the mirror-symmetry condition (\ref{mirror_hamiltonian}) are given by:
		\begin{equation} \label{cor.compact}
			\begin{gathered}
				G_{(p,0),(q,j)} = \sum_{k=2}^{4} \left(\frac{\lambda_k}{\lambda_{\max}}\right)^{q-p}  \left[a_1^1 a_1^k + a_2^1 a_2^k - a_3^1 a_3^k - a_4^1 a_4^k\right]^2  \\
				+ \sum_{r=5}^{6} \left(\frac{\lambda_r}{\lambda_{\max}}\right)^{q-p} \Phi_r^{(j)},
			\end{gathered}
		\end{equation}
		where $ j = 0,~1,~2$ and
		\begin{equation} \label{cor.compact.Phi}
			\Phi_r^{(j)} = 4 \sum_{s=2}^{3} |n_s^{r-4}|^2  \left| a_3^1 \, \omega^{\,j\varepsilon_s} - a_2^1 \, y_s^{r-4} \, \omega^{-j\varepsilon_s} \right|^2,
		\end{equation}
		with $\varepsilon_2 = +1$, $\varepsilon_3 = -1$, $\omega = e^{2\pi i / 3}$, and $\overline{\omega} = \omega^{-1}$. 
		\label{theorem5}
	\end{theorem}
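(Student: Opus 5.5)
The plan is the standard transfer-matrix computation of a two-point function, $\langle \sigma_0^p \sigma_j^q\rangle$ with $p\le q$, on the torus. Insert the diagonal spin operators $S_0=\mathrm{diag}(\sigma_0)$ and $S_j=\mathrm{diag}(\sigma_j)$, where $\sigma_0,\sigma_j$ are read off from the state index $k$ in Eq.~(\ref{8}). This gives
\[
\langle\sigma_0^p\sigma_j^q\rangle=\frac{\operatorname{Tr}\bigl(S_0\,\theta^{q-p}S_j\,\theta^{L-q+p}\bigr)}{\operatorname{Tr}\theta^L}.
\]
Since $\theta$ in Eq.~(\ref{cor.1}) is real symmetric, we write $\theta=P\Lambda P^{\dagger}$ with the unitary $P$ of Eq.~(\ref{cor.2}). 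Expanding the trace in the eigenbasis gives $\sum_{k,l}\lambda_k^{q-p}\lambda_l^{L-q+p}(S_0)_{lk}(S_j)_{kl}$, where $(S)_{kl}=\vec v_k^{\dagger}S\vec v_l$. Now let $L\to\infty$ and use that $\lambda_1=\lambda_{\max}$ is simple (Perron--Frobenius, as noted before the theorem). Only $l=1$ survives, so
\[
\langle\sigma_0^p\sigma_j^q\rangle=\sum_k\Bigl(\frac{\lambda_k}{\lambda_{\max}}\Bigr)^{q-p}\bigl(\vec v_1^{1\dagger}S_0\vec v_k\bigr)\bigl(\vec v_k^{\dagger}S_j\vec v_1^1\bigr).
\]
The $k=1$ term is $M^2$ (in the paper's normalization). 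I read $G$ as the connected function, so this term is subtracted and the sum starts at $k=2$.

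Next I compute the matrix elements using the explicit eigenvectors (\ref{cor.3})--(\ref{cor.5}). With the ordering of Table~\ref{table:transfer-matrix}, $\sigma_0=(+,-,+,-,+,-,+,-)$, $\sigma_1=(+,+,-,-,+,+,-,-)$ and $\sigma_2=(+,+,+,+,-,-,-,-)$. For $k\in\{2,3,4\}$, both vectors have the symmetric pattern $(a_1,a_2,a_2,a_3,a_2,a_3,a_3,a_4)$. Then $\vec v_1^{1\dagger}S_j\vec v_1^k$ is independent of $j$ by the $C_3$ symmetry and equals $a_1^1a_1^k+a_2^1a_2^k-a_3^1a_3^k-a_4^1a_4^k$, up to the counting of $\pm$ entries. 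The entries are real, so the two factors coincide and give the square in (\ref{cor.compact}).

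For the complex eigenvectors $\vec v_2^t$ and $\vec v_3^u$, I evaluate $\vec v_1^{1\dagger}S_0\vec v_s^{t}$ and $\vec v_s^{t\dagger}S_j\vec v_1^1$ directly. Each is $n_s^t$ times a combination of $a_2^1 y_s^t$ and $a_3^1$ with cube-root-of-unity weights. Multiplying in the rotation $\omega^{\pm j}$ and using $1+\omega+\bar\omega=0$, the product should collapse to $4|n_s^t|^2\,|a_3^1\omega^{j\varepsilon_s}-a_2^1y_s^t\omega^{-j\varepsilon_s}|^2$ in the form (\ref{cor.compact.Phi}). By Remark~\ref{remark2}, mirror symmetry gives $\lambda_5=\lambda_7$ and $\lambda_6=\lambda_8$. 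This lets the contributions from $\tau^2$ and $\tau^3$ be grouped under a single $\lambda_r$, $r=5,6$, which is why (\ref{cor.compact.Phi}) sums over $s=2,3$.

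The main obstacle is this last algebra. First, the phases must combine into a modulus squared; for $j\ne0$ this is not automatic. I would check it using the cyclic structure of the components: $S_j$ acts on the $C_3$-covariant vectors as $S_0$ conjugated by the rotation permutation, and the eigenvectors pick up factors $\omega^{\pm1}$ under that permutation. Second, the prefactor $4$ must be tracked correctly, together with the identification $y_s^{r-4}$ when $\tau^2,\tau^3$ share eigenvalues. For degenerate $\lambda_2,\lambda_3,\lambda_4$, the same argument applies with the Gram--Schmidt bases (\ref{cor.6})--(\ref{cor.10}), since the formula depends only on the spectral projector.
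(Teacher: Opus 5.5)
Your route is the same as the paper's: the trace formula with $S_\alpha$, diagonalization by $P$, Perron--Frobenius so that only $l=1$ survives, and subtraction of the $k=1$ term. Your $\tau^1$-sector term is correct: $\vec v_1^{1\dagger}S_0\vec v_1^k=a_1^1a_1^k+a_2^1a_2^k-a_3^1a_3^k-a_4^1a_4^k$ exactly, independent of $j$. The gap is the step you defer as ``the main obstacle''. It cannot be completed, because the displayed $\Phi_r^{(j)}$ is not what the computation produces. Using Eqs.~(\ref{cor.4})--(\ref{cor.5}) one finds
\[
\vec v_1^{1\dagger}S_0\,\vec v_2^{\,t}=2n_2^t\bigl(a_3^1-\omega\,a_2^1y_2^t\bigr),\qquad \vec v_1^{1\dagger}S_0\,\vec v_3^{\,u}=2n_3^u\bigl(a_3^1-\overline{\omega}\,a_2^1y_3^u\bigr).
\]
Your own rotation argument then fixes the $j$-dependence. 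We have $S_j=D_1^{-j}S_0D_1^{\,j}$ with $D_1$ from Eq.~(\ref{80}), together with $D_1\vec v_1^1=\vec v_1^1$, $D_1\vec v_2^t=\omega\vec v_2^t$ and $D_1\vec v_3^u=\overline{\omega}\vec v_3^u$. Hence $\vec v_1^{1\dagger}S_j\vec v_s=\omega^{j\varepsilon_s}\,\vec v_1^{1\dagger}S_0\vec v_s$, and
\[
\bigl(\vec v_1^{1\dagger}S_0\vec v_s\bigr)\bigl(\vec v_s^{\dagger}S_j\vec v_1^1\bigr)=\omega^{-j\varepsilon_s}\,4|n_s|^2\,\bigl|a_3^1-\omega^{\varepsilon_s}a_2^1y_s\bigr|^2 .
\]
So $j$ enters as a phase \emph{outside} a $j$-independent modulus, not inside it. Mirror symmetry gives $\lambda_7=\lambda_5$, $\lambda_8=\lambda_6$ and $y_3=\overline{y_2}$, so the two sectors have equal moduli. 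The coefficient of $(\lambda_r/\lambda_{\max})^{q-p}$ is therefore
\[
8\cos\Bigl(\frac{2\pi j}{3}\Bigr)\,|n_2^{r-4}|^2\,\bigl|a_3^1-\omega\,a_2^1y_2^{r-4}\bigr|^2 ,
\]
which is negative for $j=1,2$. The stated $\Phi_r^{(j)}$ is nonnegative for every $j$, and it already disagrees at $j=0$, where it has $a_2^1y$ in place of $\omega a_2^1y$.

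You can confirm this on the PSCSH subfamily. There $a_2^1=a_3^1=B_1/\sqrt6$ (first column of Eq.~(\ref{67})) and $|n|^2=1/6$. Also $y_2=\overline{\omega}$ for the centrally symmetric eigenvalue $e+w-g-f$, and $y_2=-\overline{\omega}$ for $\lambda_7=e-w+g-f$, in the notation of Eq.~(\ref{47}).

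\textbf{Corrected coefficient.} It gives weight $0$ on the symmetric eigenvalue. On $\lambda_7$ it gives $\frac89B_1^2$ for $j=0$ and $-\frac49B_1^2$ for $j=1,2$, which is exactly Eqs.~(\ref{54})--(\ref{55}).

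\textbf{Displayed $\Phi$.} At $j=0$ it puts $\frac23B_1^2$ on the symmetric eigenvalue, which must decouple because $S_0$ anticommutes with $D_2$. It puts only $\frac29B_1^2$ on $\lambda_7$. For $j=1,2$ it gives positive weights.

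So carrying out your outline proves a corrected formula, not Theorem~\ref{theorem5} as written. The paper's own proof takes the same route and only asserts the reduction to Eq.~(\ref{cor.compact}) without doing this algebra, so it does not supply the missing step either.
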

	
	\textbf{Proof.}
	
	The two-spin correlation function \cite{Baxter, Yokota} is given by:
	\begin{equation}  \label{cor.58}
		G_{(p,\alpha),(q,\beta)} = \big\langle\sigma_{\alpha}^p \sigma_{ \beta}^q\big\rangle - \big\langle\sigma_{ \alpha}^p\big\rangle\big\langle\sigma_{\beta}^q\big\rangle,
	\end{equation}
	where the mean values are given by \cite{Yokota, Baxter, Fisher}:
	\begin{equation}  \label{cor.59}
		\big\langle\sigma_{\alpha}^p \sigma_{\beta}^q\big\rangle = \frac{\operatorname{Tr}\big(S_\alpha\theta^{q-p}S_\beta\theta^{L+p-q}\big)}{Z_L},
	\end{equation}
	\begin{equation}  \label{cor.60}
		\big\langle\sigma_\gamma^k\big\rangle = \frac{\operatorname{Tr}\big(S_\gamma\theta^{L}\big)}{Z_L},
	\end{equation}
	where $S_l$ ($l = 0,~1,~2$) are the diagonal spin-layer matrices, which for the present transfer matrix take the form:
	\begin{equation} \label{cor.61}
		S_0 = \operatorname{diag} \big(1,~-1,~1,~-1,~1,~-1,~1,~-1\big),
	\end{equation}
	\begin{equation} \label{cor.62}
		S_1 = \operatorname{diag} \big(1,~1,~-1,~-1,~1,~1,~-1,~-1\big),
	\end{equation}
	\begin{equation} \label{cor.63}
		S_2 = \operatorname{diag} \big(1,~1,~1,~1,~-1,~-1,~-1,~-1\big).
	\end{equation}
	
	The expression (\ref{cor.59}) can be rewritten as follows:
	\begin{eqnarray} \label{cor.64}
		\big\langle\sigma_{\alpha}^p \sigma_{\beta}^q\big\rangle &=& \frac{\operatorname{Tr}\big(\theta^{p-1}S_{\alpha}\theta^{q-p}S_{\beta}\theta^{L-q+1}\big)}{\operatorname{Tr}\theta^L} \\ \nonumber
		&=& \frac{\operatorname{Tr}\big(\Lambda^{p-1}S'_{\alpha}\Lambda^{q-p}S'_{\beta}\Lambda^{L-q+1}\big)}{\operatorname{Tr}\Lambda^L}.
	\end{eqnarray}
	
	Similarly, the expression (\ref{cor.60}) has the form:
	\begin{equation}  \label{cor.65}
		\big\langle\sigma_\gamma^k\big\rangle = \frac{\operatorname{Tr}\big(S'_\gamma\Lambda^{L}\big)}{\operatorname{Tr}\Lambda^L},
	\end{equation}
	where $\Lambda$ is the diagonal matrix obtained by diagonalizing the transfer matrix, and
	\begin{equation}  \label{cor.66}
		S'_l = P^{-1}S_lP,
	\end{equation} 
	where $P$ is the unitary matrix in Eq.~(\ref{cor.2}) that diagonalizes the transfer matrix, and its inverse matrix:
	\begin{equation} \label{cor.68}
		P^{-1} = P^*.
	\end{equation}
	
	Substituting Eqs.~(\ref{cor.64}) and~(\ref{cor.65}) into Eq.~(\ref{cor.58}), we obtain formulas for the pair correlation functions. In the thermodynamic limit ($L\to\infty$), these expressions reduce to Eq.~(\ref{cor.compact}), and the rotational invariance of the system implies $G_{(p,0),(q,0)} = G_{(p,1),(q,1)} = G_{(p,2),(q,2)}$, $G_{(p,0),(q,1)} = G_{(p,1),(q,2)} = G_{(p,2),(q,0)}$, and $G_{(p,0),(q,2)} = G_{(p,1),(q,0)} = G_{(p,2),(q,1)}$.
	
	\subsection{\label{sec3:Magnetization}Magnetization in the thermodynamic limit from the eigenvector associated with the largest transfer-matrix eigenvalue}

	Throughout this subsection, the transfer matrix is assumed to be symmetric, which is ensured when the Hamiltonian satisfies mirror-symmetry~(\ref{mirror_hamiltonian}).

	By definition, the magnetization is the thermal average of a spin. It can be computed from Eq.~(\ref{cor.60}).
	
	Under these assumptions, the magnetization can be written as:
	\begin{equation} \label{73}
		M = \frac{\operatorname{Tr}\big(S'_{\gamma}\Lambda'^{L}\big)}{\operatorname{Tr}\big(\Lambda'^{L}\big)},
	\end{equation}
	where
	\[
		\Lambda' = \operatorname{diag}\bigg(1, \frac{\lambda_2}{\lambda_{\max}}, \ldots, \frac{\lambda_8}{\lambda_{\max}}\bigg).
	\]
	
	In the thermodynamic limit, the magnetization $(\ref{73})$ is equal to:
	\begin{equation}  \label{74}
		M = \operatorname{Tr}\big(S'_{\gamma}I_{11}\big),
	\end{equation}
	where
	\[
		I_{11} = 
		\begin{cases}
			1, & \text{when}~i = 1~\text{and}~j = 1, \\
			0, & \text{otherwise}.
		\end{cases}
	\]
	
	Using Eq.~(\ref{cor.66}), we can rewrite Eq.~(\ref{74}) as
	\[
		M = \operatorname{Tr}\big(P^{-1}S_{\gamma}PI_{11}\big) = \operatorname{Tr}\big(S_{\gamma}PI_{11}P^{*}\big).
	\]
	
	As a result, the magnetization in the thermodynamic limit has the form:
	\begin{equation} \label{75}
		M = v_1^2 + v_2^2 - v_3^2 - v_4^2,
	\end{equation}
	where $v_1$, $v_2$, $v_3$, $v_4$ are the components of the normalized eigenvector of the transfer matrix corresponding to the largest eigenvalue:
	\[
	\vec{v} = 
		\begin{pmatrix}
			v_1&v_2&v_2&v_3&v_2&v_3&v_3&v_4
		\end{pmatrix}^{\mathrm{T}},
	\]
	normalized to unit Euclidean norm:
	\[
		||\vec{v}||_2 = 1.
	\]
	
	\subsection{Pair correlations for the PSCSH model}
	
	The Hamiltonian is given by Eq.~(\ref{5}), where:
	\begin{widetext}
		\begin{eqnarray} \label{46}
			&&\mathcal{H}^{(m)}  \nonumber \\
			&&= -\frac{J_1}{2}\bigg(\sigma_{0}^m\sigma_{1}^m + \sigma_{0}^m\sigma_{2}^m + \sigma_{1}^m\sigma_{2}^m + \sigma_{0}^{m+1}\sigma_{1}^{m+1} + \sigma_{0}^{m+1}\sigma_{2}^{m+1} + \sigma_{1}^{m+1}\sigma_{2}^{m+1}\bigg) \nonumber \\
			&&-J_2\bigg(\sigma_{0}^m\sigma_{0}^{m+1} 
			+ \sigma_{1}^m\sigma_{1}^{m+1} + \sigma_{2}^m\sigma_{2}^{m+1}\bigg)
			-J_3\bigg(\sigma_{0}^m\sigma_{1}^{m+1} + \sigma_{1}^m\sigma_{0}^{m+1} + \sigma_{2}^m\sigma_{0}^{m+1} + \sigma_{0}^m\sigma_{2}^{m+1} +\sigma_{1}^m\sigma_{2}^{m+1} + \sigma_{2}^m\sigma_{1}^{m+1}\bigg) \nonumber \\
			&&-J_4\bigg(\sigma_{0}^m\sigma_{1}^m\sigma_{2}^m\sigma_{0}^{m+1} + \sigma_{0}^m\sigma_{1}^m\sigma_{2}^m\sigma_{1}^{m+1} + \sigma_{0}^m\sigma_{1}^m\sigma_{2}^m\sigma_{2}^{m+1}
			+\sigma_{0}^m\sigma_{0}^{m+1}\sigma_{1}^{m+1}\sigma_{2}^{m+1}
			+\sigma_{1}^m\sigma_{0}^{m+1}\sigma_{1}^{m+1}\sigma_{2}^{m+1} + \sigma_{2}^m\sigma_{0}^{m+1}\sigma_{1}^{m+1}\sigma_{2}^{m+1}\bigg) \nonumber \\
			&&-J_5\bigg(\sigma_{0}^m\sigma_{1}^m\sigma_{0}^{m+1}\sigma_{1}^{m+1} + \sigma_{0}^m\sigma_{2}^m\sigma_{0}^{m+1}\sigma_{2}^{m+1} + \sigma_{1}^m\sigma_{2}^m\sigma_{1}^{m+1}\sigma_{2}^{m+1}\bigg) \nonumber \\
			&&-J_6\bigg(\sigma_{0}^m\sigma_{1}^m\sigma_{0}^{m+1}\sigma_{2}^{m+1} + \sigma_{0}^m\sigma_{2}^m\sigma_{1}^{m+1}\sigma_{2}^{m+1}
			+\sigma_{1}^m\sigma_{2}^m\sigma_{0}^{m+1}\sigma_{1}^{m+1} + \sigma_{0}^m\sigma_{1}^m\sigma_{1}^{m+1}\sigma_{2}^{m+1} +\sigma_{0}^m\sigma_{2}^m\sigma_{0}^{m+1}\sigma_{1}^{m+1} + \sigma_{1}^m\sigma_{2}^m\sigma_{0}^{m+1}\sigma_{2}^{m+1}\bigg) \nonumber \\ &&-J_7\sigma_{0}^m\sigma_{1}^m\sigma_{2}^m\sigma_{0}^{m+1}\sigma_{1}^{m+1}\sigma_{2}^{m+1},
		\end{eqnarray}
	\end{widetext}
	and the transfer matrix (\ref{8}) has the form
	\begin{equation} \label{47}
		\theta = 
		\begin{pmatrix}
			a&b&b&c&b&c&c&d\\
			b&e&f&g&f&g&w&c\\
			b&f&e&g&f&w&g&c\\
			c&g&g&e&w&f&f&b\\
			b&f&f&w&e&g&g&c\\
			c&g&w&f&g&e&f&b\\
			c&w&g&f&g&f&e&b\\
			d&c&c&b&c&b&b&a
		\end{pmatrix},
	\end{equation}
	where
	\begin{equation*}
		a = \exp\bigg\{\frac{3J_1 + 3J_2 + 6J_3 + 6J_4 + 3J_5 + 6J_6 + J_7}{T}\bigg\},
	\end{equation*}
	\begin{equation*}
		b = \exp\bigg\{\frac{J_1 + J_2 + 2J_3 - 2J_4 - J_5 - 2J_6 - J_7}{T}\bigg\},
	\end{equation*}
	\begin{equation*}
		c = \exp\bigg\{\frac{J_1 - J_2 - 2J_3 + 2J_4 - J_5 - 2J_6 + J_7}{T}\bigg\},
	\end{equation*}
	\begin{equation*}
		d = \exp\bigg\{\frac{3J_1 - 3J_2 - 6J_3 - 6J_4 + 3J_5 + 6J_6 - J_7}{T}\bigg\},
	\end{equation*}
	\begin{equation*}
		e = \exp\bigg\{\frac{- J_1 + 3J_2 - 2J_3 - 2J_4 + 3J_5 - 2J_6 + J_7}{T}\bigg\},
	\end{equation*}
	\begin{equation*}
		f = \exp\bigg\{\frac{- J_1 - J_2 + 2J_3 - 2J_4 - J_5 + 2J_6 + J_7}{T}\bigg\},
	\end{equation*}
	\begin{equation*}
		g = \exp\bigg\{\frac{-J_1 + J_2 - 2J_3 + 2J_4 - J_5 + 2J_6 - J_7}{T}\bigg\},
	\end{equation*}
	\begin{equation*}
		w = \exp\bigg\{\frac{-J_1 - 3J_2 + 2J_3 + 2J_4 + 3J_5 - 2J_6 - J_7}{T}\bigg\},
	\end{equation*}
	and its eigenvalues are given by Eqs.~(\ref{40})--(\ref{45}):
	\begin{eqnarray} \label{49} 
		\lambda_{1,2}&=&\frac{a + d + e + 2f + 2g + w \pm \sqrt{C_1}}{2},
	\end{eqnarray}
	\begin{eqnarray}  \label{50}
		\lambda_{3,4}&=&\frac{a - d + e + 2f - 2g - w \pm \sqrt{C_2}}{2},
	\end{eqnarray}
	\begin{eqnarray} \label{51}
		\lambda_{5,6} &=& e + w - g - f,
	\end{eqnarray}
	\begin{eqnarray} \label{52}
		\lambda_{7,8} &=& e - w + g - f,
	\end{eqnarray}
	where:
	\begin{equation} \label{53}
		\begin{gathered}
			C_1 = \big(a+d-e-2f-2g-w\big)^2 +12\big(b+c\big)^2,\\
			C_2 = \big(-a+d+e+2f-2g-w\big)^2 + 12\big(b-c\big)^2.
		\end{gathered}
	\end{equation}
	
	\begin{theorem}
		In the thermodynamic limit, the two-spin correlations for the model with Hamiltonian (\ref{46}) are given by:
		\begin{eqnarray} \label{54} 
			&&G_{(p,0),(q,0)} = G_{(p,1),(q,1)} = G_{(p,2),(q,2)} \nonumber \\
			&&=\biggl(A_1B_2 + \frac{1}{3}B_1A_2\biggr)^2\biggl(\frac{\lambda_3}{\lambda_1}\biggr)^{q-p} \nonumber \\
			&&+\biggl(A_1A_2 - \frac{1}{3}B_1B_2\biggr)^2\biggl(\frac{\lambda_4}{\lambda_1}\biggr)^{q-p} +\frac{8}{9}B_1^2\biggl(\frac{\lambda_7}{\lambda_1}\biggr)^{q-p},
		\end{eqnarray}
		\begin{eqnarray} \label{55} 
			&&G_{(p,0),(q,1)} = G_{(p,0),(q,2)} = G_{(p,1),(q,2)} \nonumber \\
			&&=\biggl(A_1B_2 + \frac{1}{3}B_1A_2\biggr)^2\biggl(\frac{\lambda_3}{\lambda_1}\biggr)^{q-p} \nonumber \\
			&&+\biggl(A_1A_2 - \frac{1}{3}B_1B_2\biggr)^2\biggl(\frac{\lambda_4}{\lambda_1}\biggr)^{q-p}
			-\frac{4}{9}B_1^2\biggl(\frac{\lambda_7}{\lambda_1}\biggr)^{q-p},
		\end{eqnarray}
		where $A_1,~B_1,~A_2,~B_2$ are defined by Eq.~(\ref{56}) in the Appendix, and the average value of one spin is zero:
		\begin{equation}  \label{57}
			\big\langle\sigma\big\rangle  = 0.
		\end{equation}
		\label{theorem6}
	\end{theorem}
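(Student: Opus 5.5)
We follow the route of the proof of Theorem~\ref{theorem5}, now exploiting the extra structure of the PSCSH transfer matrix (\ref{47}). The matrix $\theta$ is real symmetric and centrosymmetric, $J\theta J=\theta$, where $J$ is the reversal permutation $k\mapsto 9-k$, i.e.\ global spin flip. It also commutes with the cyclic permutation of the three chains and with the transpositions of chains. We will diagonalize $\theta$ explicitly by an orthonormal basis adapted to these symmetries.

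The even sector ($Jv=v$), restricted to $C_3$-invariant vectors, is spanned by the normalized vectors $u_1=(e_1+e_8)/\sqrt2$ and $u_2=(e_2+e_3+e_5+e_4+e_6+e_7)/\sqrt6$. On $\{u_1,u_2\}$ the action of $\theta$ is a symmetric $2\times2$ matrix, which is similar to $\tau^4$ and has eigenvalues $\lambda_{1,2}$ from (\ref{49}). The odd sector ($Jv=-v$), again restricted to $C_3$-invariant vectors, is spanned by $u_3=(e_1-e_8)/\sqrt2$ and $u_4=(e_2+e_3+e_5-e_4-e_6-e_7)/\sqrt6$, and there $\theta$ is similar to $\tau^5$ with eigenvalues $\lambda_{3,4}$. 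We write the normalized eigenvectors as rotations in these planes: $v_1=A_1u_1+B_1u_2$ and $v_2=-B_1u_1+A_1u_2$ for $\lambda_{1,2}$, and $v_3=A_2u_3+B_2u_4$ and $v_4=-B_2u_3+A_2u_4$ for $\lambda_{3,4}$, with $A_i^2+B_i^2=1$. We take these to be the quantities of Eq.~(\ref{56}); if the Appendix uses a different labelling, we swap which of $v_3,v_4$ carries $(A_2,B_2)$. The remaining four eigenvectors have nontrivial $C_3$ character. They are supported on the one-flipped and two-flipped configurations, split by $J$, and give $\lambda_{5,6}$ and $\lambda_{7,8}$. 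Because $\theta$ is real symmetric with degenerate pairs, we choose real orthonormal combinations of these vectors.

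Next we compute the matrix elements of $S'_\alpha=P^{\mathrm T}S_\alpha P$ between $v_1$ and the other eigenvectors. Since $S_\alpha$ anticommutes with $J$, $S_\alpha$ maps the even sector to the odd sector. Hence $\langle v_1|S_\alpha|v_1\rangle=0$ and $\langle v_1|S_\alpha|v_2\rangle=0$, and (\ref{57}) follows from (\ref{cor.65}) in the limit, exactly as in Section~\ref{sec3:Magnetization}. For the odd $C_3$-invariant vectors we use $\sum_\alpha S_\alpha$: it acts as $3$ on $e_1$ and as $1$ on the one-flipped states. This gives the $C_3$-symmetric part of $S_\alpha u_1$ as $u_3$ and that of $S_\alpha u_2$ as $\tfrac13 u_4$. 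Therefore
\[
\langle v_3|S_\alpha|v_1\rangle=A_1A_2+\tfrac13B_1B_2,
\]
and the analogous element with $v_4$ is $-(A_1B_2-\tfrac13B_1A_2)$, up to sign and labelling conventions. Squaring these yields the first two coefficients in (\ref{54})--(\ref{55}), which are independent of $\alpha,\beta$. The non-invariant remainder of $S_\alpha u_2$ has squared norm $1-\tfrac19=\tfrac89$. Only $u_2$ contributes to it, with weight $B_1$, since $S_\alpha u_1$ is $C_3$ invariant. That remainder lies in the odd non-invariant subspace, and in this model it is an eigenspace for $\lambda_{7,8}$; the whole point here is to verify that this sector is odd and to match the sign convention of (\ref{52}). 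Thus $\sum_{r=7,8}\langle v_1|S_\alpha|v_r\rangle\langle v_r|S_\beta|v_1\rangle$ equals $B_1^2$ times the inner product of the non-invariant parts of $S_\alpha u_2$ and $S_\beta u_2$. This is $\tfrac89B_1^2$ for $\alpha=\beta$, and by the $C_3$ character sum $1+\omega+\bar\omega=0$ it is $-\tfrac49B_1^2$ for $\alpha\ne\beta$.

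Finally we insert these into (\ref{cor.64}) with $L\to\infty$, divide by $\lambda_1^{q-p}$, and drop the terms vanishing with $\langle\sigma\rangle=0$ and with $\langle v_1|S|v_2\rangle=0$. This yields (\ref{54})--(\ref{55}), and the symmetry equalities of $G$ follow from the chain permutations. The main obstacle is the bookkeeping in the odd non-invariant sector. First, we must confirm that $\theta$ restricted to the span of the vectors $S_\alpha u_2-\tfrac13u_4$ indeed has the single eigenvalue $\lambda_7=e-w+g-f$, not $e+w-g-f$. We check this by applying row 2 of (\ref{47}) to $e_2-e_3$ (together with its flip-odd partner $e_2-e_3-(e_7-e_6)$), which directly gives $e-w+g-f$. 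Second, we must correctly fix the signs and scale of $A_i,B_i$ relative to (\ref{56}).
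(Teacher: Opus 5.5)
Your route is a cleaner version of the paper's proof. The paper writes out the unitary $P$ of Eq.~(\ref{67}) and the full blocks (\ref{C2})--(\ref{C5}) of $S'_0,S'_1$. Several of your steps are correct: the parity argument for $\langle\sigma\rangle=0$, the decompositions $S_\alpha u_1=u_3$ and $S_\alpha u_2=\tfrac13u_4+r_\alpha$, the identification of the odd non-invariant sector with the $\lambda_7=\lambda_8$ eigenspace, and the basis-free evaluation of the $\lambda_7$ term as $\tfrac89B_1^2$ or $-\tfrac49B_1^2$.

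The gap is the step you postponed, and the parametrization you postulated there is wrong. Your $v_1=A_1u_1+B_1u_2$ does agree with (\ref{56}), but in the odd block the roles of $A_2$ and $B_2$ are reversed. In the basis $(u_3,u_4)$,
\[
\theta\big|_{\mathrm{span}(u_3,u_4)}=\begin{pmatrix}a-d&\sqrt3(b-c)\\ \sqrt3(b-c)&e+2f-2g-w\end{pmatrix}.
\]
Its eigenvector for $\lambda_3$ (the $+$ sign in (\ref{50})) is proportional to $\bigl(2\sqrt3(b-c),\ \sqrt{C_2}-a+d+e+2f-2g-w\bigr)$, which normalizes exactly to $(B_2,A_2)$. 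So $v_3=B_2u_3+A_2u_4$ and $v_4=-A_2u_3+B_2u_4$, whereas your $A_2u_3+B_2u_4$ is generically not an eigenvector of $\theta$ at all.

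Consequently your elements $A_1A_2+\tfrac13B_1B_2$ and $-(A_1B_2-\tfrac13B_1A_2)$ do not square to the coefficients in (\ref{54})--(\ref{55}). Your fallback of swapping $v_3\leftrightarrow v_4$ does not repair this. The mismatch lies in which of $u_3,u_4$ carries $A_2$ (equivalently, the sign in front of $\tfrac13$), not in the eigenvalue labels.

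The discrepancy is real. Take three decoupled chains (only $J_2\neq0$, with $K=J_2/T>0$). Then Eq.~(\ref{56}) gives $A_1=A_2=\tfrac12$ and $B_1=B_2=\tfrac{\sqrt3}{2}$, with $\lambda_3/\lambda_1=\lambda_7/\lambda_1=\tanh K$ and $\lambda_4/\lambda_1=\tanh^3K$. The theorem's coefficients are $\tfrac13$, $0$, and $\tfrac89B_1^2=\tfrac23$ (resp.\ $-\tfrac49B_1^2=-\tfrac13$). These reproduce the exact $G_{(p,0),(q,0)}=\tanh^nK$ and $G_{(p,0),(q,1)}=0$, with $n=q-p$. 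Yours give $G_{(p,0),(q,1)}=\tfrac1{12}(\tanh^{3n}K-\tanh^nK)$, or $\tfrac14(\tanh^{3n}K-\tanh^nK)$ after the swap.

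The fix is to diagonalize both $2\times2$ blocks explicitly rather than postulate the parametrization. For the even block, the $\lambda_1$-eigenvector is proportional to $\bigl(\sqrt{C_1}+a+d-e-2f-2g-w,\ 2\sqrt3(b+c)\bigr)$, which normalizes to $(A_1,B_1)$. With $v_3=B_2u_3+A_2u_4$, your own projection formula gives $\langle v_3|S_\alpha|v_1\rangle=A_1B_2+\tfrac13B_1A_2$ and $\langle v_4|S_\alpha|v_1\rangle=-(A_1A_2-\tfrac13B_1B_2)$. The rest of your argument then yields (\ref{54})--(\ref{55}).
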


	In this case, the unitary matrix that diagonalizes the transfer matrix is given by:
	\begin{eqnarray} \label{67}
		&&P = \nonumber \\
		&&\frac{1}{\sqrt{6}}
		\begin{pmatrix}
			\sqrt{3}A_1&\sqrt{3}B_1&0&0&\sqrt{3}B_2&-\sqrt{3}A_2&0&0\\
			B_1&-A_1&\omega&1&A_2&B_2&-\omega&-1\\
			B_1&-A_1&\omega^2&\omega^2&A_2&B_2&-\omega^2&-\omega^2\\
			B_1&-A_1&1&\omega&-A_2&-B_2&1&\omega\\
			B_1&-A_1&1&\omega&A_2&B_2&-1&-\omega\\
			B_1&-A_1&\omega^2&\omega^2&-A_2&-B_2&\omega^2&\omega^2\\
			B_1&-A_1&\omega&1&-A_2&-B_2&\omega&1\\
			\sqrt{3}A_1&\sqrt{3}B_1&0&0&-\sqrt{3}B_2&\sqrt{3}A_2&0&0
		\end{pmatrix}, \nonumber \\
	\end{eqnarray}
	
	From Eqs.~(\ref{cor.66}), (\ref{cor.68}), and (\ref{67}), we obtain the matrix $S'_0$:
	\begin{equation} \label{70} 
		S'_{0} =
		\begin{pmatrix}
			0& S'_{0_{12}} \\
			S'_{0_{21}} & 0
		\end{pmatrix},
	\end{equation}
	and matrix $S'_1$:
	\begin{equation} \label{71} 
		S'_{1} = 
		\begin{pmatrix}
			0&S'_{1_{12}}\\
			S'_{1_{21}}&0
		\end{pmatrix},
	\end{equation}
	where the matrices $S'_{0_{12}}, S'_{0_{21}}, S'_{1_{12}}, S'_{1_{21}}$ are given by Eqs.~(\ref{C2})--(\ref{C5}) in the Appendix.
	
	Using Eqs.~(\ref{70}) and~(\ref{71}) in Eq.~(\ref{cor.64}), we obtain expressions for $\big\langle\sigma_{0}^p \sigma_{0}^q\big\rangle$ and $\big\langle\sigma_{0}^p \sigma_{1}^q\big\rangle$, which in the thermodynamic limit ($L \to \infty$) coincide with Eqs.~(\ref{54}) and~(\ref{55}); the rotational invariance of the system then yields $\big\langle\sigma_{1}^p \sigma_{1}^q \big\rangle = \big\langle\sigma_{2}^p \sigma_{2}^q \big\rangle = \big\langle\sigma_{0}^p \sigma_{0}^q \big\rangle$ and $\big\langle\sigma_{0}^p \sigma_{2}^q \big\rangle = \big\langle\sigma_{1}^p \sigma_{2}^q \big\rangle = \big\langle\sigma_{0}^p \sigma_{1}^q \big\rangle$.

	Similarly, using Eqs.~(\ref{70}) and~(\ref{71}) in Eq.~(\ref{cor.65}), we obtain
	\begin{equation}  \label{72}
		\big\langle\sigma_{0}^{k_0}\big\rangle = \big\langle\sigma_{1}^{k_1}\big\rangle = \big\langle\sigma_{2}^{k_2}\big\rangle = 0
	\end{equation}
	
	Equation~(\ref{72}) implies that the magnetization vanishes in this case.  Therefore, the two-spin correlation functions defined in Eq.~(\ref{cor.58}) reduce to Eqs. (\ref{54}) and (\ref{55}).
	
	We note that the expressions for the pair correlations~(\ref{54}) and~(\ref{55}), as well as the spectrum of the transfer matrix in Eqs.~(\ref{49})--(\ref{52}), coincide with the known results of \cite{Yokota}, with which our model agrees when the parameters $J_4,\ldots,J_7$ in the Hamiltonian~(\ref{46}) are set to zero.
	
	We next examine two illustrative ground-state examples for the PSCSH model and characterize the behavior of the inverse correlation length at the boundaries between ground-state regions. Both examples are given in Appendix \ref{subsecD:GroundStates}.

	\section{\label{sec:PlanarModel}Width-three planar model with nearest-neighbor, next-nearest-neighbor and plaquette interactions}
	
	The three-chain tube is the simplest toroidal geometry whose unfolding produces a planar lattice model.
	
	We next consider a width-three planar model with nearest-neighbor, next-nearest-neighbor, and plaquette interactions; each layer contains three plaquettes.
	
	This model is obtained by unfolding the lattice shown in Fig. \ref{figure1} into the plane, with the boundary conditions $t_3^m \equiv t_0^m$ and $t_3^{m+1} \equiv t_0^{m+1}$ (Fig. \ref{figure8}).

	\begin{figure}[!htbp]
		\centering
		\includegraphics[width=0.85\linewidth]{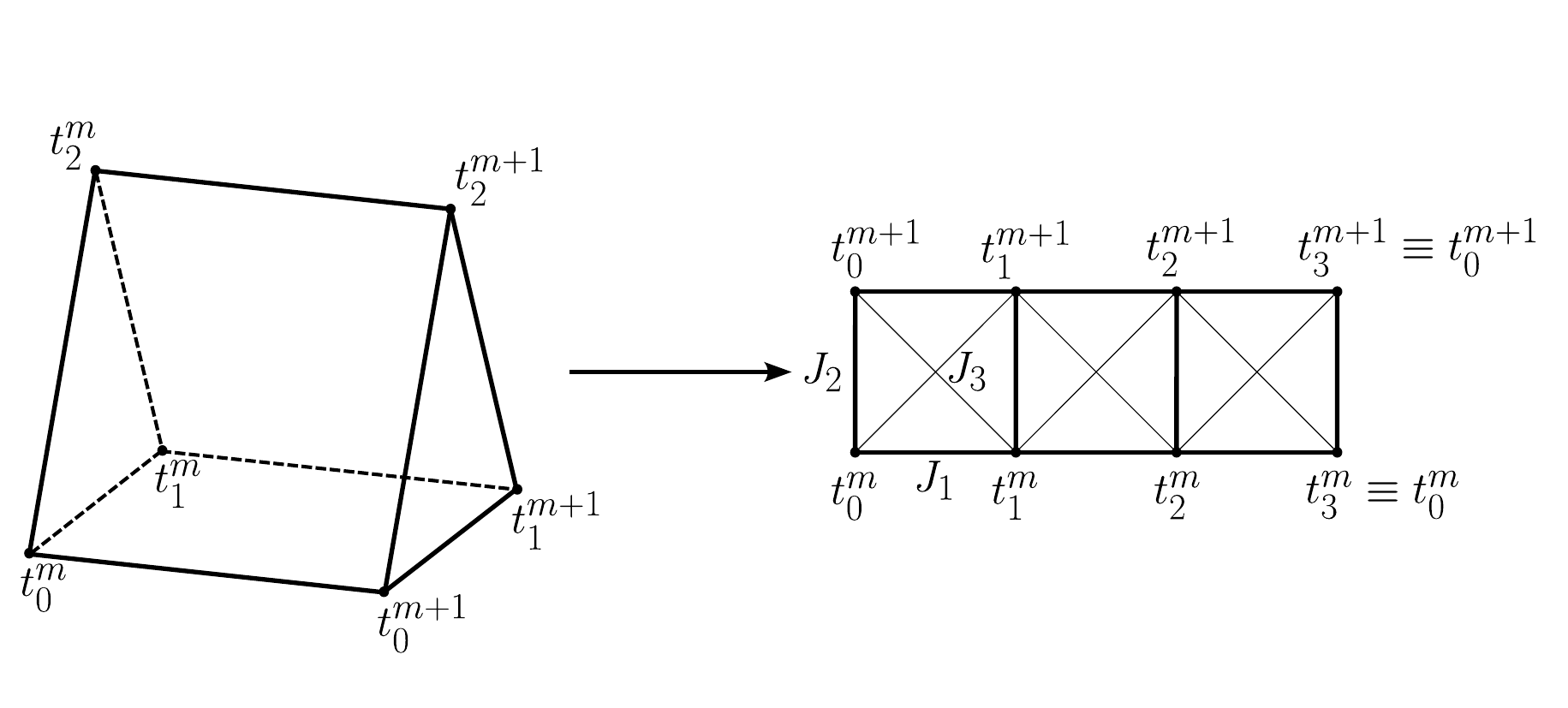}
		\caption{\label{figure8} Planar Ising model with nearest-neighbor, next-nearest-neighbor, and plaquette interactions obtained by unfolding the three-chain Ising tube into the plane.}
	\end{figure}

	Here $J_1^{\operatorname{pl}}$ corresponds to the nearest-neighbor interaction within a layer, $J_2^{\operatorname{pl}}$ corresponds to the nearest-neighbor interaction on adjacent layers, $J_3^{\operatorname{pl}}$ corresponds to the next-nearest-neighbor interaction, and the coupling $J_4^{\operatorname{pl}}$ is not shown explicitly in Fig.~\ref{figure8}; it denotes the plaquette interaction.
	
	The Hamiltonian is given by Eq.~(\ref{5}), where the Hamiltonian of one layer of the model is of the form:
	\begin{widetext}
		\begin{eqnarray} \label{77}
			\mathcal{H}^{(m)}&& \nonumber \\ 
			&=&-\frac{J_1^{\operatorname{pl}}}{2}\bigg(
			\sigma_{0}^m\sigma_{1}^m +
			\sigma_{0}^m\sigma_{2}^m +
			\sigma_{1}^m\sigma_{2}^m +
			\sigma_{0}^{m+1}\sigma_{1}^{m+1} +
			\sigma_{0}^{m+1}\sigma_{2}^{m+1} +
			\sigma_{1}^{m+1}\sigma_{2}^{m+1}
			\bigg) \nonumber \\ 
			&&-J_2^{\operatorname{pl}}\bigg(
			\sigma_{0}^{m}\sigma_{0}^{m+1} +
			\sigma_{1}^{m}\sigma_{1}^{m+1} +
			\sigma_{2}^{m}\sigma_{2}^{m+1}
			\bigg) \nonumber \\
			&&-J_3^{\operatorname{pl}}\bigg(
			\sigma_{0}^{m}\sigma_{1}^{m+1} +
			\sigma_{1}^{m}\sigma_{2}^{m+1} + \sigma_{2}^{m}\sigma_{0}^{m+1} +
			\sigma_{0}^{m}\sigma_{2}^{m+1} +
			\sigma_{2}^{m}\sigma_{1}^{m+1} +
			\sigma_{1}^{m}\sigma_{0}^{m+1} 
			\bigg) \nonumber \\
			&&-J_4^{\operatorname{pl}} \bigg(
			\sigma_{0}^{m}\sigma_{1}^{m}\sigma_{0}^{m+1}\sigma_{1}^{m+1} +
			\sigma_{1}^{m}\sigma_{2}^{m}\sigma_{1}^{m+1}\sigma_{2}^{m+1} +\sigma_{0}^{m}\sigma_{2}^{m}\sigma_{0}^{m+1} \sigma_{2}^{m+1}
			\bigg),
		\end{eqnarray}
	\end{widetext}
	where the coupling constants correspond to the following parameters of the TCGIT model, as given in Eqs.~(\ref{H2}) and (\ref{H4}):
	\[
		\begin{gathered}
			J_1^{\operatorname{pl}} = J_{\{t_0^m,t_1^m\}}, \\
			J_2^{\operatorname{pl}} = J_{\{t_0^m,t_0^{m+1}\}}, \\
			J_3^{\operatorname{pl}} = J_{\{t_0^m,t_1^{m+1}\}} = J_{\{t_0^m,t_2^{m+1}\}} , \\
			J_4^{\operatorname{pl}} = J_{\{t_0^m,t_1^m, t_0^{m+1}, t_1^{m+1}\}}
		\end{gathered}
	\]
	For this planar model, we again use an $8 \times 8$ transfer matrix whose structure is the same as that in Eq.~(\ref{47}), with entries $a, b, c, d, e, f, g$, and $w$ redefined according to Eq.~(\ref{77}). The eigenvalues of the transfer matrix of this planar model are then given by Eqs.~(\ref{49})--(\ref{52}), and all physical observables are described by the expressions obtained for the PSC model: Theorems~\ref{theorem3} and~\ref{theorem4} apply to this model as well, as do the expressions for pair correlations in Theorem~\ref{theorem6}.
	
	As an illustration, Fig.~\ref{figure10} shows the behavior of the free energy, internal energy, specific heat, and entropy for several parameter choices $J_1^{\operatorname{pl}}, \ldots, J_4^{\operatorname{pl}}$. One of these cases corresponds to the so-called planar gonihedric model, where $J_1^{\operatorname{pl}} = J_2^{\operatorname{pl}} = k$, $J_3^{\operatorname{pl}} = -\frac{k}{2}$, and $J_4^{\operatorname{pl}} = \frac{1-k}{2}$ \cite{Bathas}.

	\begin{figure*}[!htbp]
		\centering
		\begin{minipage}{0.24\linewidth}
			\includegraphics[width=\linewidth]{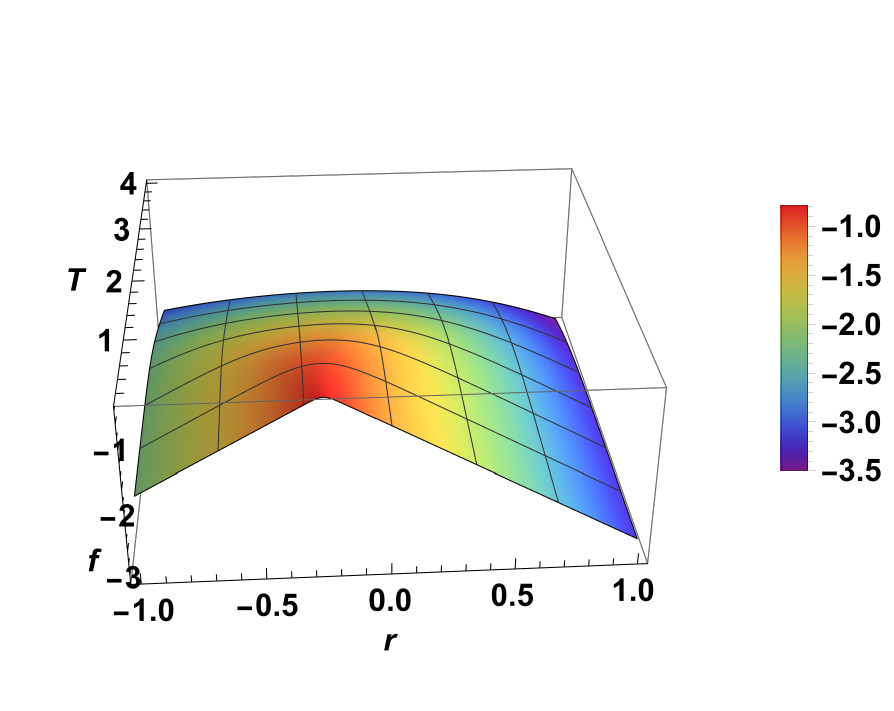} \\ (a)
		\end{minipage}
		\hfill
		\begin{minipage}{0.24\linewidth}
			\includegraphics[width=\linewidth]{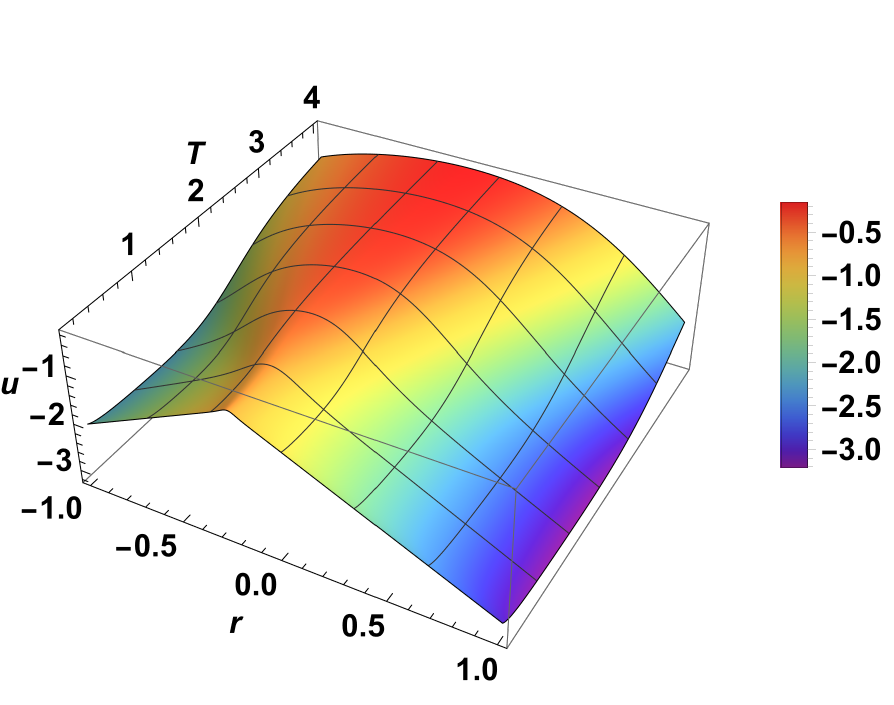} \\ (b)
		\end{minipage}
		\hfill
		\begin{minipage}{0.24\linewidth}
			\includegraphics[width=\linewidth]{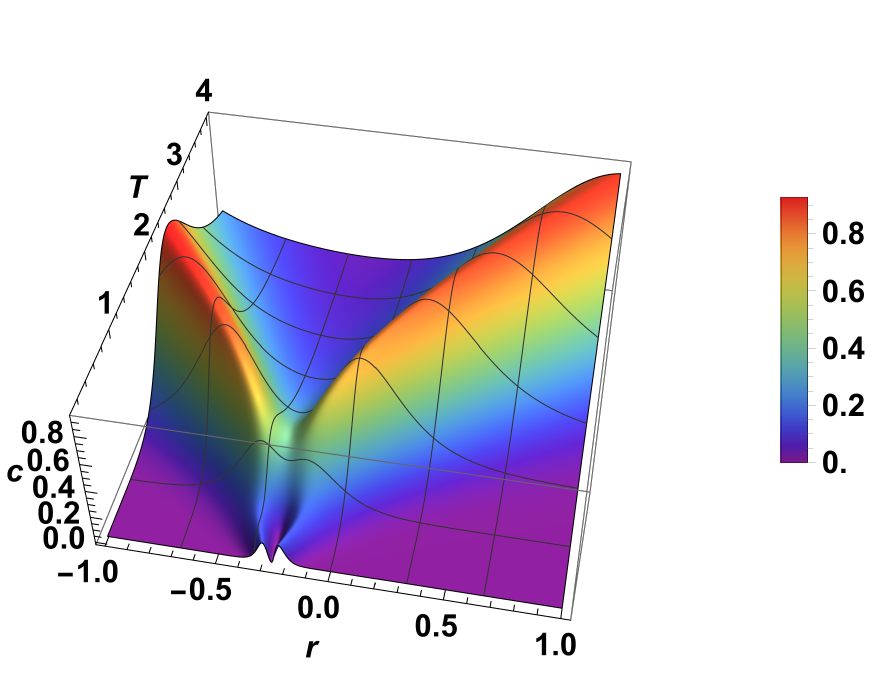}  \\ (c)
		\end{minipage}
		\hfill
		\begin{minipage}{0.24\linewidth}
			\includegraphics[width=\linewidth]{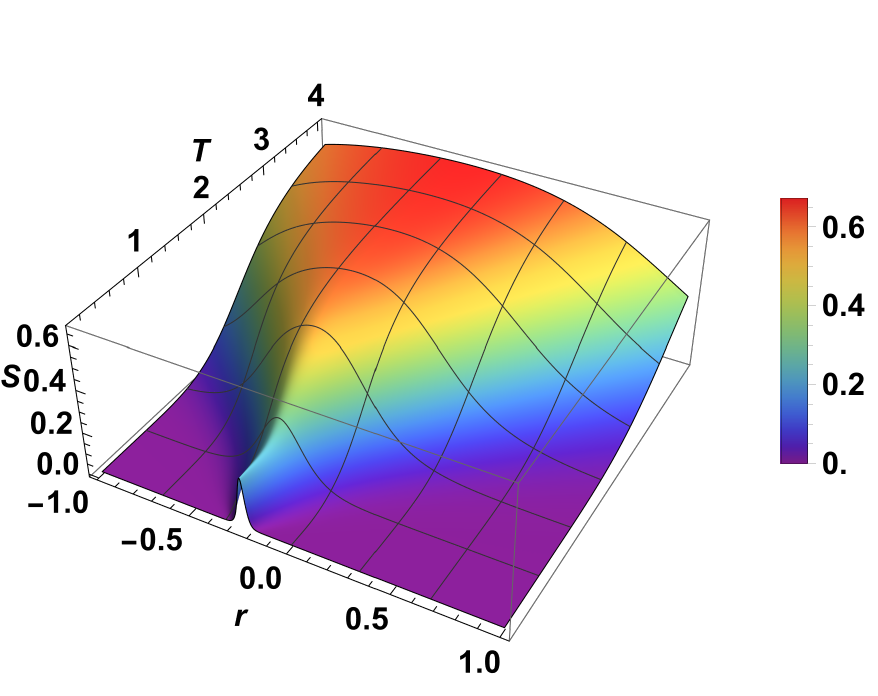}  \\ (d)
		\end{minipage}
		\caption{\label{figure10} Planar model with nearest-neighbor, next-nearest-neighbor, and plaquette interactions. Plots of free energy (a), internal energy (b), specific heat (c) and entropy (d) for the case of $J_1^{\operatorname{pl}} = J_2^{\operatorname{pl}} = \frac{1}{2}, J_4^{\operatorname{pl}} = \frac{1}{4}$ and $J_3^{\operatorname{pl}} = r \in [-1, 1], T\in[0.1, 4]$.}
	\end{figure*}

	\subsection{Width-three planar gonihedric model}
	
	An important special case of the width-three planar model with nearest-neighbor, next-nearest-neighbor, and plaquette interactions considered above is the planar gonihedric model, one choice of parameters shown in Fig.~\ref{figure10}.
	
	The Hamiltonian of one layer of the planar gonihedric model is given by Eq.~(\ref{77}), where \cite{Bathas}:
	\begin{equation} \label{78}
		\begin{gathered}
			J_1^{\operatorname{pl}} = J_2^{\operatorname{pl}} = k, \\
			J_3^{\operatorname{pl}} = -\frac{k}{2}, \\
			J_4^{\operatorname{pl}} = \frac{1-k}{2},
		\end{gathered}
	\end{equation}
	i.e., all spin-spin interactions $J_q^{\operatorname{pl}}$ in the gonihedric model depend on one parameter $k$.
	
	 \textbf{Proposition 1.} Theorems \ref{theorem3}, \ref{theorem4}, and \ref{theorem6} apply to the planar gonihedric model as well as to the more general planar model considered above.
	
	One of the parameter choices shown in Fig.~\ref{figure10} corresponds to the planar gonihedric model (when $J_3^{\operatorname{pl}} = -\frac{1}{4}$).
	
	In Fig.~\ref{figure10}, the distinctive low-temperature behavior occurs precisely at $J_3^{\operatorname{pl}}=-\tfrac{1}{4}$, which corresponds to the planar gonihedric model ($J_3^{\operatorname{pl}}/J_1^{\operatorname{pl}} = -\tfrac{1}{2}$). Additional plots not shown here, for other positive values of $J_1^{\operatorname{pl}} = J_2^{\operatorname{pl}}$ and $J_4^{\operatorname{pl}}$, exhibit the same low-temperature behavior: the distinctive feature appears precisely when $J_3^{\operatorname{pl}}/J_1^{\operatorname{pl}} = -\tfrac{1}{2}$.
	
	To illustrate the low-temperature dependence on the parameter $k$, we plot the free energy, internal energy, specific heat, and entropy as functions of temperature $T$ and $k$ (Fig.~\ref{figure11}). For non-negative values of $k$, the entropy does not approach zero as the temperature tends to zero.

	\begin{figure*}[!htbp]
		\centering
		\begin{minipage}{0.24\linewidth}
			\includegraphics[width=\linewidth]{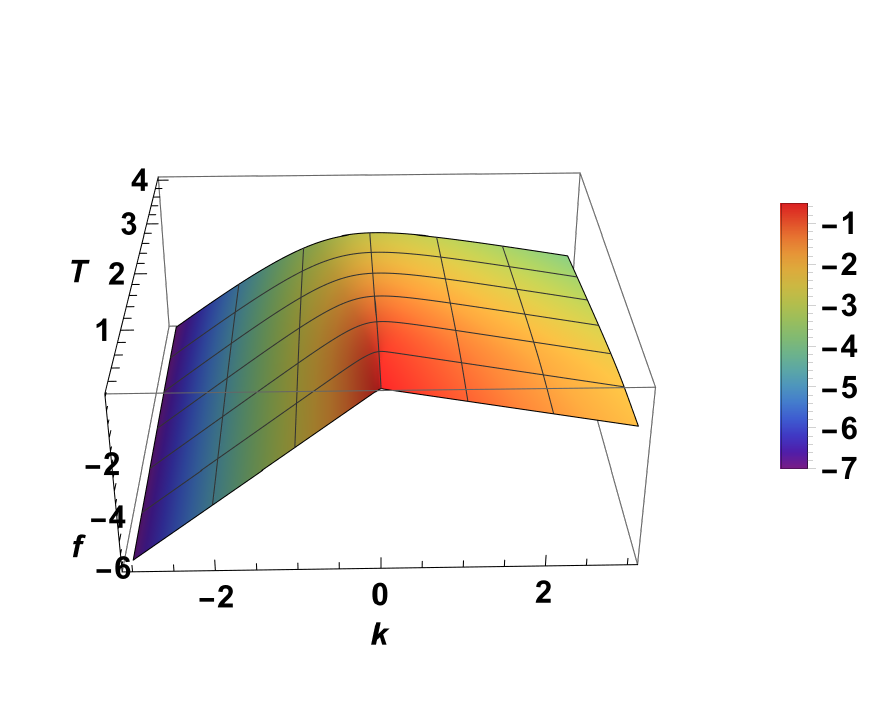}   \\ (a)
		\end{minipage}
		\hfill
		\begin{minipage}{0.24\linewidth}
			\includegraphics[width=\linewidth]{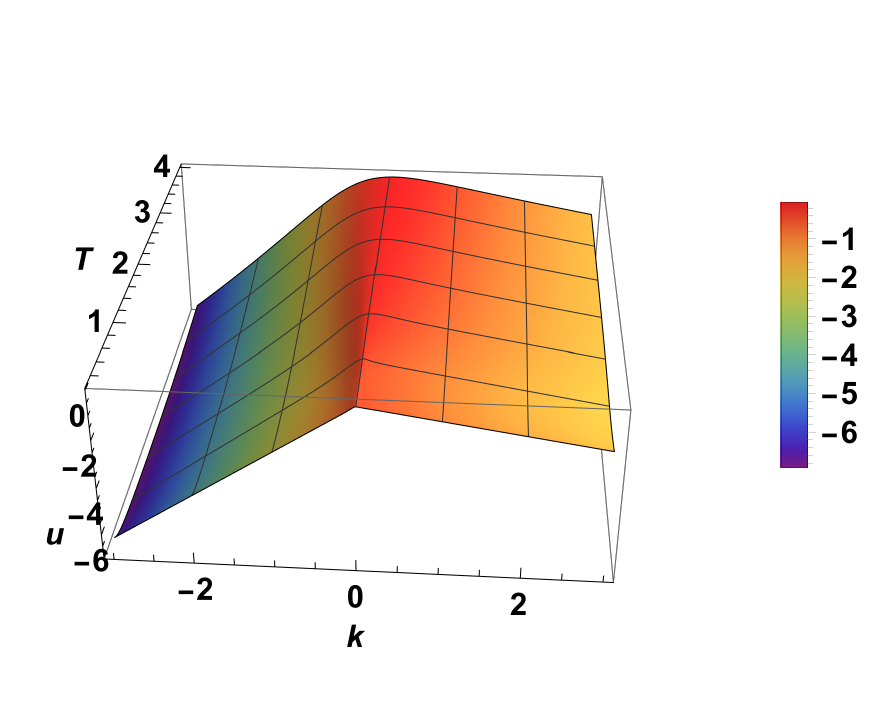}  \\ (b)
		\end{minipage}
		\hfill
		\begin{minipage}{0.24\linewidth}
			\includegraphics[width=\linewidth]{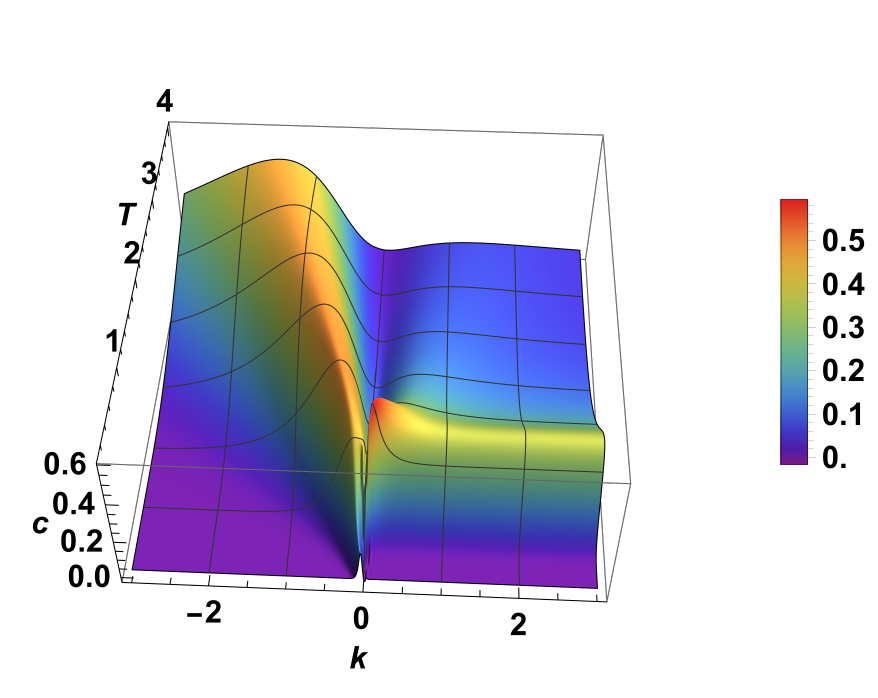}   \\ (c)
		\end{minipage}
		\hfill
		\begin{minipage}{0.24\linewidth}
			\includegraphics[width=\linewidth]{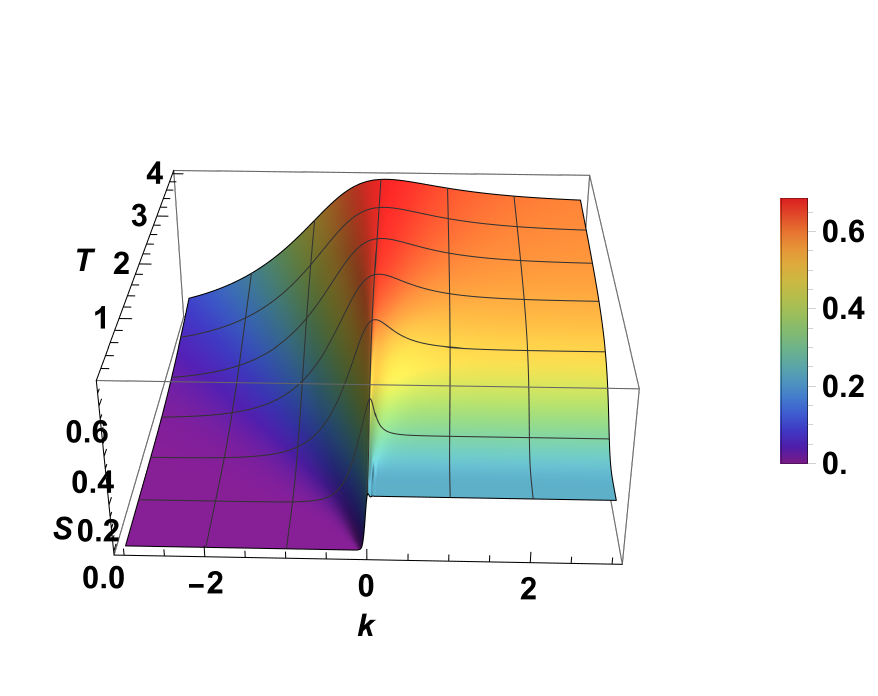}  \\ (d)
		\end{minipage}
		\caption{\label{figure11} Planar gonihedric model. Plots of free energy (a), internal energy (b), specific heat (c) and entropy (d) of the gonihedric model at $k \in [-3, 3], T\in[0.1, 4]$.}
	\end{figure*}
	
	\textbf{Proposition 2.} For the planar gonihedric model, the entropy approaches the following value as $T \to 0+$:
	\begin{equation}
		\lim_{T \to 0+}S(k, T) = 
		\begin{cases}
			\frac{\ln2}{3}, ~~ k \ge 0, \\
			0, ~~ k < 0
		\end{cases}
		\label{entropy9}
	\end{equation}
	
	Let $a = (\sigma_0^m, \sigma_1^m, \sigma_2^m)$ and $b = (\sigma_0^{m+1}, \sigma_1^{m+1}, \sigma_2^{m+1})$ denote the spin configurations of two adjacent layers, so that $a, b \in \{\pm 1\}^3$. The interlayer interaction energy can be written explicitly as
	\begin{eqnarray*}
		H(a,b,k) &=& \\
		&& -\frac{k}{2}\bigg[(a_0 a_1 + a_1 a_2 + a_2 a_0) + (b_0 b_1 + b_1 b_2 + b_2 b_0)\bigg] \\
		&& - k\big(a_0 b_0 + a_1 b_1 + a_2 b_2\big) \\
		&& + \frac{k}{2}(a_0 b_1 + a_1 b_2 + a_2 b_0 + a_0 b_2 + a_2 b_1 + a_1 b_0) \\
		&& - \frac{1-k}{2}(a_0 a_1 b_0 b_1 + a_1 a_2 b_1 b_2 + a_2 a_0 b_2 b_0).
	\end{eqnarray*}

	Direct enumeration of the 64 pairs $(a,b)$ shows that $H(a,b)$ takes exactly four distinct values, each linear in $k$:
	$$H(a,b) \in \Bigl\{ -\tfrac{3}{2}(k+1),\; -\tfrac{3k-1}{2},\; \tfrac{5k+1}{2},\; \tfrac{13k-3}{2} \Bigr\}.$$
	Define
	$$H_{\min}(k) := \min_{a,b} H(a,b,k).$$

	The energy gap $\Delta(k)$ separating $H_{\min}(k)$ from the next-lowest value is strictly positive for all $k$. The transfer matrix can therefore be written as
	$$\theta(T) = e^{-H_{\min}/T}\bigl(A + R(T)\bigr),$$
	where the adjacency matrix $A = A(k)$ of the minimum-energy transitions is the $\{0,1\}$-matrix
	$$A_{ab} =
	\begin{cases}
		1, & H_{ab} = H_{\min}(k), \\
		0, & \text{otherwise};
	\end{cases}$$
	explicitly,
	\[
	A(k<0) = \begin{pmatrix}
		0 & 0 & 0 & 0 & 0 & 0 & 0 & 0 \\
		0 & 0 & 0 & 0 & 0 & 0 & 1 & 0 \\
		0 & 0 & 0 & 0 & 0 & 1 & 0 & 0 \\
		0 & 0 & 0 & 0 & 1 & 0 & 0 & 0 \\
		0 & 0 & 0 & 1 & 0 & 0 & 0 & 0 \\
		0 & 0 & 1 & 0 & 0 & 0 & 0 & 0 \\
		0 & 1 & 0 & 0 & 0 & 0 & 0 & 0 \\
		0 & 0 & 0 & 0 & 0 & 0 & 0 & 0
	\end{pmatrix},
	\]
	\[
	A(k=0) = \begin{pmatrix}
		1 & 0 & 0 & 0 & 0 & 0 & 0 & 1 \\
		0 & 1 & 0 & 0 & 0 & 0 & 1 & 0 \\
		0 & 0 & 1 & 0 & 0 & 1 & 0 & 0 \\
		0 & 0 & 0 & 1 & 1 & 0 & 0 & 0 \\
		0 & 0 & 0 & 1 & 1 & 0 & 0 & 0 \\
		0 & 0 & 1 & 0 & 0 & 1 & 0 & 0 \\
		0 & 1 & 0 & 0 & 0 & 0 & 1 & 0 \\
		1 & 0 & 0 & 0 & 0 & 0 & 0 & 1
	\end{pmatrix},
	\]
	\[
	A(k>0) = \begin{pmatrix}
		1 & 0 & 0 & 0 & 0 & 0 & 0 & 1 \\
		0 & 1 & 0 & 0 & 0 & 0 & 0 & 0 \\
		0 & 0 & 1 & 0 & 0 & 0 & 0 & 0 \\
		0 & 0 & 0 & 1 & 0 & 0 & 0 & 0 \\
		0 & 0 & 0 & 0 & 1 & 0 & 0 & 0 \\
		0 & 0 & 0 & 0 & 0 & 1 & 0 & 0 \\
		0 & 0 & 0 & 0 & 0 & 0 & 1 & 0 \\
		1 & 0 & 0 & 0 & 0 & 0 & 0 & 1
	\end{pmatrix}.
	\]
	The remainder satisfies $0 \le R_{ab}(T) \le \exp(-\Delta/T)$, and consequently $\|R(T)\|_\infty \le 8\exp(-\Delta/T)$. As a result,
	$$\lambda_{\max}(T) = e^{-H_{\min}/T}\mu(T), \qquad \mu(T) = \rho\bigl(A + R(T)\bigr).$$
	As $T \to 0^+$, $R(T) \to 0$ entrywise, and hence in every matrix norm. The spectral radius $\rho$ is a continuous function on the space of real $8 \times 8$ matrices: by \cite{Horn}, the map $M \mapsto \rho(M)$ is continuous because it equals the maximum modulus of the roots of the characteristic polynomial, whose coefficients are continuous polynomials in the entries of $M$, while the roots depend continuously on these coefficients as a multiset.

	Because the mirror-symmetry condition~(\ref{mirror_hamiltonian}) holds, the transfer matrix $\theta(T)$ is real symmetric, and Weyl's inequalities \cite{Horn} yield
	$$|\mu(T) - \rho(A)| \le \|R(T)\|_2 \le C\exp(-\Delta/T).$$
	Consequently,
	$$\ln\lambda_{\max} = -\frac{H_{\min}}{T} + \ln\mu(T),$$
	$$\frac{\partial}{\partial T}\ln\lambda_{\max} = \frac{H_{\min}}{T^2} + \frac{\mu'(T)}{\mu(T)},$$
	$$T\frac{\partial}{\partial T}\ln\lambda_{\max} = \frac{H_{\min}}{T} + \frac{T\mu'(T)}{\mu(T)}.$$

	It remains to prove that the residual term satisfies $\frac{T\mu'(T)}{\mu(T)} \to 0 \quad \text{as } T\to 0^+$.

	For any fixed $T > 0$, the transfer matrix $\theta(T)$ has all positive entries. Consequently, the rescaled matrix
	$$B(T) = A + R(T)$$
	also has all positive entries:
	$$B_{ij}(T) = \exp\!\left(-\frac{H_{ij} - H_{\min}}{T}\right) > 0 \quad \text{for all } i,j.$$
	(At positions where $H_{ij} = H_{\min}$, the entry is exactly $1$; at all other positions, it is a positive number that vanishes as $T \to 0^+$.) By the Perron--Frobenius theorem for positive matrices \cite{Horn}, the spectral radius $\mu(T) = \rho(B(T))$ is a simple eigenvalue with a strictly positive eigenvector, and every other eigenvalue $\lambda$ satisfies $|\lambda| < \mu(T)$. Hence $\mu(T)$ is simple for every $T > 0$, not only for small $T$. Since each entry of $B(T)$ is a finite sum of exponentials, $B(T)$ is an analytic matrix-valued function of $T$ on $(0,+\infty)$, and analytic perturbation theory therefore implies that $\mu(T)$ is analytic --- and in particular smooth --- on $(0, +\infty)$.

	The first-order perturbation formula
	$$\mu'(T) = v(T)^{\mathrm{T}} \frac{dB}{dT}\, v(T),$$
	in which $v(T)$ is the normalized eigenvector associated with $\mu(T)$, is therefore valid for all $T > 0$. Since $|v^{\mathrm{T}} M v| \le \|M\|_2$ for every unit vector $v$, we have
	$$|\mu'(T)| \le \left\|\frac{dB}{dT}\right\|_2.$$
	The entries of $dB/dT = dR/dT$ are bounded above by $(M_{\max}/T^2)\exp(-\Delta/T)$, where $M_{\max}$ denotes the maximum energy difference. Hence
	$$\left\|\frac{dB}{dT}\right\|_2 \le C\,T^{-2}\exp(-\Delta/T)$$
	with a constant $C$ independent of $T$, so that
	$$\left|\frac{T\mu'(T)}{\mu(T)}\right| = O\!\left(T^{-1}\exp(-\Delta/T)\right) \to 0 \quad \text{as } T \to 0^+.$$
	The continuity of the spectral radius $\rho(M)$ as a function of $M$ \cite{Horn} then guarantees $\mu(T) \to \rho(A)$.

	\subsection*{Limits and physical interpretation}

	Combining the asymptotic analysis with the explicit spectral radii, we obtain
	$$\lim_{T\to 0^+} S(k,T) =
	\begin{cases}
		\tfrac{1}{3}\ln 2 \approx 0.231\,049\dots, & k \ge 0, \\
		0, & k < 0.
	\end{cases}$$

	Physically, the nonzero residual entropy at $k \ge 0$ originates from an exponential ground-state degeneracy $\sim 2^L$: each layer can independently take either of two ferromagnetic configurations without an energy penalty (the flat interface between adjacent layers carries zero energy --- a hallmark of gonihedric models). In the quasi-one-dimensional geometry (fixed width $3$ and length $L \to \infty$), this yields a finite entropy density of $(\ln 2)/3$. For $k < 0$, the minimum-energy transitions enforce a rigid deterministic alternation $a \to -a$ within one of three possible period-2 cycles; the number of periodic ground state sequences is therefore finite or grows at most polynomially, giving zero entropy per spin.

	These limits accurately reproduce the nonzero low-temperature entropy seen in the plots of the planar gonihedric model (Fig.~\ref{figure11}) for $k \ge 0$ and its vanishing for $k < 0$.
	
	In this case, using Eq. (\ref{26}), the entropy in the limit $T \to 0+$ is given by
	\begin{equation}
		\lim_{T \to 0+} S(k, T) = \frac{1}{3} \ln \rho\big(A(k)\big),
		\label{entropy7}
	\end{equation}
	where 
	\begin{equation}
		\rho\big(A(k)\big) = 
		\begin{cases}
			2, ~~ k \ge 0, \\
			1, ~~ k < 0.
		\end{cases}
		\label{entropy8}
	\end{equation}
	
	\section{\label{sec6} Width-three planar triangular model}
	
	We next consider a width-three planar triangular model with distinct nearest-neighbor couplings, three-spin interactions involving neighboring triangles, and an external field. As in the previous planar case, this model is obtained by unfolding the lattice shown in Fig.~\ref{figure1} into the plane and imposing the boundary conditions $t_3^m \equiv t_0^m$ and $t_3^{m+1} \equiv t_0^{m+1}$. For clarity, Fig.~\ref{figure9} also shows the $(m+2)$th layer.
	\begin{figure}[h]
		\includegraphics[width=7.2cm]{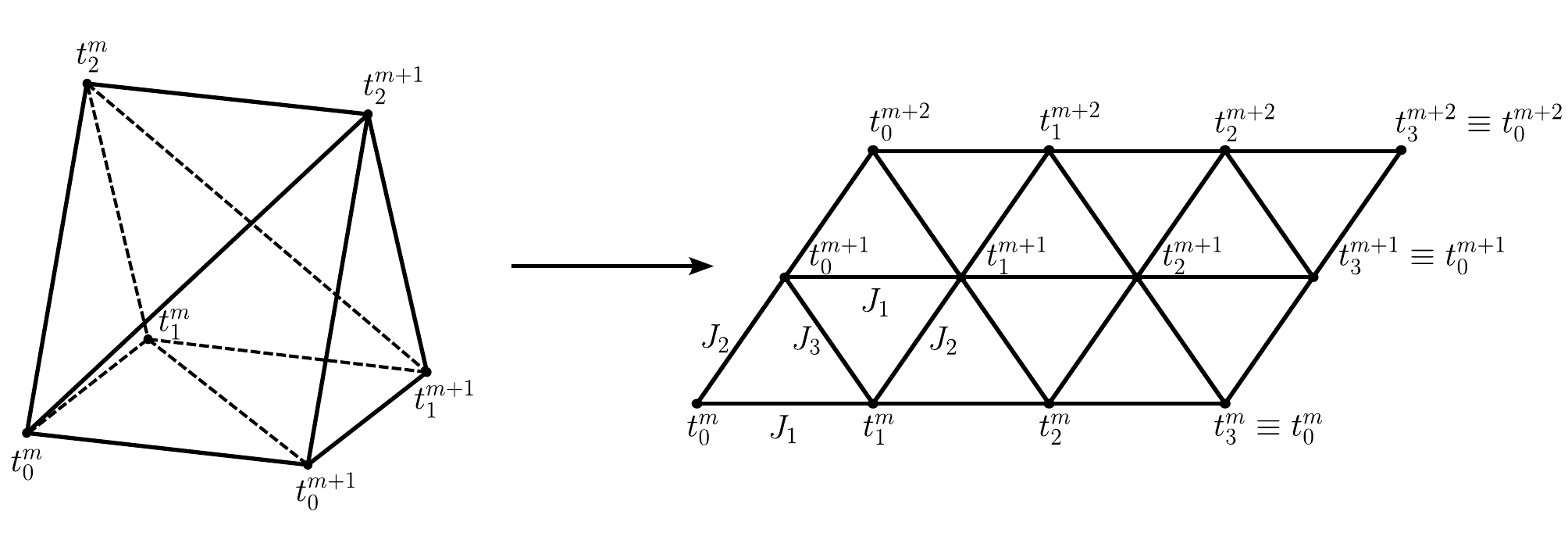}
		\caption{\label{figure9} Planar triangular Ising model obtained by unwrapping the three-chain Ising tube into the plane.}
	\end{figure}
	
	Here, $J_1^{\Delta}, J_2^{\Delta}, J_3^{\Delta}$ are the nearest-neighbor couplings within a single triangle, $J_4^{\Delta}$ and $J_5^{\Delta}$ denote the three-spin interactions between neighboring triangles (not indicated in Fig.~\ref{figure9}), and $h$ represents the external field.
	
	The Hamiltonian is given by Eq.~(\ref{5}), where the Hamiltonian of one layer of the model is of the form:
	
	\begin{widetext}
		\begin{eqnarray} \label{79}
			&&\mathcal{H}^{(m)} \nonumber \\
			&&=-\frac{J_1^{\Delta}}{2}\bigg(
			\sigma_{0}^m\sigma_{1}^m +
			\sigma_{1}^m\sigma_{2}^m +
			\sigma_{0}^m\sigma_{2}^m +
			\sigma_{0}^{m+1}\sigma_{1}^{m+1} +
			\sigma_{1}^{m+1}\sigma_{2}^{m+1} +
			\sigma_{0}^{m+1} \sigma_{2}^{m+1}
			\bigg) 
			-J_2^{\Delta} \bigg(
			\sigma_{0}^m\sigma_{0}^{m+1} +
			\sigma_{1}^m\sigma_{1}^{m+1} +
			\sigma_{2}^m\sigma_{2}^{m+1} 
			\bigg) \nonumber \\
			&&-J_3^{\Delta} \bigg(
			\sigma_{0}^m\sigma_{2}^{m+1} +
			\sigma_{2}^m\sigma_{1}^{m+1} +
			\sigma_{1}^m\sigma_{0}^{m+1}
			\bigg) 
			-J_4^{\Delta} \bigg(
			\sigma_{0}^m\sigma_{1}^m\sigma_{0}^{m+1} +
			\sigma_{1}^m\sigma_{2}^m\sigma_{1}^{m+1} +
			\sigma_{2}^m\sigma_{0}^m\sigma_{2}^{m+1} 
			\bigg) \nonumber \\
			&&-J_5^{\Delta} \bigg(
			\sigma_{1}^m\sigma_{0}^{m+1}\sigma_{1}^{m+1} +
			\sigma_{2}^m\sigma_{1}^{m+1}\sigma_{2}^{m+1} +
			\sigma_{0}^m\sigma_{2}^{m+1}\sigma_{0}^{m+1} 
			\bigg) 
			-\frac{h}{2}\bigg(
			\sigma_{0}^m + \sigma_{1}^m + \sigma_{2}^m +
			\sigma_{0}^{m+1} + \sigma_{1}^{m+1} + \sigma_{2}^{m+1}
			\bigg),
		\end{eqnarray}
	\end{widetext}
	where the coupling constants correspond to the following parameters of the TCGIT model, as given in Eqs.~(\ref{H2}), and (\ref{H3}):
	\[
	\begin{gathered}
		J_1^{\Delta} = J_{\{t_0^m,t_1^m\}}, \\
		J_2^{\Delta} = J_{\{t_0^m,t_0^{m+1}\}}, \\
		J_3^{\Delta} = J_{\{t_0^m,t_2^{m+1}\}}, \\
		J_4^{\Delta} = J_{\{t_0^m,t_1^m, t_0^{m+1}\}}, \\
		J_5^{\Delta} = J_{\{t_0^m,t_0^{m+1}, t_2^{m+1}\}}.
	\end{gathered}
	\]
	
	The transfer matrix of this model contains 22 distinct elements. Its eigenvalues are again given by Eqs.~(\ref{32})--(\ref{35}).
	
	\textbf{Proposition 3.}  The conclusions of Theorems \ref{theorem1} and \ref{theorem2} remain valid for the planar triangular model with Hamiltonian (\ref{79}).
	
	As in the two preceding cases, Fig.~\ref{figure12} shows the free energy, internal energy, specific heat, magnetization, magnetic susceptibility, and entropy in the thermodynamic limit for parameter values with $J_q^{\Delta} < 0$, except for $J_3^{\Delta}$, which may take positive values.

	\begin{figure*}[!htbp]
		\centering
		\begin{minipage}{0.24\linewidth}
			\includegraphics[width=\linewidth]{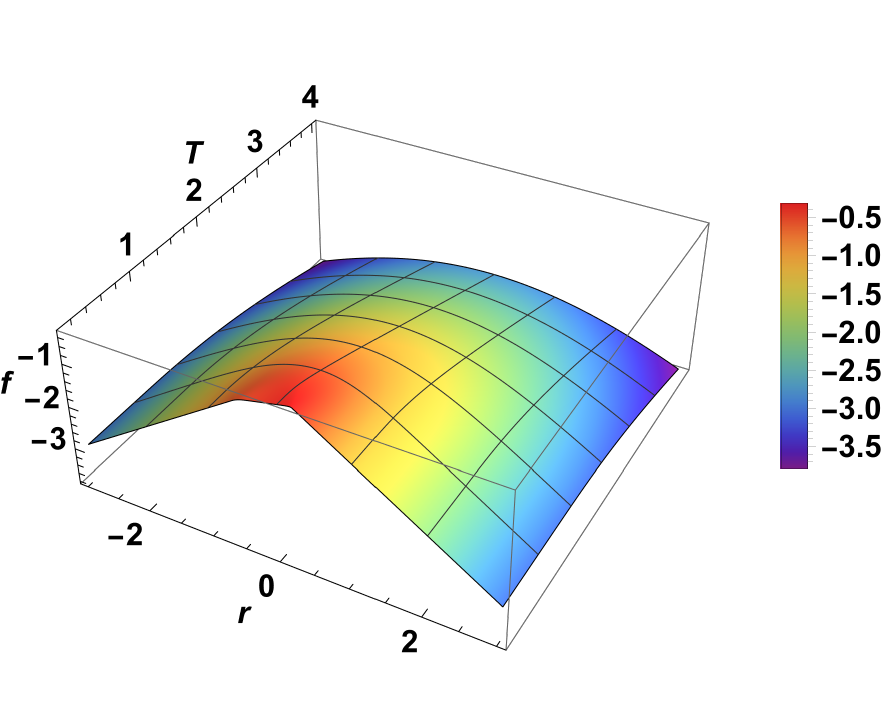} \\ (a)
		\end{minipage}
		\hfill
		\begin{minipage}{0.24\linewidth}
			\includegraphics[width=\linewidth]{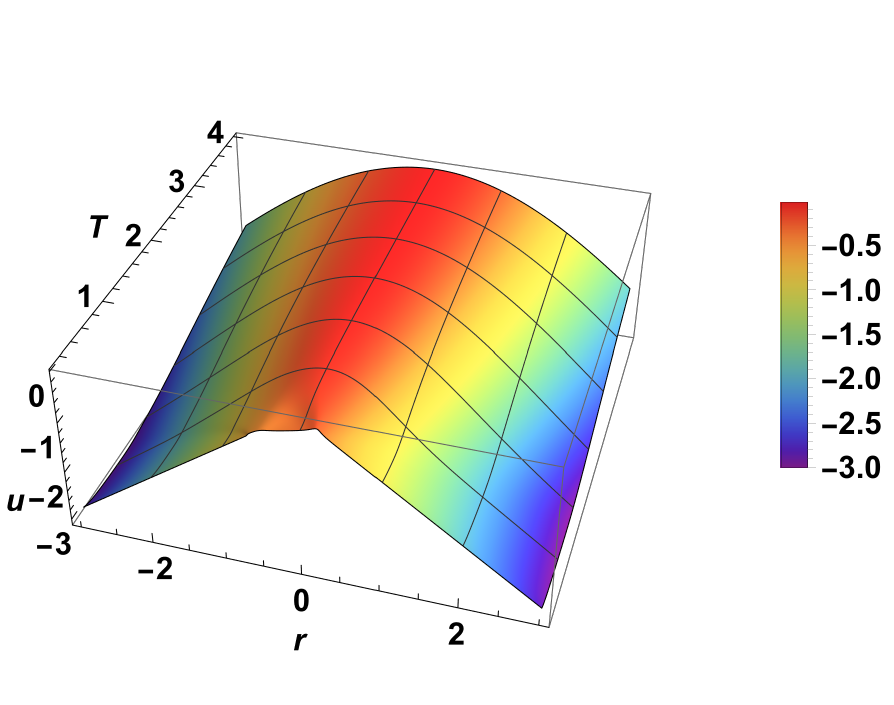} \\ (b)
		\end{minipage}
		\hfill
		\begin{minipage}{0.24\linewidth}
			\includegraphics[width=\linewidth]{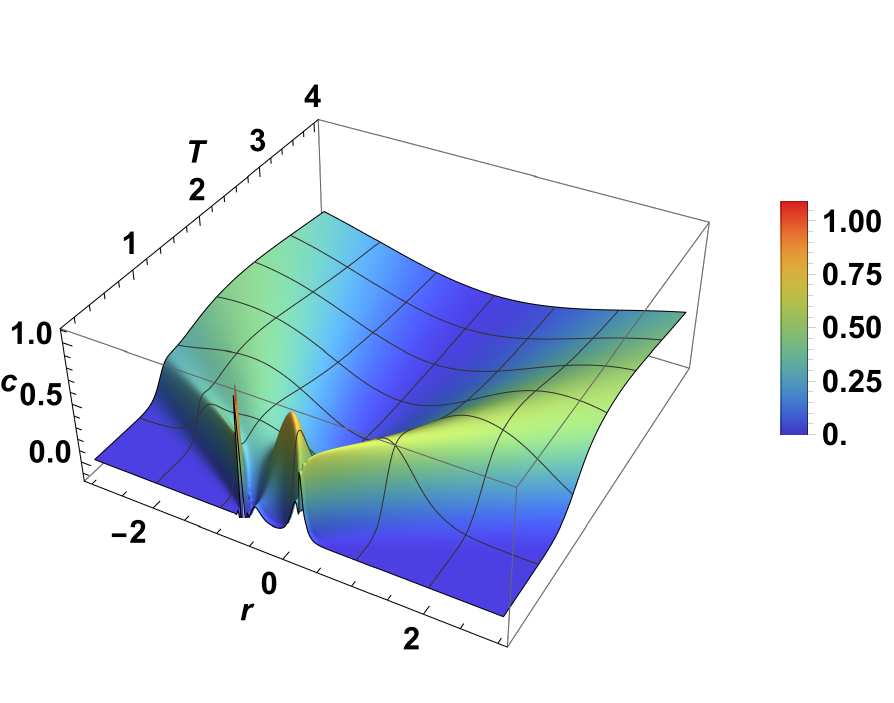} \\ (c)
		\end{minipage}
		\hfill
		\begin{minipage}{0.24\linewidth}
			\includegraphics[width=\linewidth]{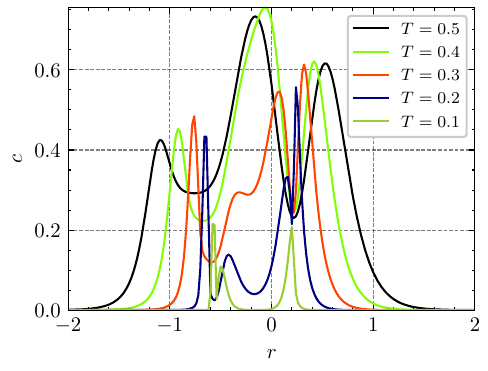} \\ (d)
		\end{minipage}
		\\[1ex]
		\begin{minipage}{0.32\linewidth}
			\includegraphics[width=\linewidth]{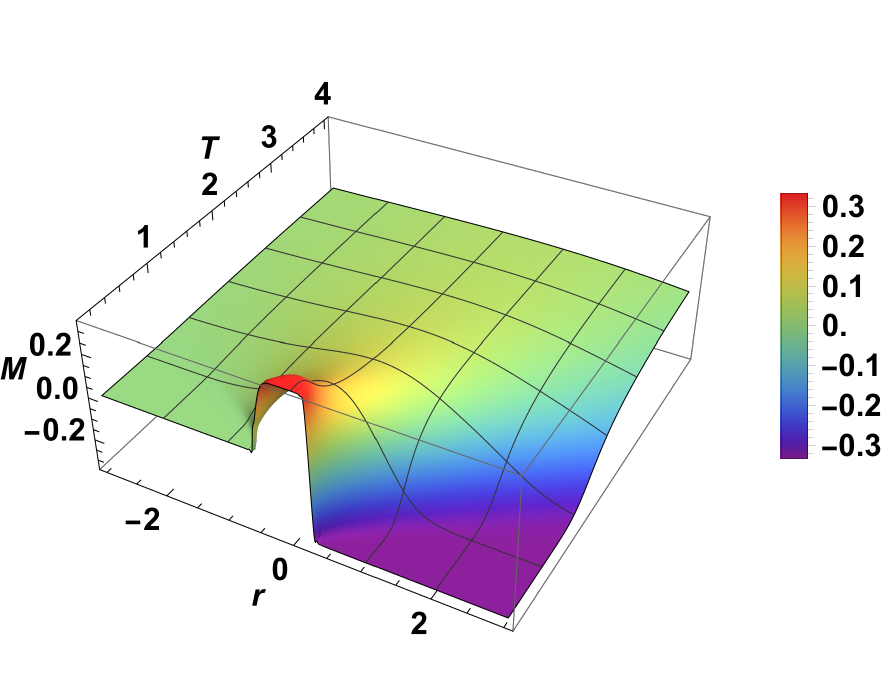}  \\ (e)
		\end{minipage}
		\hfill
		\begin{minipage}{0.32\linewidth}
			\includegraphics[width=\linewidth]{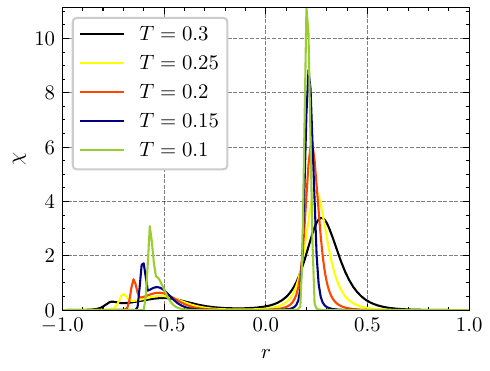} \\ (f)
		\end{minipage}
		\hfill
		\begin{minipage}{0.32\linewidth}
			\includegraphics[width=\linewidth]{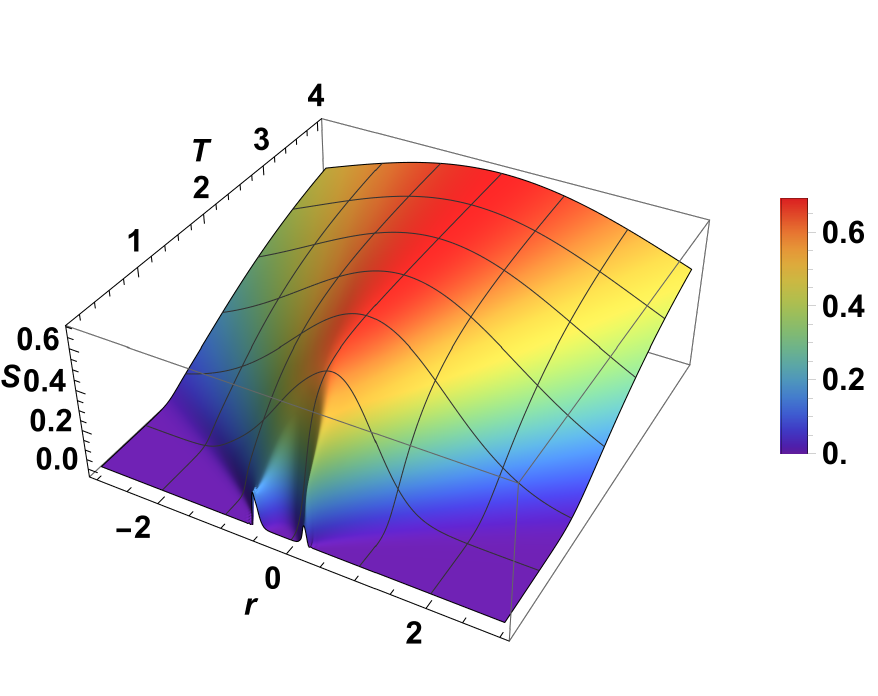} \\ (g)
		\end{minipage}
		\caption{\label{figure12} Planar triangular model. Plots of the free energy (a), internal energy (b), specific heat (c), selected cross sections of the specific heat (d), magnetization (e), selected cross sections of the susceptibility (f), and entropy (g) for the triangular model with $J_1^{\Delta} = -0.1, J_2^{\Delta} = -0.11, J_4^{\Delta} = -0.12, J_5^{\Delta} =- 0.13, h = 0.15$, $J_3^{\Delta} = r, ~T\in[0.1, 4]$.}
	\end{figure*}

	\section{\label{sec7} Theorem Proofs}
	\subsection{Proofs of Theorems \ref{theorem1} and \ref{theorem2}}
	
	To prove Theorems~\ref{theorem1} and~\ref{theorem2}, we determine the full spectrum of the transfer matrix $\theta$ in Eq.~(\ref{8}).

	A permutation matrix $D_1$ implementing a rotation by $\frac{2\pi}{3}$ commutes with $\theta$, $D_1: \big\{ \sigma_0^m \rightarrow \sigma_1^{m}, \sigma_1^m \rightarrow \sigma_2^{m}, \sigma_2^m \rightarrow \sigma_0^{m}, m = 0, \ldots, L-1\big\}$:
	\begin{equation} \label{80}
		D_1 = 
		\begin{pmatrix}
			1&0&0&0&0&0&0&0\\
			0&0&1&0&0&0&0&0\\
			0&0&0&0&1&0&0&0\\
			0&0&0&0&0&0&1&0\\
			0&1&0&0&0&0&0&0\\
			0&0&0&1&0&0&0&0\\
			0&0&0&0&0&1&0&0\\
			0&0&0&0&0&0&0&1
		\end{pmatrix},
	\end{equation}
	its eigenvalues and corresponding eigenspace basis vectors are listed in Table~\ref{table2}.
	
	\begin{table}[b]
		\caption{\label{table2}
			Eigenvalues and eigenspaces of the matrix $D_1$
		}
		\begin{ruledtabular}
			\begin{tabular}{c|c}
				Eigenvalue & The basis vectors of the eigenspace \\
				\colrule
				1 & $
				\begin{gathered}
					(1 ~~0 ~~0 ~~0 ~~0 ~~0 ~~0 ~~0)^{\mathrm{T}} \\
					(0 ~~1 ~~1 ~~0 ~~1 ~~0 ~~0 ~~0)^{\mathrm{T}} \\
					(0 ~~0 ~~0 ~~1 ~~0 ~~1 ~~1 ~~0)^{\mathrm{T}} \\
					(0 ~~0 ~~0 ~~0 ~~0 ~~0 ~~0 ~~1)^{\mathrm{T}}
				\end{gathered}
				$ \\
				\colrule
				$e^{\frac{2\pi}{3} i }$ &
				$
				\begin{gathered}
					(0 ~~0 ~~0 ~~e^{-\frac{2\pi}{3}i} ~~0 ~~e^{\frac{2\pi}{3}i} ~~1 ~~0)^{\mathrm{T}} \\
					(0 ~~e^{\frac{2\pi}{3}i} ~~e^{-\frac{2\pi}{3}i} ~~0 ~~1 ~~0 ~~0 ~~0)^{\mathrm{T}}
				\end{gathered}
				$ \\
				\colrule
				$e^{-\frac{2\pi}{3} i }$ &
				$
				\begin{gathered}
					(0 ~~0 ~~0 ~~e^{\frac{2\pi}{3}i} ~~0 ~~e^{-\frac{2\pi}{3}i} ~~1 ~~0)^{\mathrm{T}} \\
					(0 ~~e^{-\frac{2\pi}{3}i} ~~e^{\frac{2\pi}{3}i} ~~0 ~~1 ~~0 ~~0 ~~0)^{\mathrm{T}}
				\end{gathered}
				$
			\end{tabular}
		\end{ruledtabular}
	\end{table}
	
	Because commuting matrices preserve each other's eigenspaces \cite{Horn}, and any linear combination of vectors within a given eigenspace is again an eigenvector of that eigenspace, the eigenvectors of the transfer matrix $\theta$ fall into three symmetry classes:
	\begin{equation} \label{81}
		\begin{gathered}
			\vec{y}_1 = (x_1 ~~x_2 ~~x_2 ~~x_3 ~~x_2 ~~x_3 ~~x_3 ~~x_4)^{\mathrm{T}},\\
			\vec{y}_2 = (0 ~~e^{\frac{2\pi}{3}i}x_1 ~~e^{-\frac{2\pi}{3}i}x_1 ~~e^{-\frac{2\pi}{3}i}x_2 ~~x_1 ~~e^{\frac{2\pi}{3}i}x_2 ~~x_2 ~~0)^{\mathrm{T}},\\
			\vec{y}_3 = (0 ~~e^{-\frac{2\pi}{3}i}x_1 ~~e^{\frac{2\pi}{3}i}x_1 ~~e^{\frac{2\pi}{3}i}x_2 ~~x_1 ~~e^{-\frac{2\pi}{3}i}x_2 ~~x_2 ~~0)^{\mathrm{T}}.
		\end{gathered}
	\end{equation}
	
	Thus, the eight eigenvectors of $\theta$ fall into these three symmetry classes, namely: four eigenvectors of the form $\overrightarrow{y_1}$, two of the form $\overrightarrow{y_2}$, and two eigenvectors of the form $\overrightarrow{y_3}$.
	
	From the structure of the eigenvectors (\ref{81}) and the structure of the transfer matrix $\theta$ (\ref{8}), it follows that four eigenvalues of $\theta$ coincide with those of $\tau^1$ (\ref{16}), two with those of $\tau^2$ (\ref{17}), and two more with those of $\tau^3$ (\ref{18}).
	
	The characteristic polynomial of $\tau^1$~(\ref{16}) is the quartic~(\ref{20}). In general, we solve it by Ferrari's method; four of the eight eigenvalues of the transfer matrix $\theta$ have the explicit expressions~(\ref{32}) and~(\ref{33}).
	
	The characteristic equations of the matrices $\tau^2$ (\ref{17}), $\tau^3$ (\ref{18}) are quadratic, respectively:
	\[
	\begin{gathered}
		\lambda^2 + b_2\lambda + c_2 = 0,\\
		\lambda^2 + b_3\lambda + c_3 = 0,
	\end{gathered}
	\]
	whose coefficients $b_2, c_2, b_3$, and $c_3$ are given by Eq.~(\ref{36}); therefore, the corresponding eigenvalues are given by Eqs.~(\ref{34}), (\ref{35}).
	
	Since $\theta$ is a real matrix with strictly positive entries, the Perron--Frobenius theorem implies that it has a unique largest eigenvalue $\lambda_{\max}>0$, with an eigenvector whose components are strictly positive. Such an eigenvector cannot belong to the subspaces with complex eigenvalues and must therefore lie in the $\tau^1$ block, i.e., in the symmetry class $\vec{y}_1$. Hence $\lambda_{\max}(h,T)$ satisfies Eq.~(\ref{20}), and comparison with Eqs.~(\ref{32})--(\ref{35}) shows that $\lambda_{\max}$ is determined by Eq.~(\ref{27}).

	\subsection{Proofs of Theorems \ref{theorem3}, \ref{theorem4}}
	
	In this particular case, in addition to commuting with matrix $D_1$ (\ref{80}), the transfer matrix $\theta$ commutes with $D_2$ because of its central symmetry:
	\begin{equation} \label{82} 
		D_2 = 
		\begin{pmatrix}
			0&0&0&0&0&0&0&1\\
			0&0&0&0&0&0&1&0\\
			0&0&0&0&0&1&0&0\\
			0&0&0&0&1&0&0&0\\
			0&0&0&1&0&0&0&0\\
			0&0&1&0&0&0&0&0\\
			0&1&0&0&0&0&0&0\\
			1&0&0&0&0&0&0&0
		\end{pmatrix},
	\end{equation}
	the eigenvalues of $D_2$ and bases of the corresponding eigenspaces are listed in Table \ref{table3}.
	
	\begin{table}[b]
		\caption{\label{table3}
			Eigenvalues and eigenspaces of matrix $D_2$
		}
		\begin{ruledtabular}
			\begin{tabular}{c|c}
				Eigenvalue & The basis vectors of the eigenspace  \\
				\colrule
				1 & $
				\begin{gathered}
					(1 ~~0 ~~0 ~~0 ~~0 ~~0 ~~0 ~~1)^{\mathrm{T}} \\
					(0 ~~1 ~~0 ~~0 ~~0 ~~0 ~~1 ~~0)^{\mathrm{T}}\\
					(0 ~~0 ~~1 ~~0 ~~0 ~~1 ~~0 ~~0)^{\mathrm{T}} \\
					(0 ~~0 ~~0 ~~1 ~~1 ~~0 ~~0 ~~0)^{\mathrm{T}}
				\end{gathered}
				$ \\
				\colrule
				-1 &
				$
				\begin{gathered}
					(-1 ~~0 ~~0 ~~0 ~~0 ~~0 ~~0 ~~1)^{\mathrm{T}} \\
					(0 ~~-1 ~~0 ~~0 ~~0 ~~0 ~~1 ~~0)^{\mathrm{T}} \\
					(0 ~~0 ~~-1 ~~0 ~~0 ~~1 ~~0 ~~0)^{\mathrm{T}} \\
					(0 ~~0 ~~0 ~~-1 ~~1 ~~0 ~~0 ~~0)^{\mathrm{T}}
				\end{gathered}
				$
			\end{tabular}
		\end{ruledtabular}
	\end{table}
	
	Using the invariance of eigenspaces under commuting matrices \cite{Horn}, and assuming that all eigenvectors of matrix $D_2$ are either centrally symmetric or centrally antisymmetric, and the transfer matrix is centrally symmetric for this case, let us write down the first two possible kinds of eigenvectors of the transfer matrix, which will be either centrally symmetric or centrally antisymmetric \cite{Andrew}. Accordingly, the first two possible forms of the eigenvectors are
	\begin{equation} \label{83}
		\begin{gathered}
			\vec{y}_4 = (x_1 ~~x_2 ~~x_2 ~~x_2 ~~x_2 ~~x_2 ~~x_2 ~~x_1)^{\mathrm{T}}, \\
			\vec{y}_5 = (x_1 ~~x_2 ~~x_2 ~~-x_2 ~~x_2 ~~-x_2 ~~-x_2 ~~-x_1)^{\mathrm{T}}.
		\end{gathered}
	\end{equation}
	
	Combining this constraint with the commutation relation between $\theta$ and $D_1$, we obtain, up to an overall scalar factor, the following additional eigenvector forms constructed from the eigenspaces listed in Table \ref{table2}:
	\begin{equation} \label{84}
		\begin{gathered}
			\vec{y}_6 = (0 ~~e^{\frac{2\pi}{3}i}x ~~e^{-\frac{2\pi}{3}i}x ~~x ~~x ~~e^{-\frac{2\pi}{3}i}x ~~e^{\frac{2\pi}{3}i}x ~~0)^{\mathrm{T}}, \\
			\vec{y}_7 = (0 ~-e^{\frac{2\pi}{3}i}x ~-e^{-\frac{2\pi}{3}i}x ~x ~-x ~e^{-\frac{2\pi}{3}i}x ~e^{\frac{2\pi}{3}i}x ~0)^{\mathrm{T}},\\
			\vec{y}_8 = (0 ~~x ~~e^{-\frac{2\pi}{3}i}x ~~e^{\frac{2\pi}{3}i}x ~~e^{\frac{2\pi}{3}i}x ~~e^{-\frac{2\pi}{3}i}x ~~x ~~0)^{\mathrm{T}}, \\
			\vec{y}_9 =  (0 ~-x ~-e^{-\frac{2\pi}{3}i}x ~~e^{\frac{2\pi}{3}i}x ~~-e^{\frac{2\pi}{3}i}x ~~e^{-\frac{2\pi}{3}i}x ~~x ~~0)^{\mathrm{T}}.
		\end{gathered}
	\end{equation}
	
	As a result, all eight eigenvectors of $\theta$ are represented by these types of vectors: two eigenvectors of the form $\vec{y}_4$, two of the form $\vec{y}_5$, and one each of the forms $\vec{y}_6,\ldots,\vec{y}_9$.
	
	From the structure of the eigenvectors~(\ref{83}) and~(\ref{84}), it follows that two eigenvalues of $\theta$ coincide with those of $\tau^4$~(\ref{37}), two with those of $\tau^5$~(\ref{38}), and four with the expressions~(\ref{42})--(\ref{45}).
	
	Similar to the previous proof, the expressions for the eigenvalues of the matrices $\tau^4, \tau^5$ of size $2 \times 2$ coincide with the expressions (\ref{40}), (\ref{41}), and the largest eigenvalue of the matrix $\theta$ is $\lambda_1$, the expression for which is given by Eq.~(\ref{39}).
	
	\section{Conclusion}\label{conclusion}
	
	In this work we have obtained an exact transfer-matrix solution for the generalized three-chain Ising tube (TCGIT) with toroidal boundary conditions and a $C_3$-invariant Hamiltonian defined on an elementary prism. For a finite system, the partition function can be written in the spectral form $Z_L=\sum_{k=1}^8 \lambda_k^L$, whereas in the thermodynamic limit all principal thermodynamic quantities are expressed in terms of the Perron eigenvalue $\lambda_{\max}$; in particular, $f=-(T/3)\ln\lambda_{\max}$. This yields exact expressions for the free energy, internal energy, specific heat, magnetization, magnetic susceptibility, and entropy for the most general class of interactions compatible with the rotational symmetry of the elementary prism. 
	
	For the subclass with even-spin interactions (PSC), the spectral problem simplifies substantially: the characteristic polynomial factorizes, and $\lambda_{\max}$ is determined by the root of a quadratic equation, which leads to especially compact closed-form thermodynamic expressions. For mirror-symmetric subfamilies, we derive explicit formulas for the pair correlation functions and express the magnetization in terms of the components of the eigenvector associated with $\lambda_{\max}$; in the even-spin case with $h=0$, the magnetization vanishes. Special parameter reductions of the general tube also yield exact results for the width-three planar model with nearest-neighbor, next-nearest-neighbor, and plaquette interactions and for the width-three planar triangular model with distinct nearest-neighbor couplings, three-spin interactions involving neighboring triangles, and an external field. The gonihedric limit is particularly illustrative: $S(T\to0^+)=(\ln 2)/3$ for $k\ge0$ and $S(T\to0^+)=0$ for $k<0$.
	
	\appendix
	\section{Determination of the root $y^{(0)}$ of the auxiliary cubic in Ferrari’s method} \label{appendixA}
	
	To factorize the quartic polynomial (\ref{20})
	\[
		Q(\lambda)=\lambda^4+a_3\lambda^3+a_2\lambda^2+a_1\lambda+a_0
	\]
	into two quadratic factors, we write
	\[
		Q(\lambda)=(\lambda^2+b_0\lambda+c_0)(\lambda^2+b_1\lambda+c_1)
	\]
	and compare coefficients:
	\[
		\begin{gathered}
			b_0+b_1=a_3, \\
			b_0b_1+c_0+c_1=a_2,\\
			b_0c_1+b_1c_0=a_1,\\
			c_0c_1=a_0.
		\end{gathered}
	\]
	Setting
	\[
		y=c_0+c_1,
	\]
	we obtain
	\[
		b_0b_1=a_2-y,\qquad c_0c_1=a_0.
	\]
	Hence
	\[
		\begin{gathered}
			b_{0,1}=\frac{a_3}{2}\mp\sqrt{\frac{a_3^2}{4}-a_2+y}, \\
			c_{0,1}=\frac{y}{2}\mp\operatorname{sign}\left(\frac{a_3}{2} y-a_1\right)\sqrt{\frac{y^2}{4}-a_0},
		\end{gathered}
	\]
	up to the choice of the relative sign. Substituting these expressions into
	\[
		b_0c_1+b_1c_0=a_1
	\]
	and squaring yields the resolvent cubic
	\begin{equation}
		y^3+Ay^2+By+C=0,
		\label{A1}
	\end{equation}
	where
	\begin{equation}
		\begin{gathered}
			A=-a_2,\\
			B=a_1a_3-4a_0,\\
			C=-a_0a_3^2+4a_0a_2-a_1^2.
		\end{gathered}
		\label{A2}
	\end{equation}
	
	To solve Eq.~(\ref{A1}), we reduce it to a depressed cubic by
	\begin{equation}
		y=x-\frac{A}{3}.
		\label{A3}
	\end{equation}
	Then Eq.~(\ref{A1}) becomes
	\begin{equation}
		\begin{gathered}
			x^3+px+q=0, \\
			p=B-\frac{A^2}{3}, \\
			q=\frac{2A^3}{27}-\frac{AB}{3}+C.
		\end{gathered}
		\label{A4}
	\end{equation}
	Introduce the Cardano discriminant
	\begin{equation}
		\Delta_C=\left(\frac q2\right)^2+\left(\frac p3\right)^3.
		\label{A5}
	\end{equation}
	
	\paragraph*{\textbf{Case \(\Delta_C>0\).}}
	If \(\Delta_C>0\), Eq.~(\ref{A4}) has a unique real root. Cardano's formula gives
	\begin{equation}
		x_0= \sqrt[3]{-\frac q2+\sqrt{\Delta_C}} + \sqrt[3]{-\frac q2-\sqrt{\Delta_C}},
		\label{A6}
	\end{equation}
	where the cube roots are the real cube roots. The required root of Eq.~(\ref{A1}) is
	\begin{equation}
		y^{(0)}=x_0-\frac A3.
		\label{A7}
	\end{equation}
	
	\paragraph*{\textbf{Case \(\Delta_C<0\).}}
	If \(\Delta_C<0\), all three roots of Eq.~(\ref{A1}) are real. Define
	\begin{equation}
		\begin{gathered}
			R=\sqrt{-\frac p3}=\frac13\sqrt{A^2-3B}, \\
			\eta=\frac{q}{2R^3} = \frac{2A^3-9AB+27C}{2(A^2-3B)^{3/2}}, \\
			|\eta|\le 1.
		\end{gathered}
		\label{A8}
	\end{equation}
	Setting \(x=2R\cos\phi\) in Eq.~(\ref{A4}), we obtain
	\[
		2R^3\cos(3\phi)+q=0, \qquad\text{hence}\qquad \cos(3\phi)=-\eta.
	\]
	Therefore the three real roots are
	\begin{equation}
		y^{(k)}= -\frac A3 + 2R\cos\!\left(\frac13\arccos(-\eta)-\frac{2\pi k}{3}\right), \qquad k=0,1,2.
		\label{A9}
	\end{equation}
	We choose
	\begin{equation}
		y^{(0)}= -\frac A3 + 2R\cos\!\left(\frac13\arccos(-\eta)\right),
		\label{A10}
	\end{equation}
	that is, the \(k=0\) branch of Eq.~(\ref{A9}). This is the largest real root of the resolvent cubic. This choice of root ensures that the quadratic factor \(\lambda^2 + b_0\lambda + c_0\) (with the minus sign in front of the square root for \(b_0\) and the corresponding sign adjustment for \(c_0\)) contains the two largest roots of the original quartic, so that
	\[
		\lambda_{\max} = \frac{-b_0 + \sqrt{b_0^2 - 4c_0}}{2}
	\]
	coincides with the Perron--Frobenius eigenvalue of the transfer matrix.
	
	\paragraph*{\textbf{Case \(\Delta_C=0\).}}
	If \(\Delta_C=0\), all roots of the depressed cubic are real, with at least two equal. The trivial subcase \(p=q=0\) yields the triple root \(x=0\). Otherwise \(p<0\); let \( r=\sqrt{-p/3}>0 \). Then the roots are as follows. If  \( q=2r^3 \), the roots are \( x=-2r \) (simple) and \( x=r \) (double). If \( q=-2r^3 \), the roots are \( x=2r \) (simple) and \( x=-r \) (double).
	
	Equivalently, the trigonometric form (\ref{A9}) with \( |\eta|=1 \) degenerates to the same roots. The selection rule (\ref{A10})  continues to return the largest real root \( y^{(0)} \), guaranteeing that the quadratic factor associated with \( b_0,c_0 \) isolates the Perron--Frobenius eigenvalue \(\lambda_{\max}\) of the transfer matrix (exactly as in the generic cases).
	
	\paragraph*{\textbf{Justification: why this choice yields \(\lambda_{\max}=\lambda_1\) when the quartic has four real roots.}}
	Assume that Eq.~(\ref{20}) has four real roots
	\[
		\lambda_1 > \lambda_2\ge \lambda_3\ge \lambda_4,
	\]
	with \(\lambda_1 > \lambda_2\) guaranteed by the Perron--Frobenius theorem (the transfer matrix \(\theta\) has strictly positive entries, hence its largest eigenvalue is simple and its eigenvector has strictly positive components). The argument extends verbatim to the boundary \(\Delta_C=0\) (repeated roots of the resolvent cubic correspond to repeated eigenvalues of the quartic, but Perron--Frobenius still forces \(\lambda_1>\lambda_2\) for generic parameters).
	
	The three roots of the resolvent cubic are then
	\[
		\begin{gathered}
			y^{(0)}=\lambda_1\lambda_2+\lambda_3\lambda_4,\\
			y^{(1)}=\lambda_1\lambda_3+\lambda_2\lambda_4,\\
			y^{(2)}=\lambda_1\lambda_4+\lambda_2\lambda_3.
		\end{gathered}
	\]
	Moreover,
	\[
		\begin{gathered}
			y^{(0)}-y^{(1)}=(\lambda_1-\lambda_4)(\lambda_2-\lambda_3)\ge 0,\\
			y^{(1)}-y^{(2)}=(\lambda_1-\lambda_2)(\lambda_3-\lambda_4)\ge 0,
		\end{gathered}
	\]
	so that \(y^{(0)}\ge y^{(1)}\ge y^{(2)}\). Hence the prescription (\ref{A10}) selects
	\[
		y_1=y^{(0)}=\lambda_1\lambda_2+\lambda_3\lambda_4.
	\]
	
	By Vieta's formulas,
	\[
		\begin{gathered}
			a_0=\lambda_1\lambda_2\lambda_3\lambda_4,\\
			a_1=-(\lambda_1\lambda_2\lambda_3+\lambda_1\lambda_2\lambda_4+\lambda_1\lambda_3\lambda_4+\lambda_2\lambda_3\lambda_4), \\
			a_2=\lambda_1\lambda_2+\lambda_1\lambda_3+\lambda_1\lambda_4+\lambda_2\lambda_3+\lambda_2\lambda_4+\lambda_3\lambda_4,\\
			a_3=-(\lambda_1+\lambda_2+\lambda_3+\lambda_4).
		\end{gathered}
	\]
	
	And then
	\[
		b_0=-(\lambda_1+\lambda_2),\qquad c_0=\lambda_1\lambda_2
	\]
	(and, correspondingly, \(b_1=-(\lambda_3+\lambda_4)\), \(c_1=\lambda_3\lambda_4\)). Consequently,
	\[
		\begin{gathered}
				\lambda^2+b_0\lambda+c_0=(\lambda-\lambda_1)(\lambda-\lambda_2),\\
			\lambda^2+b_1\lambda+c_1=(\lambda-\lambda_3)(\lambda-\lambda_4),
		\end{gathered}
	\]
	and therefore
	\[
	\lambda_{\max}=\frac{-b_0+\sqrt{b_0^2-4c_0}}{2}=\lambda_1,
	\]
	as required (the square root extracts \(\lambda_1-\lambda_2>0\)).
	
	This proves that, under the selection rule (\ref{A10}), the largest eigenvalue is indeed the root \(\lambda_1\) used throughout the main text (Eq.~(\ref{27})).
	
	In the mirror-symmetric case (Hamiltonian satisfying Eq.~(\ref{mirror_hamiltonian})), the transfer matrix is real symmetric, all its eigenvalues are real, and the quartic (\ref{20}) likewise has four real roots. In this situation the discriminant satisfies \(\Delta_C\le 0\), the trigonometric solution (\ref{A9}) applies (including its degenerate form when \(\Delta_C=0\)), and the selection rule (\ref{A10}) continues to isolate the quadratic factor associated with the two largest eigenvalues.
	
	For $\Delta_{C} > 0$, both cases may occur depending on the spin-spin interaction parameters: either $\lambda_{\max} = \frac{-b_0 + \sqrt{b_0^2 - 4c_0}}{2}$ or $\lambda_{\max} = \frac{-b_1 + \sqrt{b_1^2 - 4c_1}}{2}$. Examples are given below.
	
	\subsection*{Example: the matrix $\tau_1$ has four real roots $\lambda_1, \lambda_2, \lambda_3, \lambda_4$, with $\lambda_{\max}=\max\{\lambda_1,\lambda_2,\lambda_3,\lambda_4\}$}
	
	We give a parameter set from the planar triangular model of Sec.~\ref{sec6}. As stated earlier,
	the conclusions of Theorems~\ref{theorem1} and~\ref{theorem2} remain valid for the Hamiltonian~(\ref{79}). Consequently, examples within this subfamily are sufficient to illustrate all possible cases of the formula~(\ref{27}).
	
	Consider the parameters of the Hamiltonian~(\ref{79})
	\begin{equation}
		\begin{gathered}
			J^{\Delta}_{1}=-\tfrac{1}{2},\quad
			J^{\Delta}_{2}=\tfrac{3}{2},\quad
			J^{\Delta}_{3}=\tfrac{3}{2}, \\
			J^{\Delta}_{4}=-\tfrac{3}{2},\quad
			J^{\Delta}_{5}=\tfrac{3}{2},\quad
			h=0,
		\end{gathered}
	\end{equation}
	and temperature $T=1$.
	
	The characteristic polynomial of the matrix $\tau_{1}$ reads
	\begin{equation}
		\begin{gathered}
			x^{4} - 3748.71\,x^{3}
			+ 2.86619\times 10^{6}\,x^{2} \\
			+ 2.74389\times 10^{9}\,x
			- 2.87108\times 10^{12}.
		\end{gathered}
	\end{equation}
	
	Hence
	\begin{align}
		a_{3} &= -3748.710806744144, &
		a_{2} &= \phantom{-}2.86619\times 10^{6}, \nonumber \\
		a_{1} &= \phantom{-}2.74389\times 10^{9}, &
		a_{0} &= -2.87108\times 10^{12}, \nonumber
	\end{align}
	\begin{align}
		A &= 2.86619\times 10^{6}, &
		B &= 1.19826\times 10^{12}, \nonumber \\
		C &= -9.83459\times 10^{16}, & & \nonumber \\
		p &= -1.54009\times 10^{12}, &
		q &= -6.97671\times 10^{17}, \nonumber
	\end{align}
	\begin{equation}
		\Delta_{C} = -1.36057\times 10^{34} < 0.
	\end{equation}
	
	Since $\Delta_{C}<0$, we apply formulas~(\ref{A8})--(\ref{A10}), which yield
	\begin{equation}
		\lambda_{\max} = 1877.92.
	\end{equation}
	
	\subsection*{Example: $\lambda_{\max}$ belongs to the quadratic factor with $b_{1}$, $c_{1}$}
	
	We take the same parameters of the Hamiltonian~(\ref{79}),
	\begin{equation}
		\begin{gathered}
			J^{\Delta}_{1}=-\tfrac{1}{2},\quad
			J^{\Delta}_{2}=\tfrac{3}{2},\quad
			J^{\Delta}_{3}=\tfrac{3}{2},\\
			J^{\Delta}_{4}=-\tfrac{3}{2},\quad
			J^{\Delta}_{5}=\tfrac{3}{2},\quad
			h=0,
		\end{gathered}
	\end{equation}
	and temperature $T=3$.
	
	The characteristic polynomial of the matrix $\tau_{1}$ reads
	\begin{equation}
		x^{4} - 38.0793\,x^{3}
		+ 333.717\,x^{2}
		+ 650.331\,x
		- 15037.4.
	\end{equation}
	
	Hence
	\begin{align}
		a_{3} &= -38.079266511035314, &
		a_{2} &= +333.7174694811645, \nonumber \\
		a_{1} &= +650.3313078507203, &
		a_{0} &= -15037.434810306542, \nonumber
	\end{align}
	\begin{align}
		A &= -333.717, &
		B &= 35385.6, \nonumber \\
		C &= 1.30879\times 10^{6}, & & \nonumber \\
		p &= -1736.85, &
		q &= 2.49207\times 10^{6}, \nonumber
	\end{align}
	\begin{equation}
		\Delta_{C} = 1.55241\times 10^{12} > 0.
	\end{equation}
	
	We then obtain
	\begin{align}
		x_{0} &= -139.846, &
		y_{0} &= -28.607, \nonumber \\
		b_{0} &= -19.4676, &
		c_{0} &= 109.155, \nonumber
	\end{align}
	\begin{equation}
		\lambda_{1} = 9.73378 + 3.79586\,i,
	\end{equation}
	\begin{align}
		b_{1} &= -18.6117, &
		c_{1} &= -137.762, \nonumber
	\end{align}
	\begin{equation}
		\lambda_{3} = 24.2845, \qquad
		\lambda_{\max} = \lambda_{3}.
	\end{equation}
	
	\subsection*{Example: $\lambda_{\max}$ belongs to the quadratic
		factor with $b_{0}$, $c_{0}$}
	
	We take the same parameters of the Hamiltonian~(\ref{79}),
	\begin{equation}
		\begin{gathered}
			J^{\Delta}_{1}=-\tfrac{1}{2},\quad
			J^{\Delta}_{2}=\tfrac{3}{2},\quad
			J^{\Delta}_{3}=\tfrac{3}{2}, \\
			J^{\Delta}_{4}=-\tfrac{3}{2},\quad
			J^{\Delta}_{5}=\tfrac{3}{2},\quad
			h=0,
		\end{gathered}
	\end{equation}
	and temperature $T=5$.
	
	The characteristic polynomial of the matrix $\tau_{1}$ reads
	\begin{equation}
		x^{4} - 18.2315\,x^{3}
		+ 74.153\,x^{2}
		- 51.9957\,x
		- 221.588.
	\end{equation}
	
	Hence
	\begin{align}
		a_{3} &= -18.231450289983314, &
		a_{2} &= +74.15299412420485, \nonumber \\
		a_{1} &= -51.995664831063195, &
		a_{0} &= -221.5875317726894, \nonumber
	\end{align}
	\begin{align}
		A &= -74.153, &
		B &= 1834.31, \nonumber \\
		C &= 5223.48, & & \nonumber \\
		p &= 1.41766, &
		q &= 20360.1, \nonumber
	\end{align}
	\begin{equation}
		\Delta_{C} = 1.03633\times 10^{8} > 0.
	\end{equation}
	
	We then obtain
	\begin{align}
		x_{0} &= -27.2888, &
		y_{0} &= -2.57115, \nonumber \\
		b_{0} &= -11.6401, &
		c_{0} &= -16.2268, \nonumber
	\end{align}
	\begin{equation}
		\lambda_{1} = 12.8981,
	\end{equation}
	\begin{align}
		b_{1} &= -6.59138, &
		c_{1} &= 13.6557, \nonumber
	\end{align}
	\begin{equation}
		\lambda_{3} = 3.29569 + 1.67155\,i, \qquad
		\lambda_{\max} = \lambda_{1}.
	\end{equation}
	
	\section{Additional parameters for calculating pair correlations in the PSCSH model}
	\begin{eqnarray} \label{56}
		\begin{gathered}
			A_1 = \sqrt{\frac{\sqrt{C_1} + a + d - e - 2f -2g - w}{2\sqrt{C_1}}}, \\
			B_1 = \frac{\sqrt{6}(b+c)}{\sqrt{(\sqrt{C_1} + a + d - e -2f - 2g - w)\sqrt{C_1}}},  \\
			A_2 = \sqrt{\frac{\sqrt{C_2} - a + d + e +2f - 2g - w}{2\sqrt{C_2}}},  \\
			B_2 = \frac{\sqrt{6}(b-c)}{\sqrt{(\sqrt{C_2} - a + d + e + 2f - 2g - w)\sqrt{C_2}}}
		\end{gathered}
	\end{eqnarray}
	\begin{eqnarray}
		\label{C2}
		&&S'_{0_{12}} \nonumber \\
		&&=\begin{pmatrix}
			A_1B_2+\frac{B_1A_2}{3}&\frac{B_1B_2}{3}-A_1A_2&-\frac{2}{3}e^{-\frac{\pi i}{3}}B_1&\frac{2}{3}B_1\\
			B_1B_2-\frac{A_1A_2}{3}&-\frac{A_1B_2}{3}-B_1A_2&\frac{2}{3}e^{-\frac{\pi i}{3}}A_1&-\frac{2}{3}A_1\\
			\frac{2}{3}e^{\frac{\pi i}{3}}A_2&\frac{2}{3}e^{\frac{\pi i}{3}}B_2&-\frac{1}{3}&-\frac{2}{3}e^{\frac{\pi i}{3}}\\
			-\frac{2}{3}A_2&-\frac{2}{3}B_2&-\frac{2}{3}e^{-\frac{\pi i}{3}}&-\frac{1}{3}
		\end{pmatrix},  \nonumber \\
	\end{eqnarray}
	\begin{eqnarray}
		\label{C3}
		&&S'_{0_{21}}  \nonumber \\
		&&= \begin{pmatrix}
			A_1B_2+\frac{B_1A_2}{3}&B_1B_2-\frac{A_1A_2}{3}&\frac{2}{3}e^{-\frac{\pi i}{3}}A_2&-\frac{2}{3}A_2\\
			\frac{B_1B_2}{3}-A_1A_2&-B_1A_2-\frac{A_1B_2}{3}&\frac{2}{3}e^{-\frac{\pi i}{3}}B_2&-\frac{2}{3}B_2\\
			-\frac{2}{3}e^{\frac{\pi i}{3}}B_1&\frac{2}{3}e^{\frac{\pi i}{3}}A_1&-\frac{1}{3}&-\frac{2}{3}e^{\frac{\pi i}{3}}\\
			\frac{2}{3}B_1&-\frac{2}{3}A_1&-\frac{2}{3}e^{-\frac{\pi i}{3}}&-\frac{1}{3} 
		\end{pmatrix},  \nonumber \\
	\end{eqnarray}
	\begin{eqnarray}
		\label{C4}
		&&S'_{1_{12}} \nonumber \\
		&&= 
		\begin{pmatrix}
			A_1B_2+\frac{B_1A_2}{3}&\frac{B_1B_2}{3}-A_1A_2&-\frac{2}{3}e^{\frac{\pi i}{3}}B_1&-\frac{2}{3}e^{\frac{\pi i}{3}}B_1\\
			B_1B_2-\frac{A_1A_2}{3}&-\frac{A_1B_2}{3}-B_1A_2&\frac{2}{3}e^{\frac{\pi i}{3}}A_1&\frac{2}{3}e^{\frac{\pi i}{3}}A_1\\
			\frac{2}{3}e^{-\frac{\pi i}{3}}A_2&\frac{2}{3}e^{-\frac{\pi i}{3}}B_2&-\frac{1}{3}&\frac{2}{3}\\
			\frac{2}{3}e^{-\frac{\pi i}{3}}A_2&\frac{2}{3}e^{-\frac{\pi i}{3}}B_2&\frac{2}{3}&-\frac{1}{3}
		\end{pmatrix},  \nonumber \\
	\end{eqnarray}
	\begin{eqnarray}
		\label{C5}
		&&S'_{1_{21}}  \nonumber \\
		&&= 
		\begin{pmatrix}
			A_1B_2+\frac{B_1A_2}{3}&B_1B_2-\frac{A_1A_2}{3}&\frac{2}{3}e^{\frac{\pi i}{3}}A_2&\frac{2}{3}e^{\frac{\pi i}{3}}A_2\\
			\frac{B_1B_2}{3}-A_1A_2&-B_1A_2-\frac{A_1B_2}{3}&\frac{2}{3}e^{\frac{\pi i}{3}}B_2&\frac{2}{3}e^{\frac{\pi i}{3}}B_2\\
			-\frac{2}{3}e^{-\frac{\pi i}{3}}B_1&\frac{2}{3}e^{-\frac{\pi i}{3}}A_1&-\frac{1}{3}&\frac{2}{3}\\
			-\frac{2}{3}e^{-\frac{\pi i}{3}}B_1&\frac{2}{3}e^{-\frac{\pi i}{3}}A_1&\frac{2}{3}&-\frac{1}{3}
		\end{pmatrix}.  \nonumber \\
	\end{eqnarray}
	
	\section{\label{subsecD:GroundStates}Examples of ground states and the correlation length in the PSCSH model}

	From the structure of the transfer matrix~(\ref{47}), for fixed couplings $J_r$, the single-prism Hamiltonian~(\ref{46}) can take eight distinct values, depending on the spin configuration. As an example, we fix $J_2 = J_4 = J_6 = J_7 = J = 1$ and vary $J_1/J$, $J_3/J$, and $J_5/J$ from $-15$ to $15$. The ground states obtained by minimizing the Hamiltonian ~(\ref{46}) over this region of parameter space are shown in Fig.~\ref{figure2}, together with cross sections at $J_5/J = 15,~0,~-5,~-15$ (Fig.~\ref{figure3}).

	\begin{figure}[!htbp]
		\centering
		\includegraphics[width=1\linewidth]{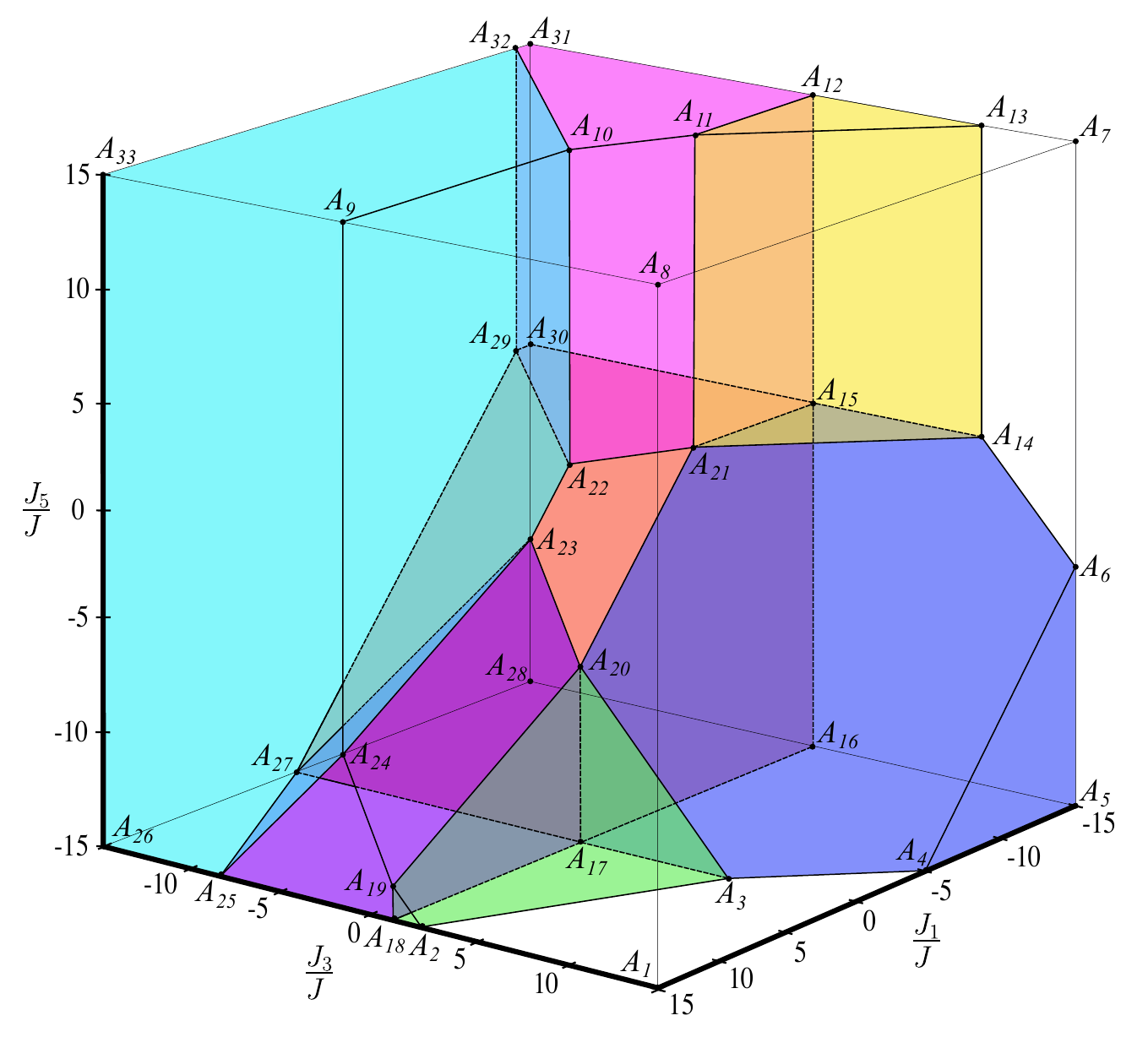}
		\caption{Ground-state regions of the PSCSH model in the $(J_1/J, J_3/J, J_5/J)$ space for $J_2=J_4=J_6=J_7=J=1$.}
		\label{figure2}
	\end{figure}

	\begin{figure*}[!htbp]
		\begin{minipage}{\linewidth}
			\begin{minipage}{0.24\linewidth}
				\includegraphics[width=\linewidth]{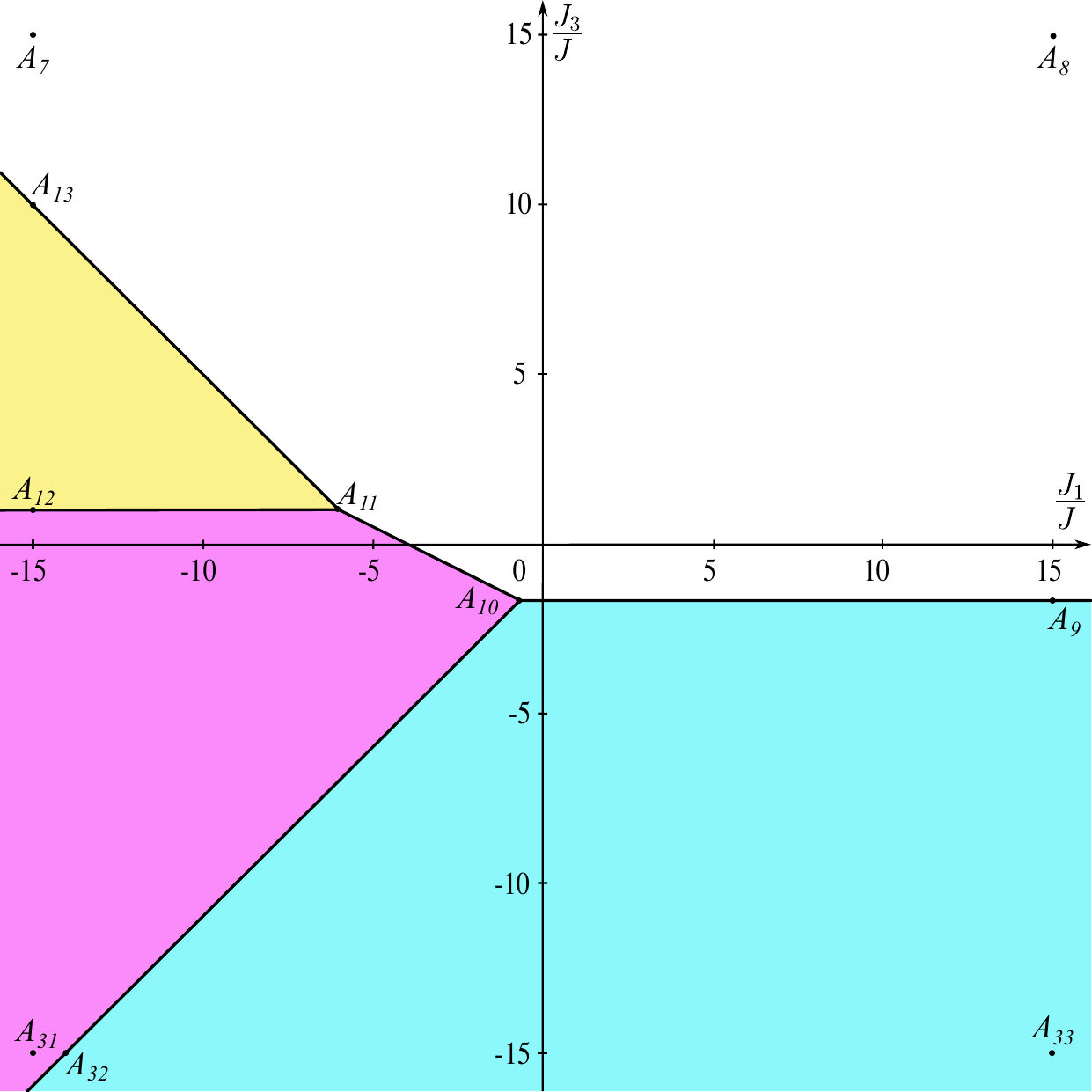} \\ (a)
			\end{minipage}
			\hfill
			\begin{minipage}{0.24\linewidth}
				\includegraphics[width=\linewidth]{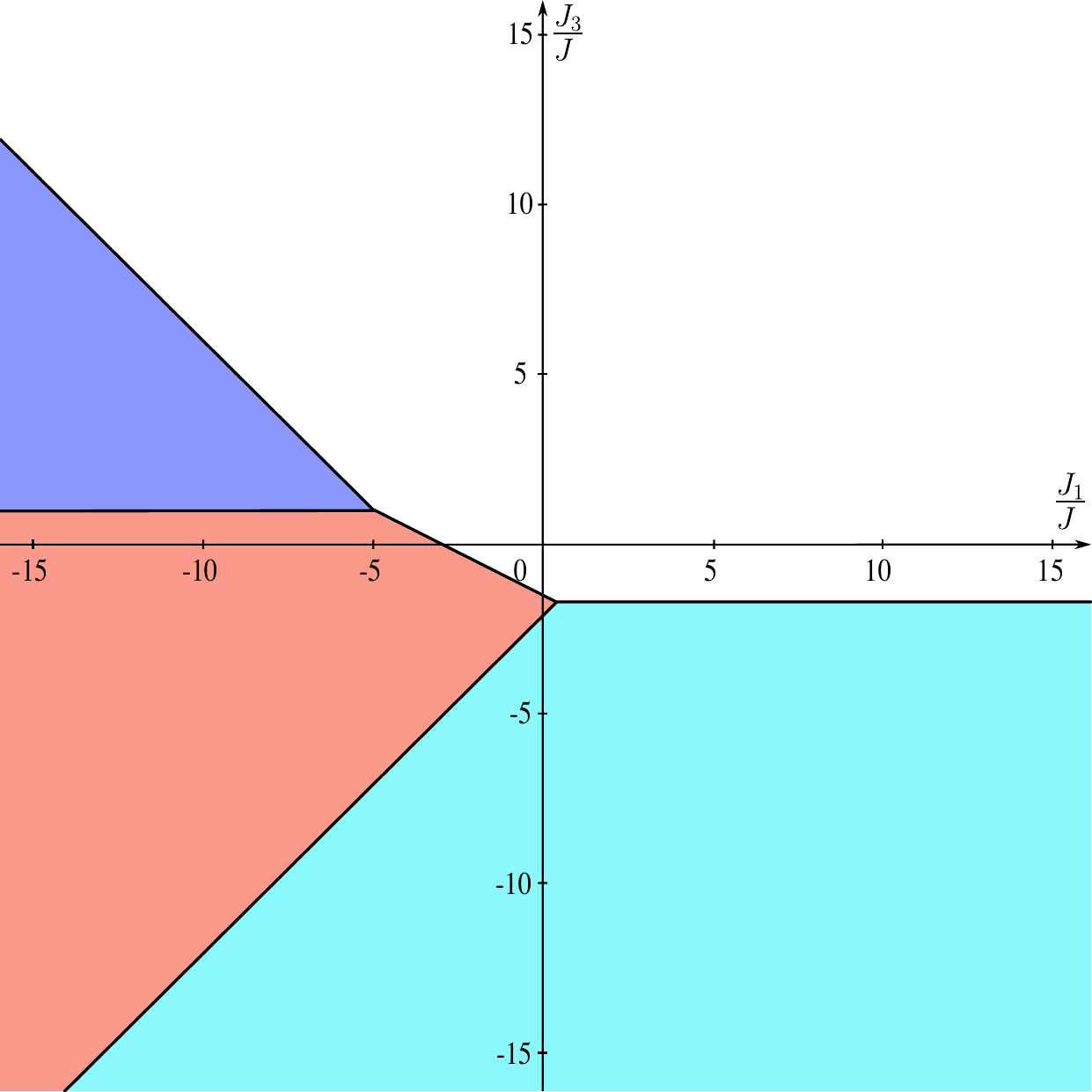} \\ (b)
			\end{minipage}
			\hfill
			\begin{minipage}{0.24\linewidth}
				\includegraphics[width=\linewidth]{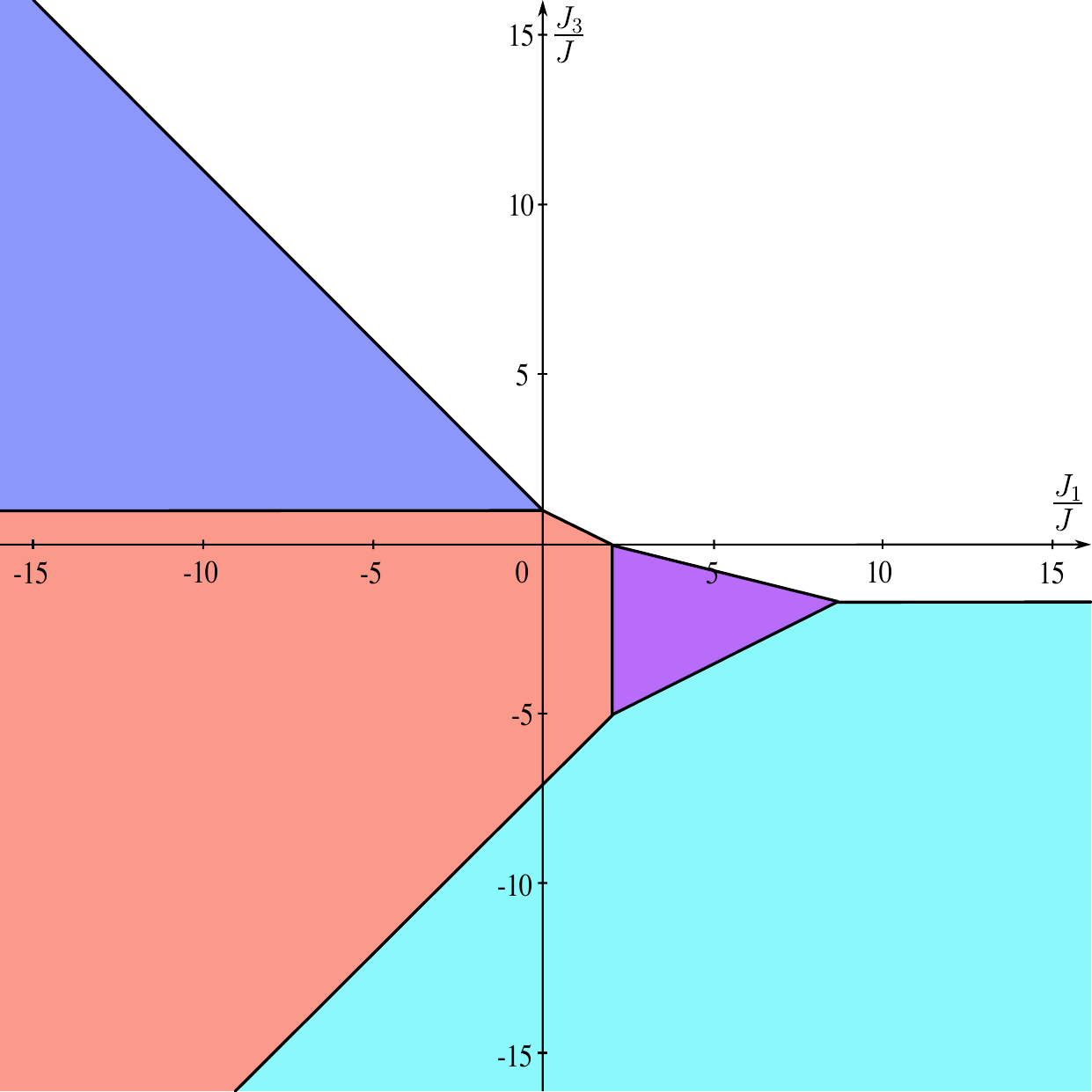} \\ (c)
			\end{minipage}
			\hfill
			\begin{minipage}{0.24\linewidth}
				\includegraphics[width=\linewidth]{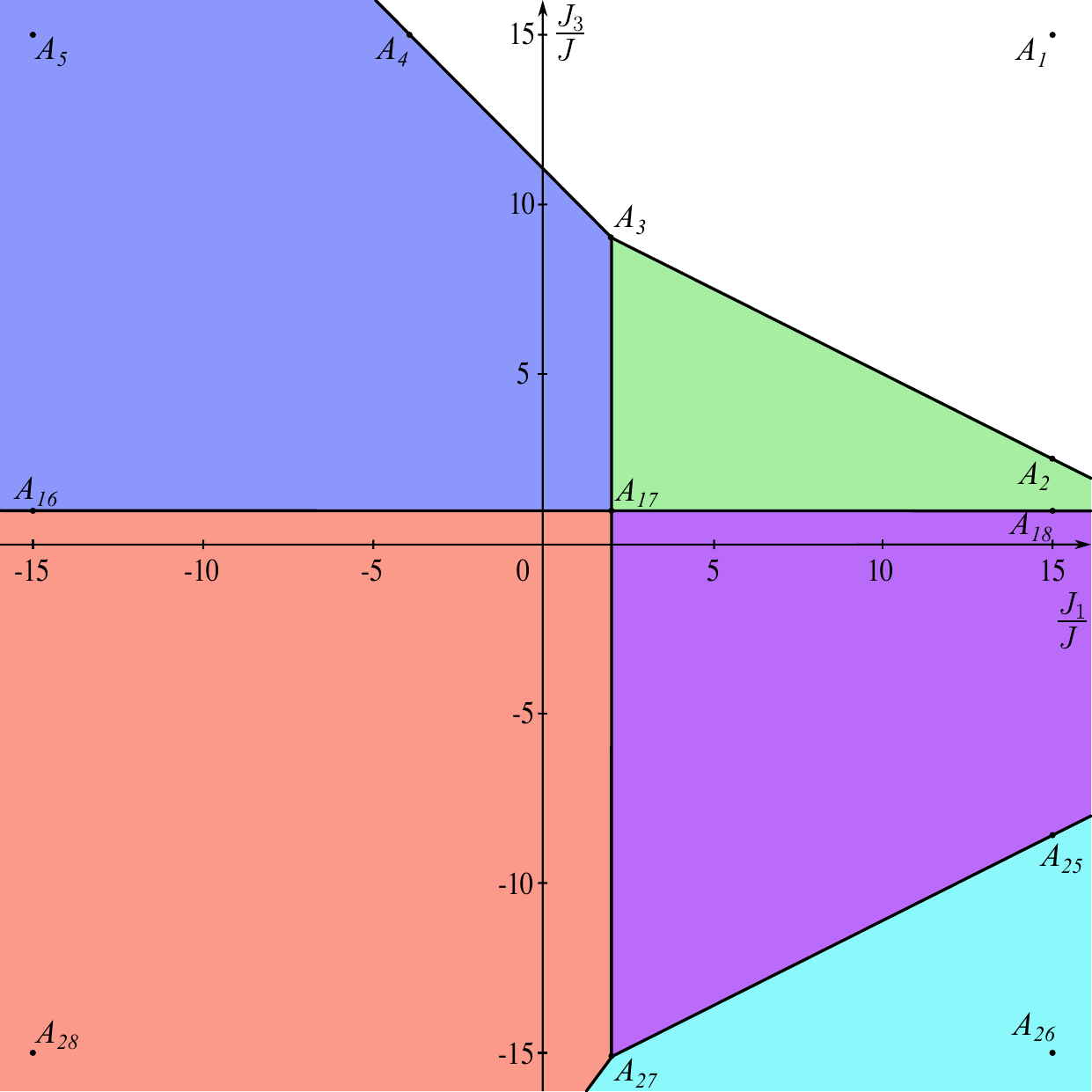} \\ (d)
			\end{minipage}
			\caption{\label{figure3} Ground-state regions of the PSCSH model in the $(J_1/J, J_3/J)$ plane for $J_2=J_4=J_6=J_7=J=1$. Panels (a)--(d) correspond to $J_5/J=15$, $0$, $-5$, and $-15$, respectively.}
		\end{minipage}
	\end{figure*}

	Similarly, fixing $J_3 = J_4 = J_6 = J_7 = J = 1$ and varying $J_1/J$, $J_2/J$, and $J_5/J$ from $-15$ to $15$, we obtain the ground-state diagram shown in Fig.~\ref{figure4}, along with cross sections at $J_5/J = 15,~0,~-5,~-15$ (Fig.~\ref{figure5}).

	\begin{figure*}[!htbp]
		\centering
		\includegraphics[width=0.5\linewidth]{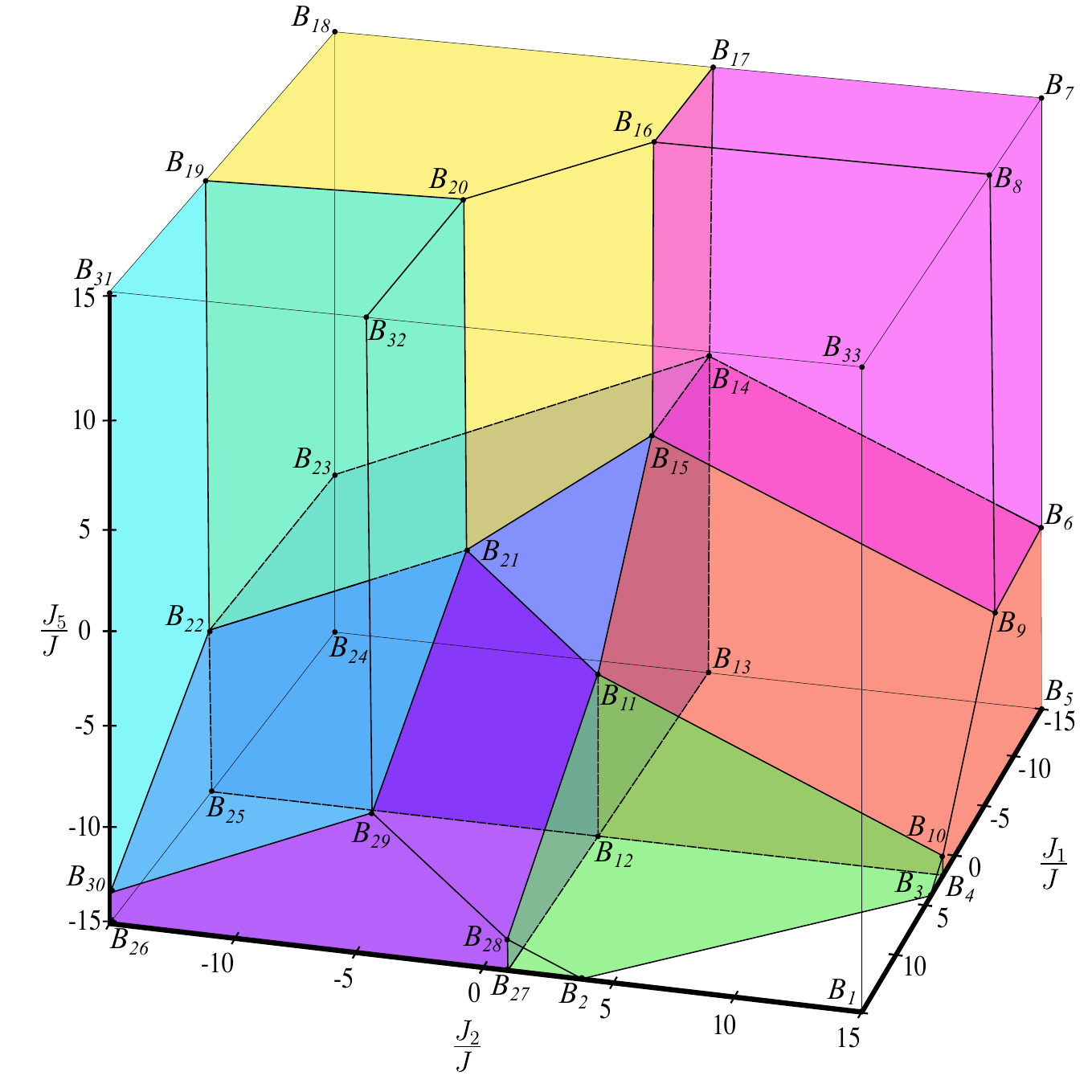}
		\caption{Ground-state regions of the PSCSH model in the $(J_1/J, J_2/J, J_5/J)$ space for $J_3=J_4=J_6=J_7=J=1$.}
		\label{figure4}
	\end{figure*}

	\begin{figure*}[!htbp]
		\begin{minipage}{\linewidth}
			\begin{minipage}{0.24\linewidth}
				\includegraphics[width=\linewidth]{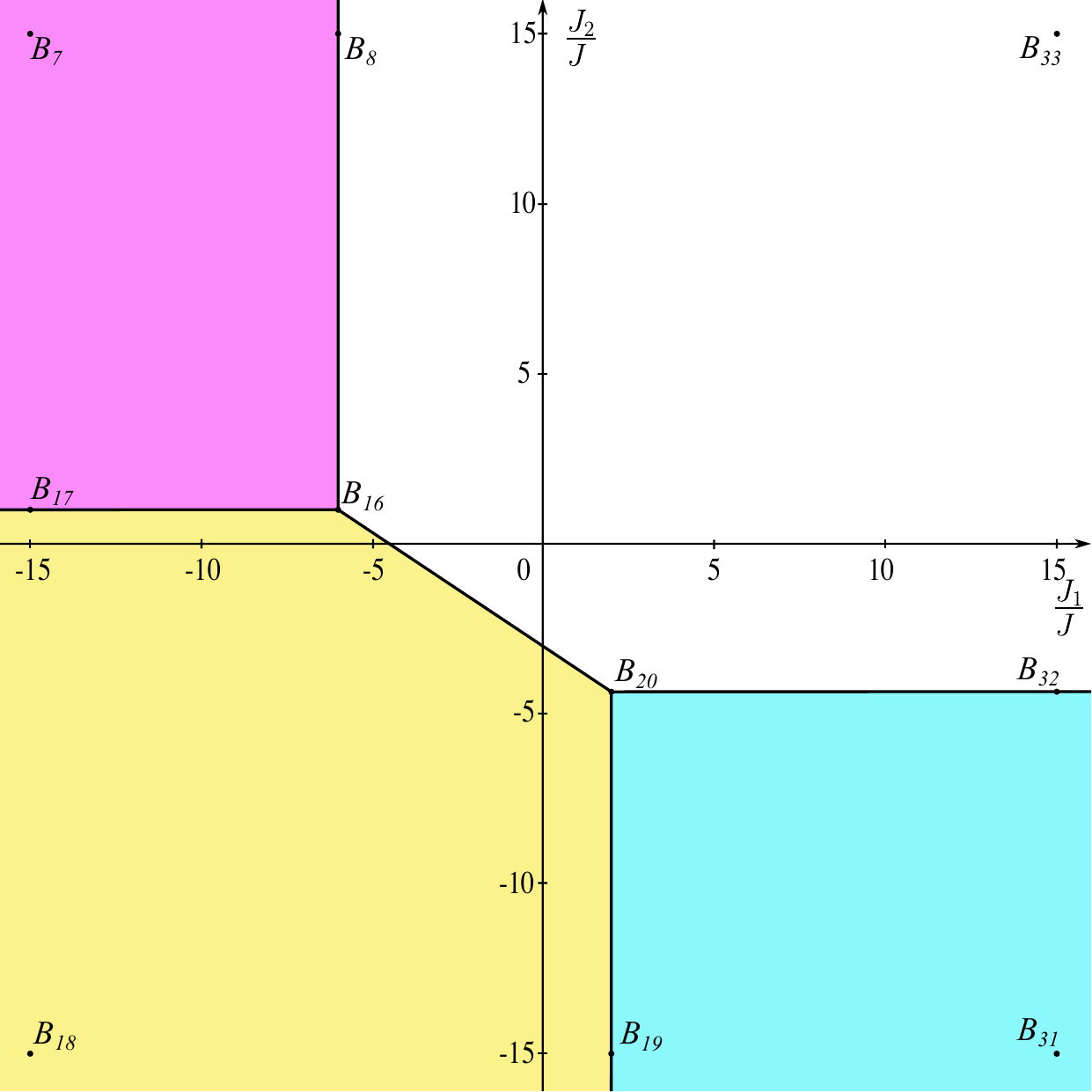} \\ (a)
			\end{minipage}
			\hfill
			\begin{minipage}{0.24\linewidth}
				\includegraphics[width=\linewidth]{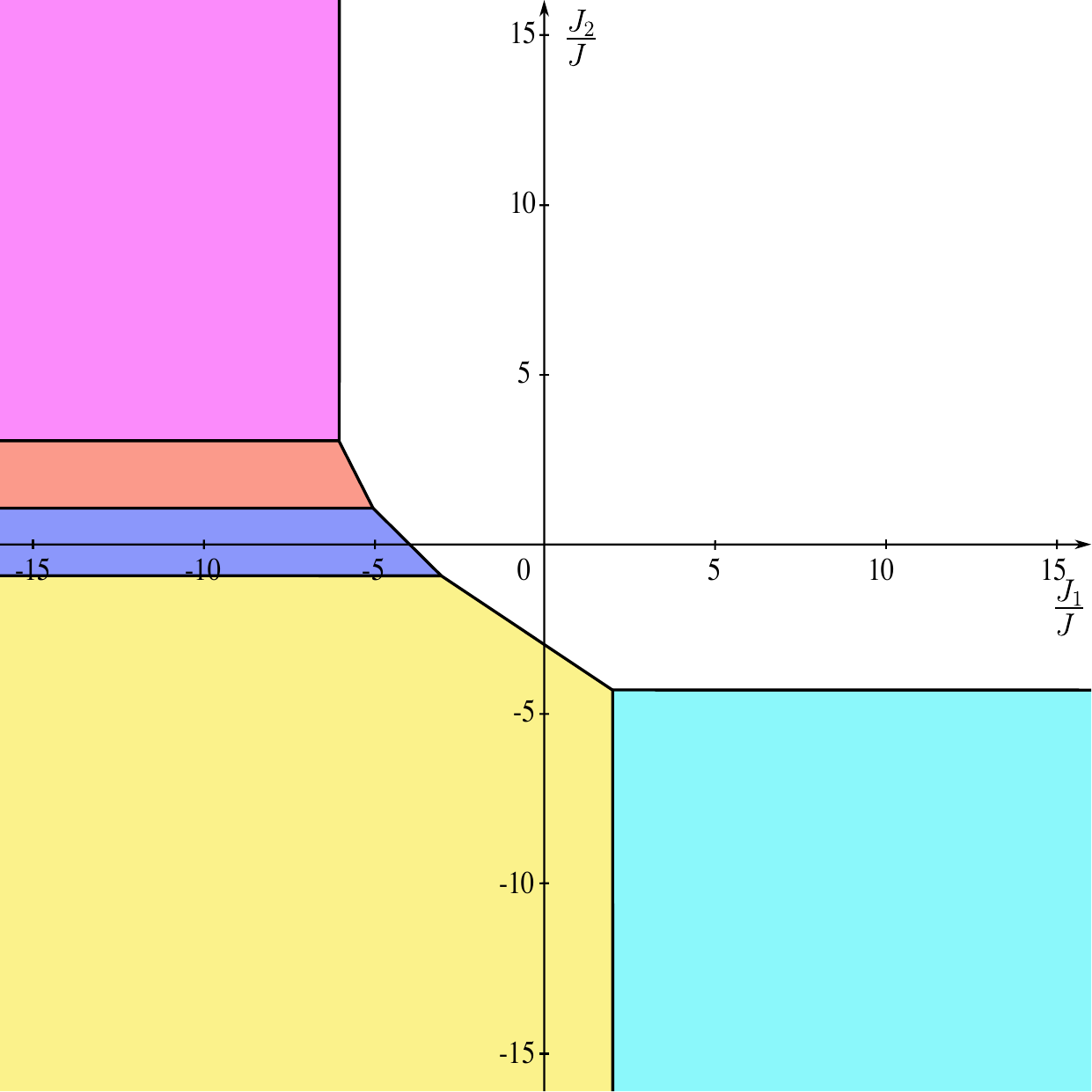} \\ (b)
			\end{minipage}
			\hfill
			\begin{minipage}{0.24\linewidth}
				\includegraphics[width=\linewidth]{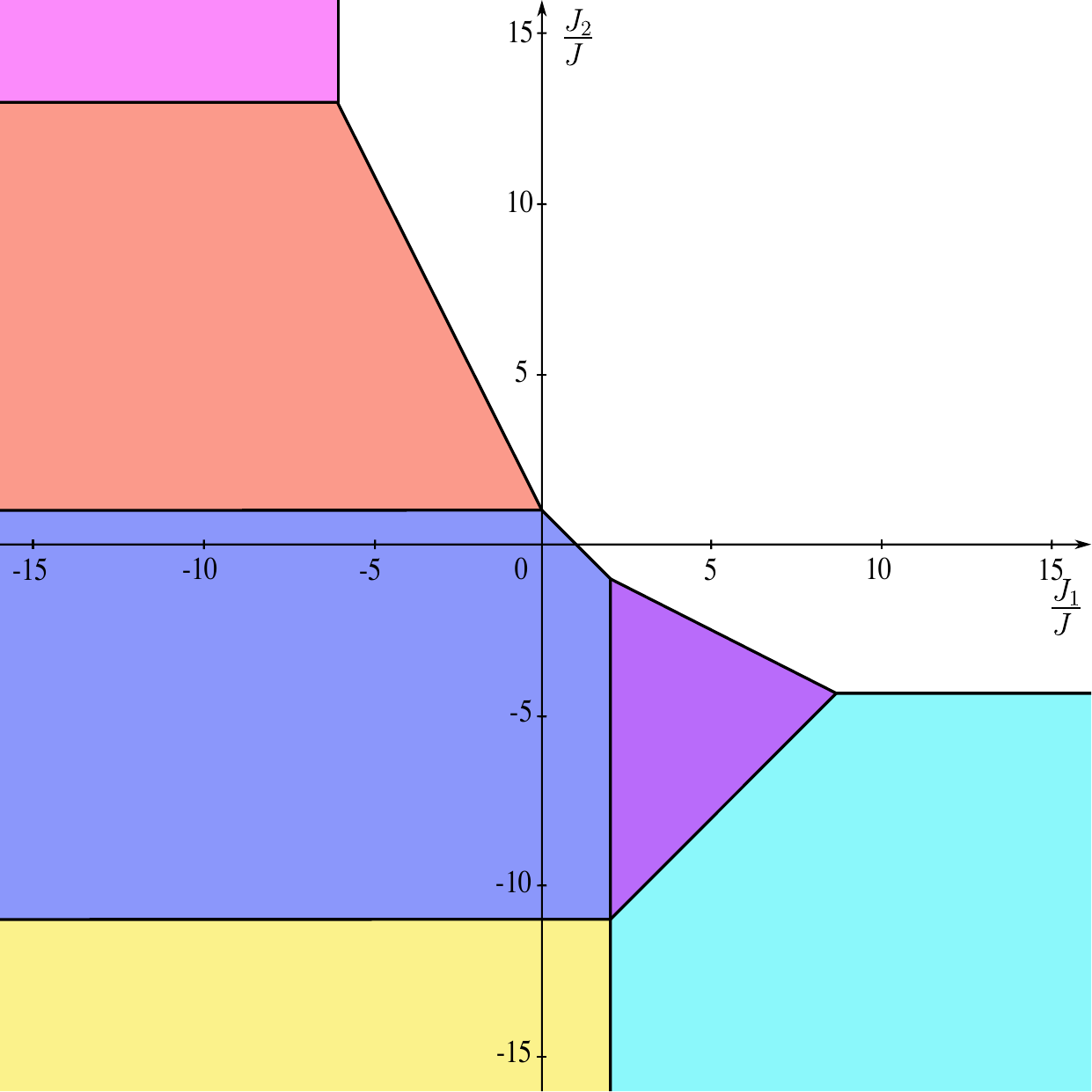} \\ (c)
			\end{minipage}
			\hfill
			\begin{minipage}{0.24\linewidth}
				\includegraphics[width=\linewidth]{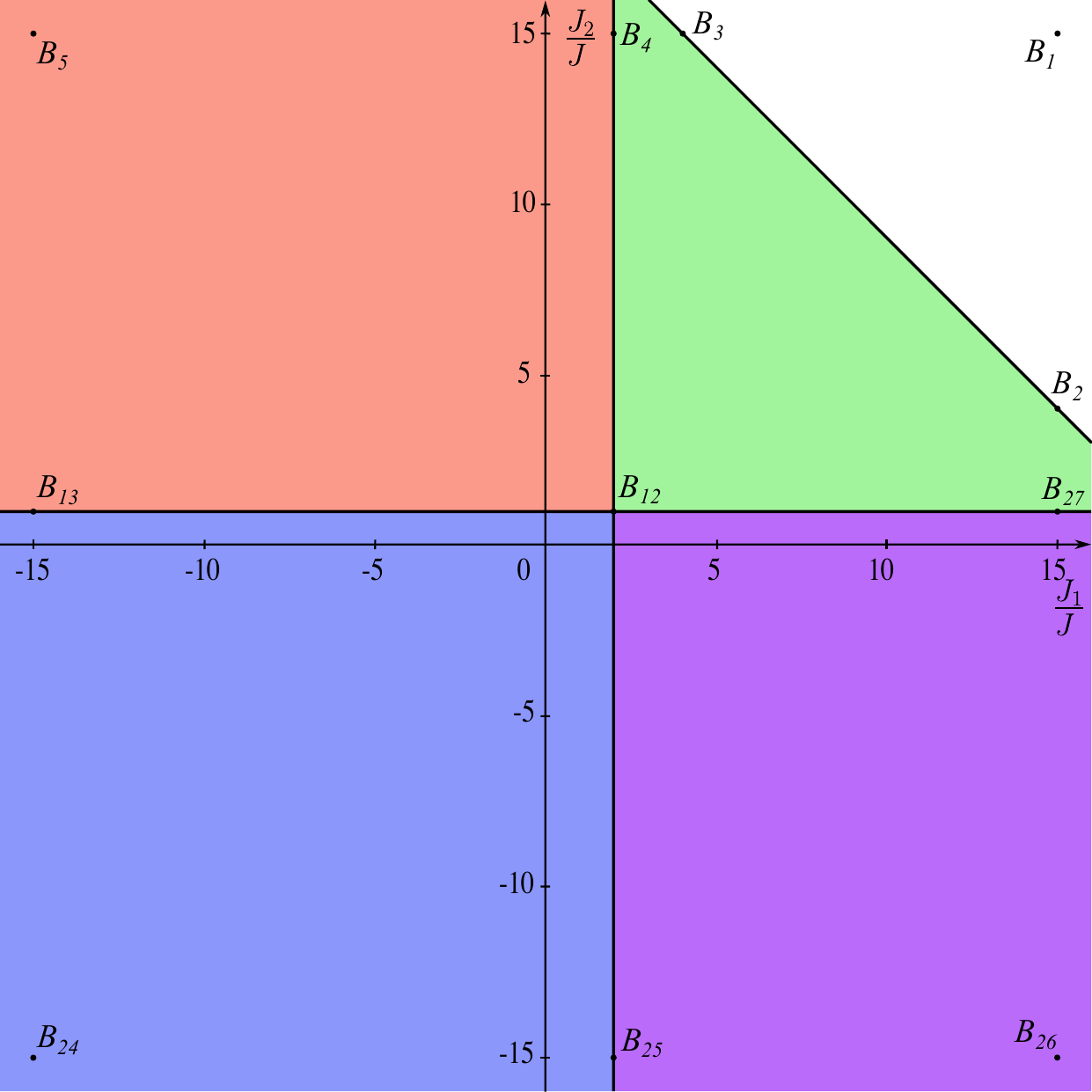} \\ (d)
			\end{minipage}
			\caption{\label{figure5} Ground-state regions of the PSCSH model in the $(J_1/J, J_2/J)$ plane for $J_3=J_4=J_6=J_7=J=1$. Panels (a)--(d) correspond to $J_5/J=15$, $0$, $-5$, and $-15$, respectively.}
		\end{minipage}
	\end{figure*}

	Figure~\ref{figure6} shows the color coding and one representative configuration for each ground state, and the vertex coordinates for Figs.~\ref{figure2} and~\ref{figure4} are listed in Table~\ref{table1}.

	\begin{figure*}[!htbp]
		\centering
		\includegraphics[width=2.4cm]{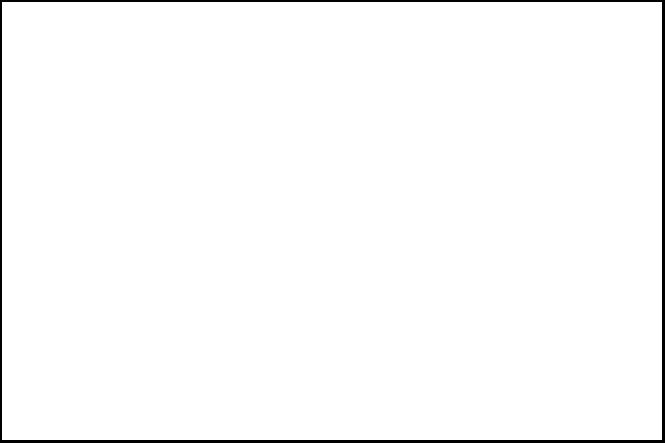}
		\quad
		\includegraphics[width=5.2cm]{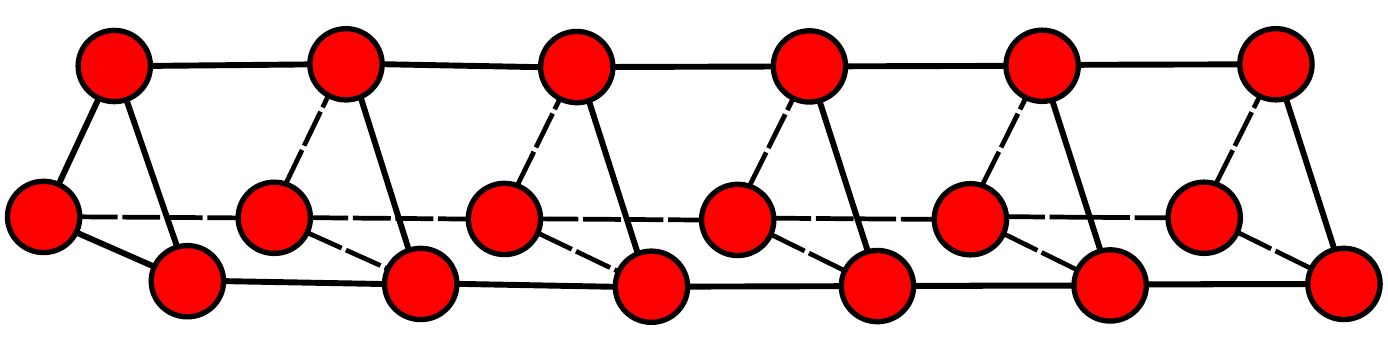}
		\quad
		\includegraphics[width=2.4cm]{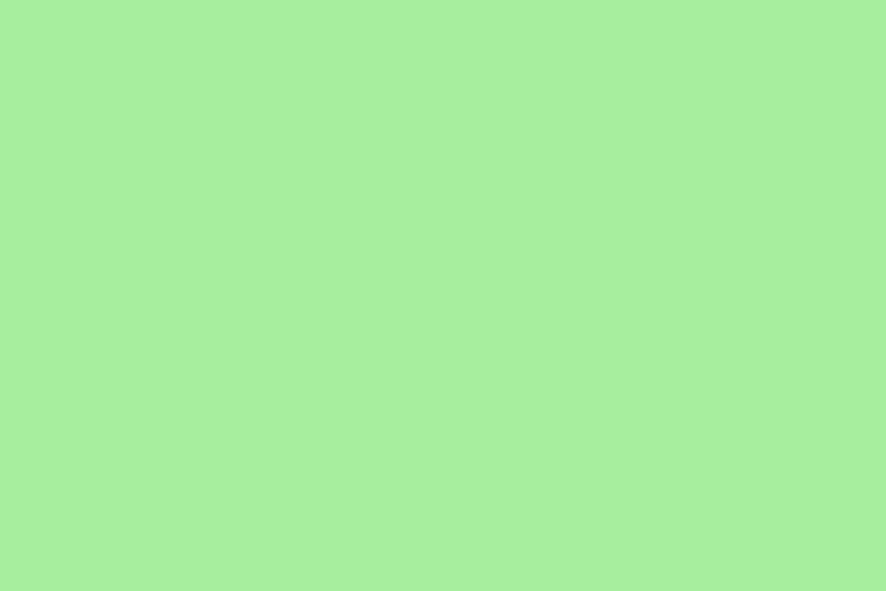}
		\quad
		\includegraphics[width=5.2cm]{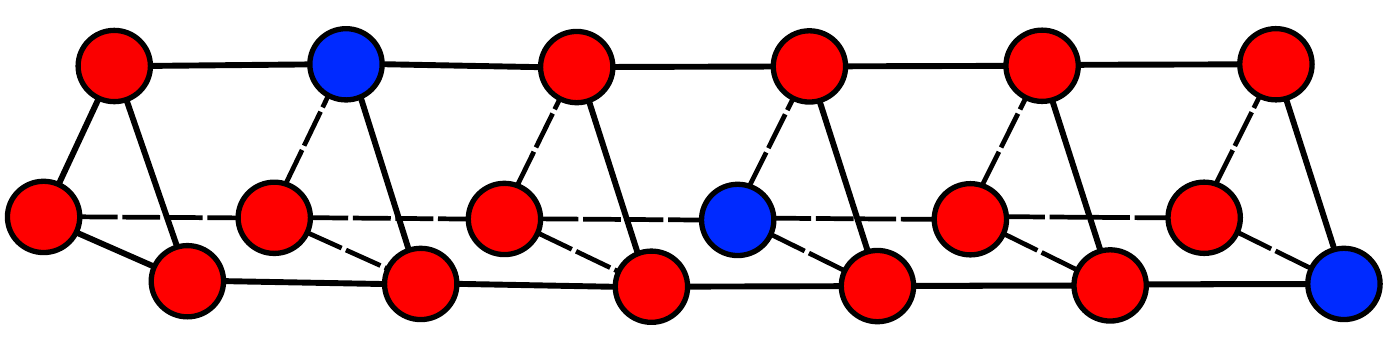}
		\includegraphics[width=2.4cm]{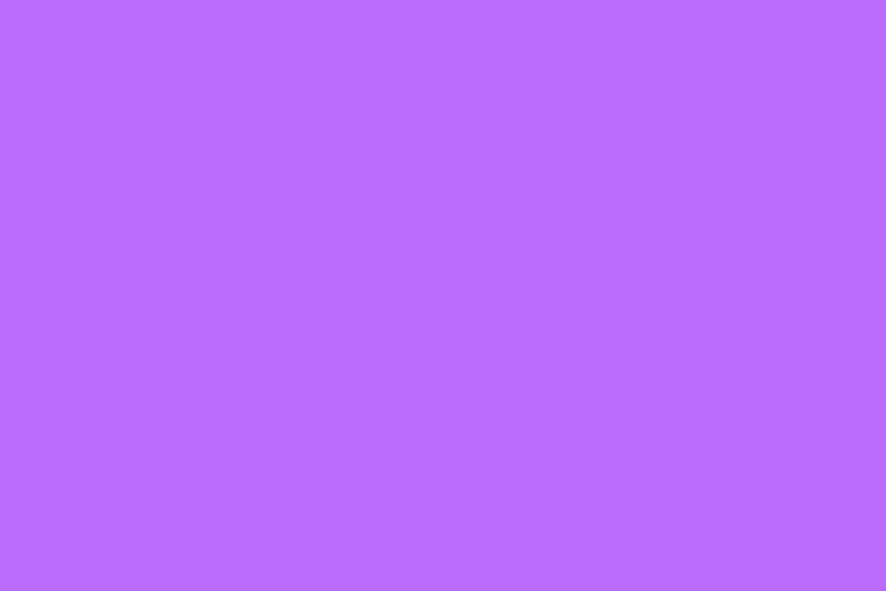}
		\quad
		\includegraphics[width=5.2cm]{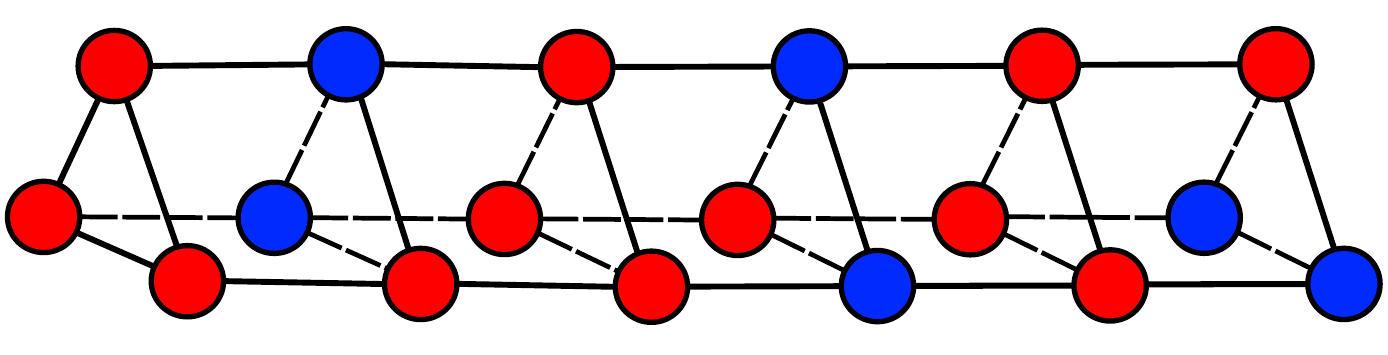}
		\quad
		\includegraphics[width=2.4cm]{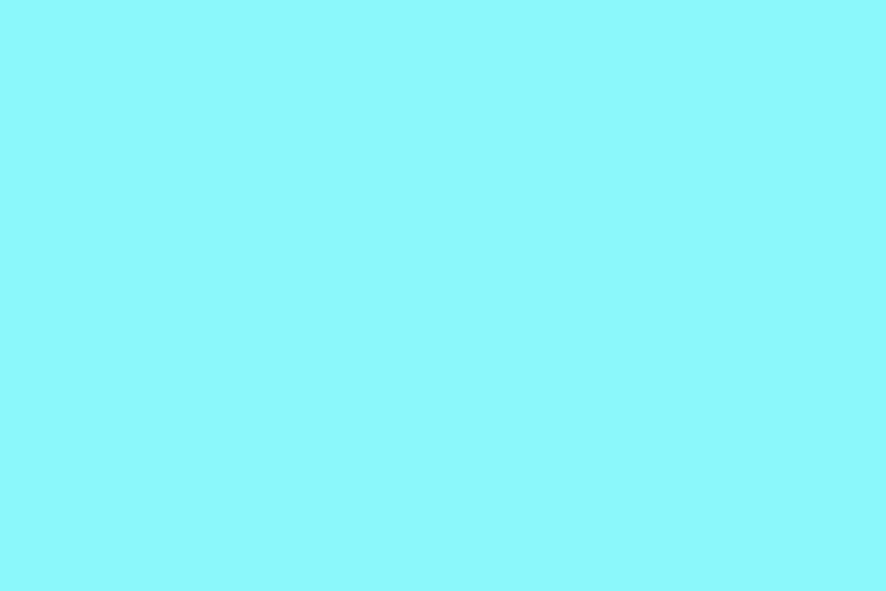}
		\quad
		\includegraphics[width=5.2cm]{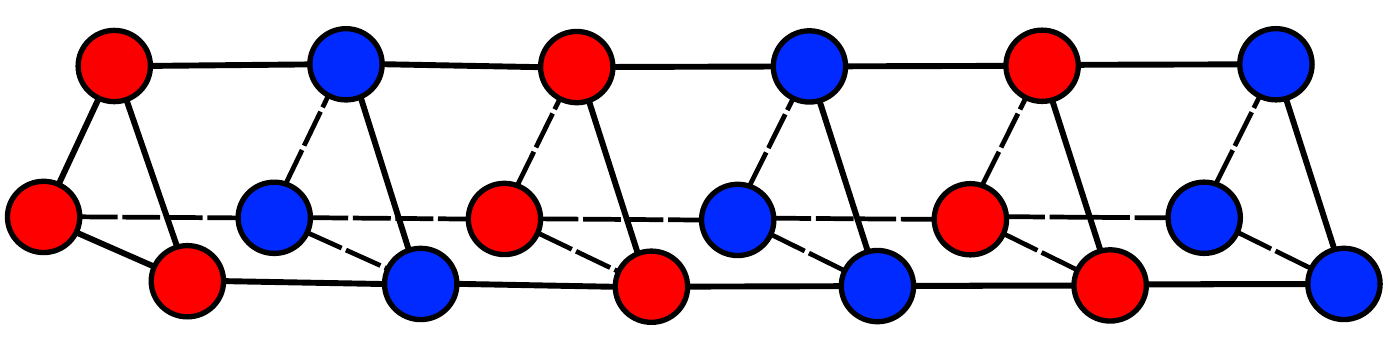}
		\includegraphics[width=2.4cm]{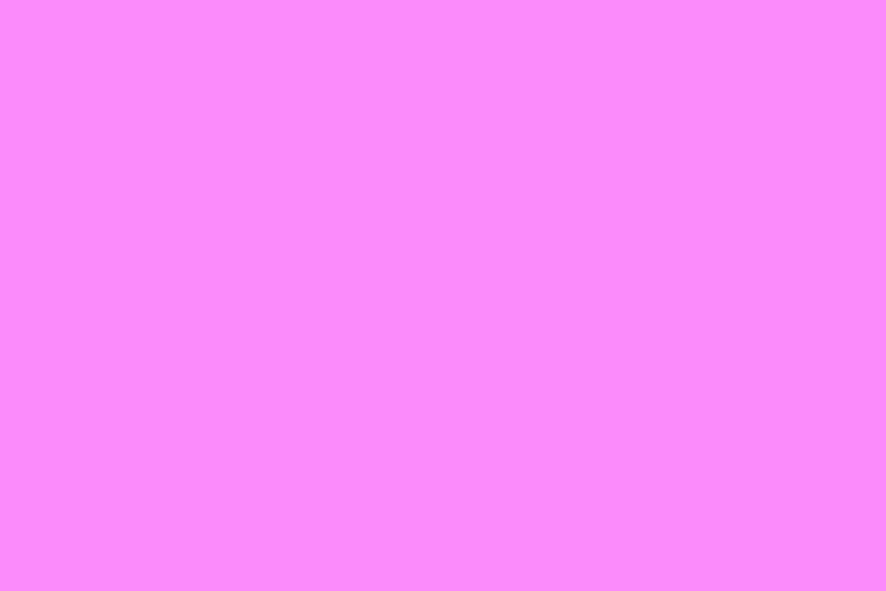}
		\quad
		\includegraphics[width=5.2cm]{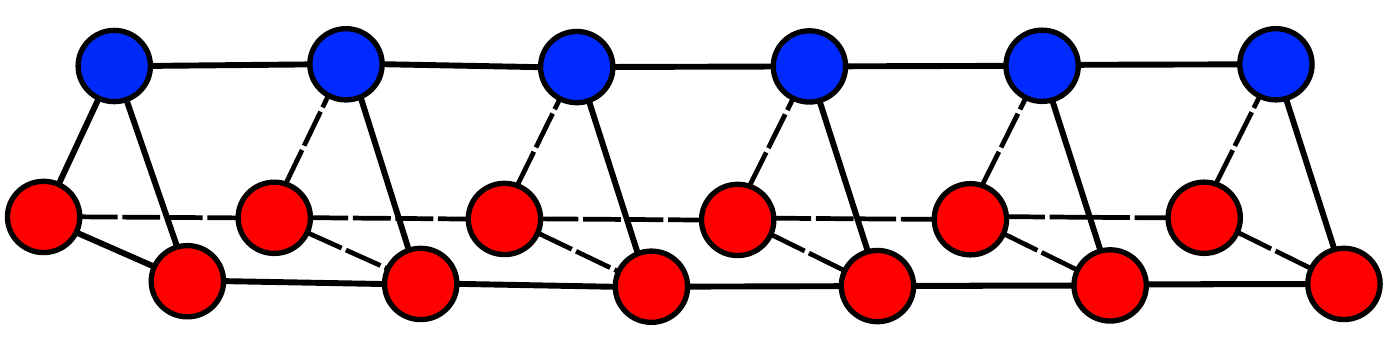}
		\quad
		\includegraphics[width=2.4cm]{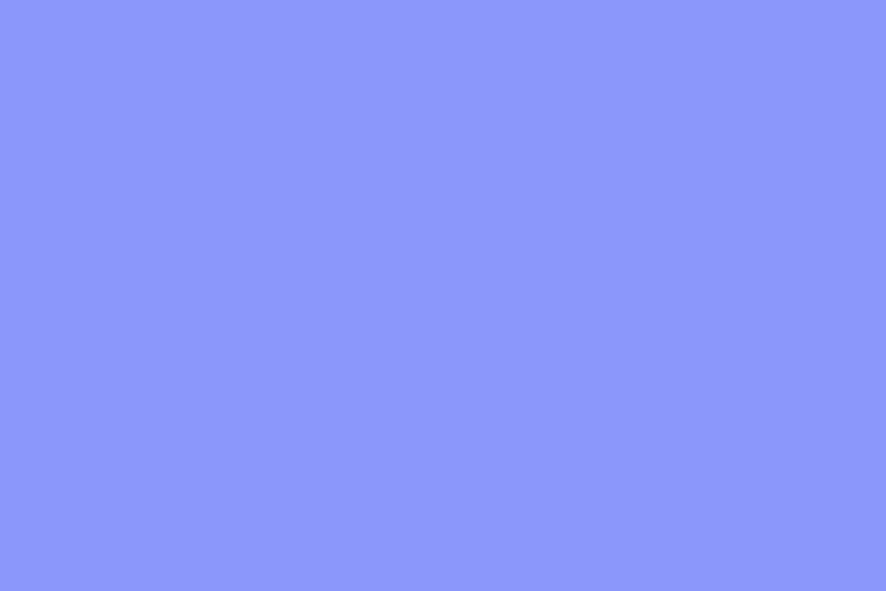}
		\quad
		\includegraphics[width=5.2cm]{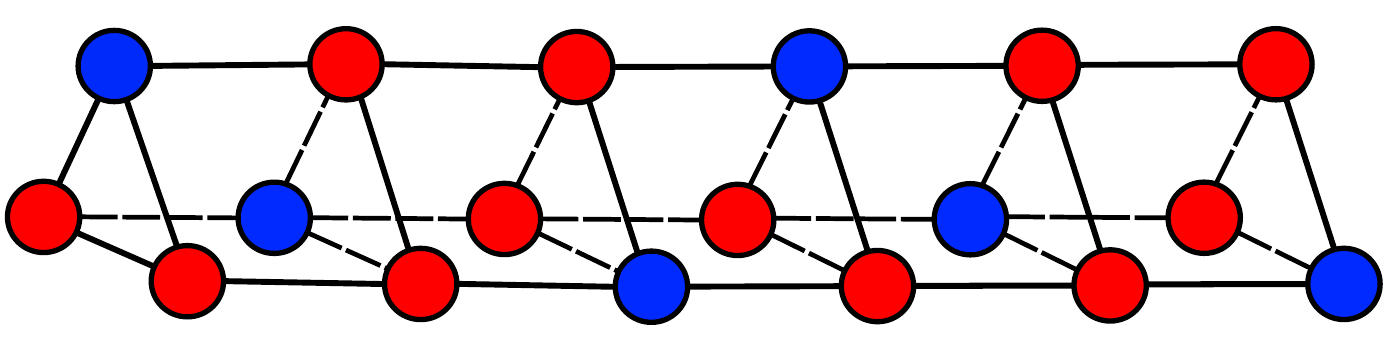}
		\includegraphics[width=2.4cm]{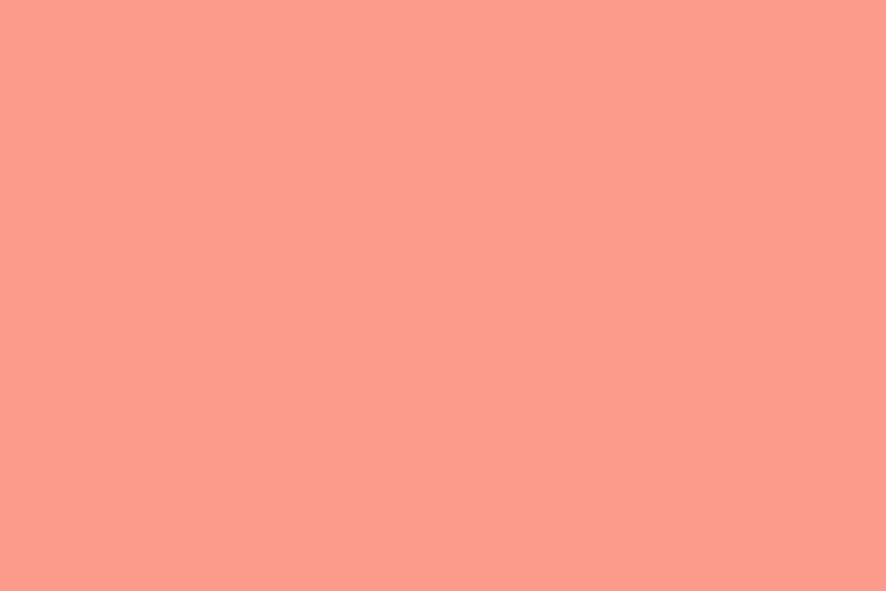}
		\quad
		\includegraphics[width=5.2cm]{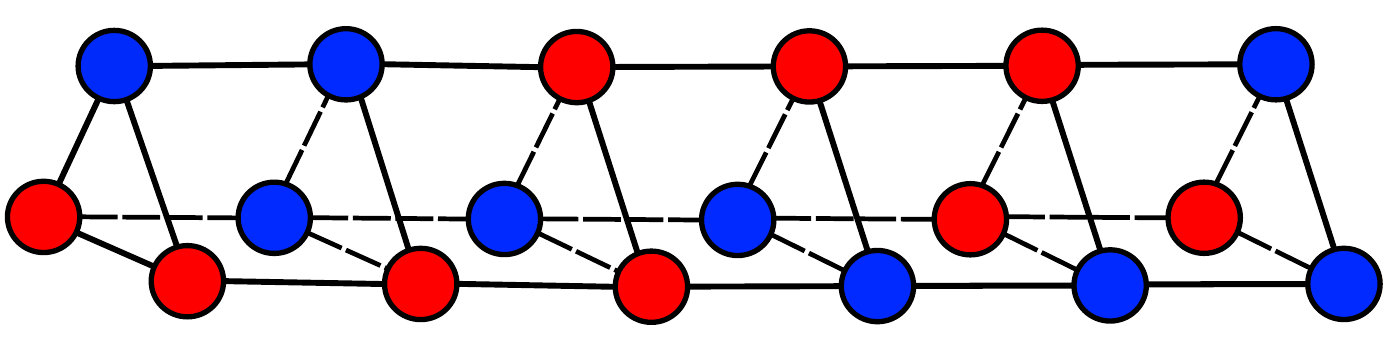}
		\quad
		\includegraphics[width=2.4cm]{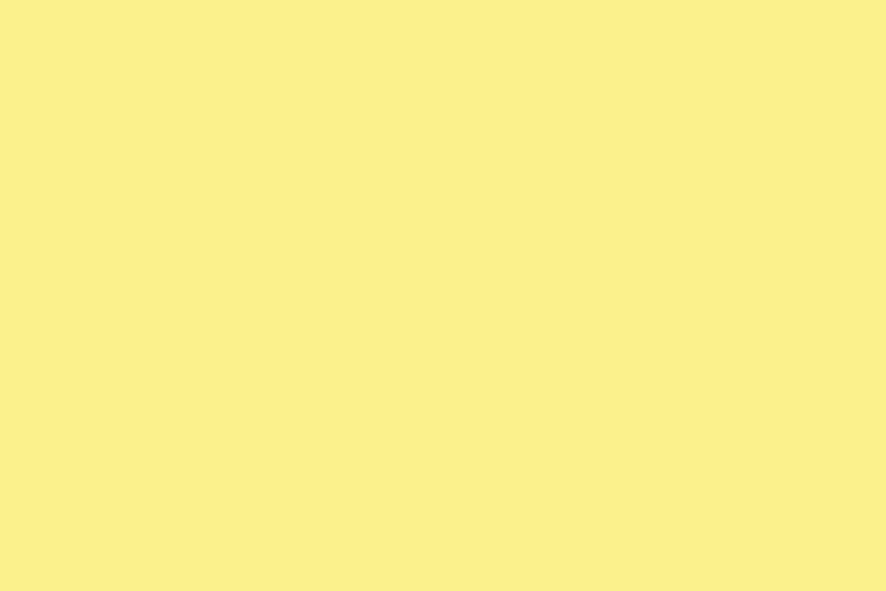}
		\quad
		\includegraphics[width=5.1cm]{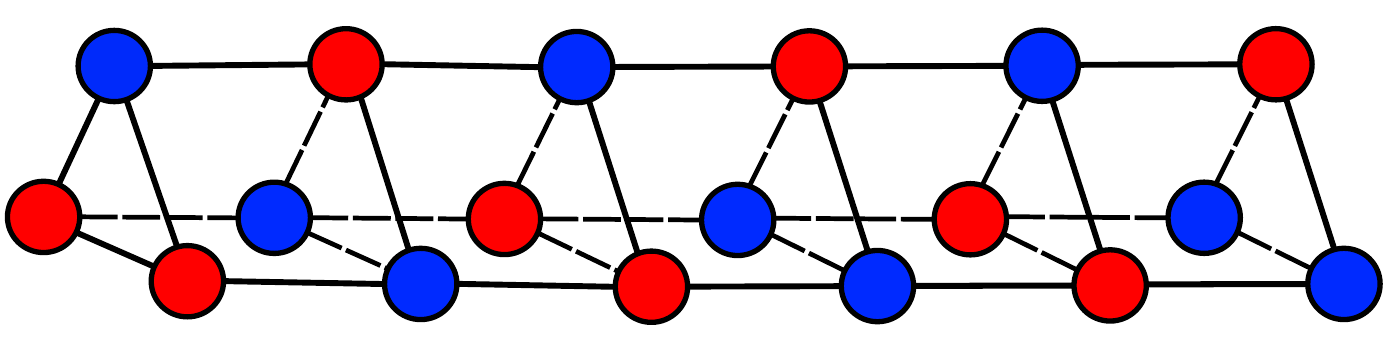}
		\caption{\label{figure6} Color codes and one representative spin configuration for each ground state shown in Figs.~\ref{figure2}--\ref{figure5}.}
	\end{figure*}

	\begin{table}[!htbp]\centering
		\caption{\label{table1}  Coordinates of the boundary points of the ground-state regions shown in Figs.~\ref{figure2} and~\ref{figure4}.}
		\begin{ruledtabular}
			\begin{tabular}{cccc}
				$A$ & Coordinates ($\frac{J_1}{J},~\frac{J_3}{J},~\frac{J_5}{J}$) & $B$ & Coordinates $\frac{J_1}{J},~\frac{J_2}{J},~\frac{J_5}{J}$\\
				\colrule
				$A_1$& (15,15,$-15$) & $B_1$ & (15,15,$-15$) \\
				$A_2$& (15,$\frac{5}{2}$,$-15$) & $B_2$ & (15,4,$-15$) \\
				$A_3$& (2,9,$-15$) & $B_3$ & (4,15,$-15$) \\
				$A_4$& ($-4,15,-15$) & $B_4$ & (2,15,$-15$) \\
				$A_5$& ($-15,15,-15$) & $B_5$ & ($-15,15,-15$) \\
				$A_6$& ($-15,15,-4$) & $B_6$ & ($-15$,15,$-6$) \\
				$A_7$& ($-15$,15,15) & $B_7$ & ($-15$,15,15) \\
				$A_8$& (15,15,15) & $B_8$ & ($-6$,15,15) \\
				$A_9$& (15,$-\frac{5}{3}$,15) & $B_9$ & ($-6,15,-6$) \\
				$A_{10}$& ($-\frac{2}{3}$,$-\frac{5}{3}$,15) & $B_{10}$ & (2,15,$-14$) \\
				$A_{11}$& ($-6$,1,15) & $B_{11}$ & (2,1,$-7$) \\
				$A_{12}$& ($-15$,1,15) & $B_{12}$ & (2,1,$-15$) \\
				$A_{13}$& ($-15$,10,15) & $B_{13}$ & ($-15,1,-15$) \\
				$A_{14}$& ($-15$,10,1) & $B_{14}$ & ($-15$,1,1) \\
				$A_{15}$& ($-15$,1,1) & $B_{15}$ & ($-6$,1,1) \\
				$A_{16}$& ($-15$,1,$-15$) & $B_{16}$ & ($-6$,1,15) \\
				$A_{17}$& (2,1,$-15$) & $B_{17}$ & ($-15$,1,15) \\
				$A_{18}$& (15,1,$-15$) & $B_{18}$ & ($-15,-15$,15) \\
				$A_{19}$& (15,1,$-\frac{27}{2}$) & $B_{19}$ & (2,$-15$,15) \\
				$A_{20}$& (2,1,$-7$) & $B_{20}$ & (2,$-\frac{13}{3}$,15) \\
				$A_{21}$& ($-6$,1,1) & $B_{21}$ & (2,$-\frac{13}{3}$,$-\frac{5}{3}$) \\
				$A_{22}$& ($-\frac{2}{3}$,$-\frac{5}{3}$,1) & $B_{22}$ & (2,$-15$,$-7$) \\
				$A_{23}$& (2,$-\frac{5}{3}$,$-\frac{5}{3}$) & $B_{23}$ & ($-15,-15$,$-7$) \\
				$A_{24}$& (15,$-\frac{5}{3}$,$-\frac{49}{6}$) & $B_{24}$ & ($-15,-15,-15$) \\
				$A_{25}$& (15,$-\frac{17}{2},-15$) & $B_{25}$ & (2,$-15,-15$) \\
				$A_{26}$& (15,$-15,-15$) & $B_{26}$ & (15,$-15,-15$) \\
				$A_{27}$& (2,$-15,-15$) & $B_{27}$ & (15,1,$-15$) \\
				$A_{28}$& ($-15,-15,-15$) & $B_{28}$ & (15,1,$-\frac{27}{2}$) \\
				$A_{29}$& ($-14,-15$,1) & $B_{29}$ & (15,$-\frac{13}{3}$,$-\frac{49}{6}$) \\
				$A_{30}$& ($-15,-15$,1) & $B_{30}$ & (15,$-15$,$-\frac{27}{2}$) \\
				$A_{31}$& ($-15,-15$,15) & $B_{31}$ & (15,$-15$,15) \\
				$A_{32}$& ($-14,-15$,15) & $B_{32}$ & (15,$-\frac{13}{3}$,15) \\
				$A_{33}$& ($15,-15$,15) & $B_{33}$ & (15,15,15)
			\end{tabular}
		\end{ruledtabular}
	\end{table}

	Note that all model configurations belonging to the same class \cite{Kashapov} as the configurations shown in Fig.~\ref{figure6} also correspond to the ground states marked with the corresponding color.

	Moreover, in the thermodynamic limit, the ground states associated with the transfer-matrix elements $b$, $c$, $f$, and $g$ are infinitely degenerate, as follows from the structure of the transfer matrix.

	The boundary points of the ground-state regions are of particular interest, because the behavior of the correlation length may change both near these points and exactly at them, depending on the temperature. The inverse correlation length can be written as \cite{Yokota}
	\begin{equation} \label{76}
		\xi^{-1} = \ln\biggl(\min_{\lambda_q \neq \lambda_{\max}} \bigg|\frac{\lambda_{\max}}{\lambda_q}\bigg|\biggr).
	\end{equation}

	We now consider several representative slices of the parameter space. As an example, plots of the inverse correlation length in the low-temperature region near selected boundary points, as well as directly at the boundary points themselves, are shown in Fig.~\ref{figure7} for the regions presented in Figs.~\ref{figure2} and~\ref{figure4}. The figures demonstrate that at the boundaries of the ground-state regions, the inverse correlation length does not approach zero in the limit $T/J \to 0$.

	\begin{figure*}[!htbp]
		\begin{minipage}{\linewidth}
			\begin{minipage}{0.32\linewidth}
				\includegraphics[width=6.2cm]{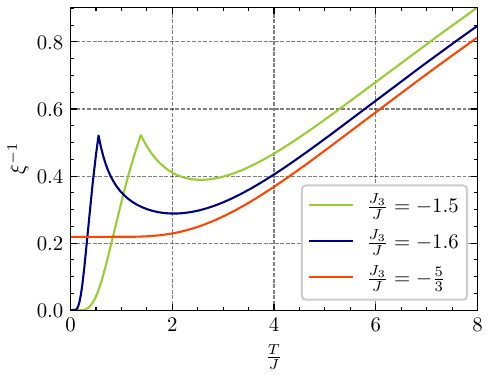} \\ (a)
			\end{minipage}
			\hfill
			\begin{minipage}{0.32\linewidth}
				\includegraphics[width=6.2cm]{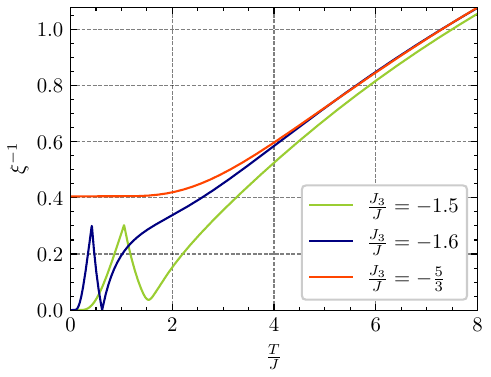} \\ (b)
			\end{minipage}
			\hfill
			\begin{minipage}{0.32\linewidth}
				\includegraphics[width=6.2cm]{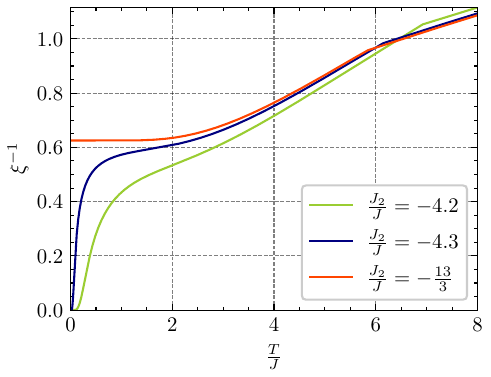} \\ (c)
			\end{minipage}
			\caption{\label{figure7} Inverse correlation length in the low-temperature regime for the following parameter choices: \\
				(a): $\frac{J_1}{J} = 2, \frac{J_5}{J} = -\frac{5}{3}$, $J_2=J_4=J_6=J_7=J$ and $\frac{J_3}{J} = -1.5, -1.6, -\frac{5}{3}$, \\
				(b): $\frac{J_1}{J} = -\frac{2}{3}, \frac{J_5}{J} = 1$,  $J_2=J_4=J_6=J_7=J$ and $\frac{J_3}{J} = -1.5, -1.6, -\frac{5}{3}$, \\
				(c): $\frac{J_1}{J} = 2, \frac{J_5}{J} = -\frac{5}{3}$, $J_3=J_4=J_6=J_7=J$ and $\frac{J_2}{J} = -4.2, -4.3, -\frac{13}{3}$.}
		\end{minipage}
	\end{figure*}
	
	\newpage
	\bibliography{bibliography}
\end{document}